\newcommand{\drpullback}[1][dr]{\save*!/#1-1.2pc/#1:(-1,1)@^{|-}\restore}
\newcommand{\name}[1]{\ulcorner #1 \urcorner}
\newcommand{\Bij}{\kat{Bij}}
\newcommand{\Grpd}{\kat{Grpd}}
\newcommand{\TEmb}{\kat{TEmb}}
\newcommand{\id}{\operatorname{id}}
\newcommand{\Map}{\operatorname{Map}}
\providecommand{\norm}[1]{\left| {#1}\right|}
\newcommand{\comma}{\raisebox{1pt}{$\downarrow$}}
\newcommand{\tree}{\mathbf{T}}
\newcommand{\forest}{\mathbf{F}}
\def\xxgraph{\mathbf G}
\def\nest{\mathbf N}
\newcommand{\brok}[2]{{\textstyle{\frac{#1}{#2}}}}
\renewcommand{\texttt}[1]{{\fontfamily{pcr}\fontseries{m}\fontshape{n}%
\selectfont #1}}
\providecommand{\lastUpdate}[1]{#1}
\newcommand{\HH}{\mathscr{H}}
\newcommand{\CK}{\mathscr{H}_{\mathrm{CK}}{}}
\newcommand{\BB}{\mathscr{B}}
\newcommand{\CC}{\mathscr{C}}
\newcommand{\B}{\mathbb{B}}
\newcommand{\N}{\mathbb{N}}
\newcommand{\C}{\mathbb{C}}
\newcommand{\Q}{\mathbb{Q}}
\newcommand{\isopil}{\stackrel{\raisebox{0.1ex}[0ex][0ex]{\(\sim\)}}%
			{\raisebox{-0.15ex}[0.28ex]{\(\rightarrow\)}}}
\newcommand{\isopilback}{\stackrel{\raisebox{0.1ex}[0ex][0ex]{\(\sim\)}}%
			{\raisebox{-0.15ex}[0.28ex]{\(\leftarrow\)}}}
\newcommand{\upperstar}{^{\raisebox{-0.25ex}[0ex][0ex]{\(\ast\)}}}
\newcommand{\lowerstar}{_{\raisebox{-0.33ex}[-0.5ex][0ex]{\(\ast\)}}}
\newcommand{\lowershriek}{_!}
\newcommand{\tensor}	{\otimes}
\newcommand{\Id}{\operatorname{Id}}
\newcommand{\Aut}{\operatorname{Aut}}
\newcommand{\res}{\operatorname{res}}
\newcommand{\DD}{\mathscr{D}}
\providecommand{\kat}[1]{\text{\textbf{\textsl{#1}}}}
\newcommand{\Set}{\kat{Set}}
\newcommand{\alg}{\textbf{-}\mathbf{alg}}
\newcommand{\Z}{\mathbb{Z}}
\newcommand{\qed}{\hspace*{\fill}\nolinebreak$\Box$}
\renewcommand*\l@section[2]{%
  \ifnum \c@tocdepth >\z@
    \addpenalty\@secpenalty
    \addvspace{0.4em \@plus\p@}%  THIS WAS ORIGINALLY 1.0em
    \setlength\@tempdima{1.5em}%
    \begingroup
      \parindent \z@ \rightskip \@pnumwidth
      \parfillskip -\@pnumwidth
      \leavevmode \bfseries
      \advance\leftskip\@tempdima
      \hskip -\leftskip
      #1\nobreak\hfil \nobreak\hb@xt@\@pnumwidth{\hss #2}\par
    \endgroup
  \fi}
\renewcommand{\tableofcontents}{%
   \begin{center}
\begin{minipage}{12cm}
   \begin{center}
     
     \vspace{4pt}
     
       \bf{\contentsname}
   \end{center}
	
       \vspace{-18pt}
	
   \footnotesize
   \begin{center}
\@starttoc{toc}
   \end{center}	
\end{minipage}
	\end{center}
	\addvspace{2em \@plus\p@}
}
\theoremstyle{change}
\newtheorem{lemma}{Lemma.}[section]
\newtheorem{prop}[lemma]{Proposition.}
\newtheorem{theorem}[lemma]{Theorem.}
\newtheorem{cor}[lemma]{Corollary.}
\newtheorem{taller}[lemma]{$\!\!$}
\newenvironment{blanko}[1]%
{\begin{taller}{\normalfont\bfseries #1}\normalfont}%
{\end{taller}}
\providecommand{\qed}{\hspace*{\fill}$\Box$}
\newenvironment{proof}{%
\begin{list}{\em Proof. }%
{\setlength{\labelsep}{0mm}\setlength{\leftmargin}{0mm}%
\setlength{\labelwidth}{0mm}\setlength{\listparindent}{\parindent}%
\setlength{\parsep}{\parskip}\setlength{\partopsep}{0mm}}%
\item}{\qed\end{list}}
\newenvironment{proof*}[1]{%
\begin{list}{\em #1 }%
{\setlength{\labelsep}{0mm}\setlength{\leftmargin}{0mm}%
\setlength{\labelwidth}{0mm}\setlength{\listparindent}{\parindent}%
\setlength{\parsep}{\parskip}\setlength{\partopsep}{0mm}}%
\item}{\qed\end{list}}
\newcommand{\overskrift}[1]{\par\noindent\relax{\LARGE #1}\par\bigskip}
\newcommand{\onedot}{
  \bsegment
    \move (0 0) \fcir f:0 r:2
  \esegment
}
\newcommand{\Onedot}{
  \bsegment
    \move (0 0) \fcir f:0 r:2.5
  \esegment
}
  \newcommand{\inlineDotlessTree}{%
\raisebox{-4pt}{
\begin{texdraw} \linewd 0.5 \bsegment
    \move (0 0) \lvec (0 15) \move (1 0)
  \esegment \end{texdraw} } }
\newcommand{\trekant}{
\bsegment \setunitscale{0.8}
\move (1 0) \lvec (6 0) \smalldot \lvec (16 10)
\move (6 0) \lvec (16 -10)
\move (12 6) \smalldot  \lvec (12 -6) \smalldot
\esegment
}
\newcommand{\trevert}{
\bsegment
\move (3 0) \lvec (6 0) \lvec (9 3) \move (6 0) \lvec (9 -3)
\esegment
}
\newcommand{\tovert}{
\bsegment %\setunitscale{0.8}
\move (1 0) \lvec (6 0) \smalldot \lvec (11 0)
\esegment
}
\newcommand{\hhgraph}{
  \bsegment
	\linewd 1
      \move (-30 0) \lvec (-15 0) \Onedot
      \move (30 0) \lvec (15 0) \Onedot      
      \move (0 0) \lcir r:15
      \move (-5 14) \Onedot
      \move (5 14) \Onedot
      \move (0 14) \larc r:5 sd:180 ed:360
      \move (-5 -14) \Onedot
      \move (5 -14) \Onedot
      \move (0 -14) \larc r:5 sd:00 ed:180
  \esegment
}
\newcommand{\redoval}{
  \bsegment
  	\linewd 1
        \red
      \lpatt (1 2)
      \move (0 0) \freeEllipsis{12}{7}{0}
\esegment
}
\newcommand{\bigredoval}{
  \bsegment
  	\linewd 1
        \red
      \lpatt (1 2)
      \move (0 0) \freeEllipsis{23}{28}{0}
\esegment
}
\newcommand{\vgraph}{
  \bsegment
	\linewd 1
      \move (-30 0) \lvec (-15 0) \Onedot
      \move (30 0) \lvec (15 0) \Onedot      
      \move (0 0) \lcir r:15
      \move (0 15) \Onedot \lvec (0 -15) \Onedot
  \esegment
}
\newcommand{\redvoval}{
  \bsegment
  	\linewd 1
        \red
      \lpatt (1 2)
      \move (0 0) \freeEllipsis{14}{23}{0}
\esegment
}
\newcommand{\edge}{
\bsegment 
  \rlvec (0 10) \onedot 
  \savepos (0 10)(*ex *ey)
\esegment
\move (*ex *ey)
}
\newcommand{\edgesV}{
\bsegment
  \move (0 0)
  \rlvec (-4 10) \onedot
  \move (0 0)
  \rlvec (4 10) \onedot
  \savepos (4 10)(*ex *ey)
\esegment
\move (*ex *ey)
}
\newcommand{\edgesW}{
\bsegment
  \move (0 0)
  \rlvec (-6 10) \onedot
  \move (0 0)
  \rlvec (0 10) \onedot
  \move (0 0)
  \rlvec (6 10) \onedot
  \savepos (6 10)(*ex *ey)
\esegment
\move (*ex *ey)
}
\newcommand{\ctreeone}{
\begin{texdraw}
  \move (0 0) \onedot
\end{texdraw}
}
\newcommand{\ctreetwo}{
\begin{texdraw}
  \move (0 0) \onedot
  \edge
\end{texdraw}
}
\newcommand{\ctreethreeV}{
\begin{texdraw}
  \move (0 0) \onedot
  \edgesV
\end{texdraw}
}
\newcommand{\ctreethreeL}{
\begin{texdraw}
  \move (0 0) \onedot
  \edge \edge
\end{texdraw}
}
\newcommand{\ctreefourL}{
\begin{texdraw}
  \move (0 0) \onedot
  \edge \edge \edge
\end{texdraw}
}
\newcommand{\ctreefourY}{
\begin{texdraw}
  \move (0 0) \onedot
  \edge \edgesV
\end{texdraw}
}
\newcommand{\ctreefourVL}{
\begin{texdraw}
  \move (0 0) \onedot
  \edgesV \move (-4 10) \edge
\end{texdraw}
}
\newcommand{\ctreefourVR}{
\begin{texdraw}
  \move (0 0) \onedot
  \edgesV \edge
\end{texdraw}
}
\newcommand{\ctreefourW}{
\begin{texdraw}
  \move (0 0) \onedot
  \edgesW
\end{texdraw}
}
\newcommand{\ctreefiveL}{
\begin{texdraw}
  \move (0 0) \onedot
  \edge \edge \edge \edge
\end{texdraw}
}
\newcommand{\ctreefiveIY}{
\begin{texdraw}
  \move (0 0) \onedot
  \edge \edge \edgesV
\end{texdraw}
}
\newcommand{\ctreefiveVII}{
\begin{texdraw}
  \move (0 0) \onedot
  \edgesV \edge \move (-4 10) \edge
\end{texdraw}
}
\newcommand{\ctreefiveYL}{
\begin{texdraw}
  \move (0 0) \onedot
  \edge \edgesV \move (-4 20) \edge
\end{texdraw}
}
\newcommand{\ctreefiveYR}{
\begin{texdraw}
  \move (0 0) \onedot
  \edge \edgesV \edge
\end{texdraw}
}
\newcommand{\ctreefiveVLL}{
\begin{texdraw}
  \move (0 0) \onedot
  \edgesV \move (-4 10) \edge \edge
\end{texdraw}
}
\newcommand{\ctreefiveVRR}{
\begin{texdraw}
  \move (0 0) \onedot
  \edgesV \edge \edge
\end{texdraw}
}
\newcommand{\ctreefiveVlV}{
\begin{texdraw}
  \move (0 0) \onedot
  \edgesV \move (-4 10) \edgesV
\end{texdraw}
}
\newcommand{\ctreefiveVrV}{
\begin{texdraw}
  \move (0 0) \onedot
  \edgesV \edgesV
\end{texdraw}
}
\newcommand{\ctreefiveIW}{
\begin{texdraw}
  \move (0 0) \onedot
  \edge \edgesW 
\end{texdraw}
}
\newcommand{\ctreefiveWL}{
\begin{texdraw}
  \move (0 0) \onedot
  \edgesW 
  \move (-6 10) \edge
\end{texdraw}
}
\newcommand{\ctreefivewide}{
\begin{texdraw}
  \move (0 0) 
  \bsegment
  \move (0 0) \onedot
  \rlvec (-10 10) \onedot
  \move (0 0)
  \rlvec (-3.2 10) \onedot
  \move (0 0)
  \rlvec (3.2 10) \onedot
  \move (0 0)
  \rlvec (10 10) \onedot
%   \savepos (6 10)(*ex *ey)
  \esegment
\end{texdraw}
}
\newcommand{\smalldot}{
  \bsegment
    \move (0 0) \fcir f:0 r:1.5
  \esegment
}
\newcommand{\tinydot}{
  \bsegment
    \move (0 0) \fcir f:0 r:1.1
  \esegment
}
\newcommand{\twotree}{
\bsegment 
\move (0 -4) \lvec (0 0) \tinydot
\lvec (-2 5) \tinydot
\lvec (-4 10)
\move (-2 5) \lvec (0 10)
\move (0 0) \lvec (3 10)
\esegment
}
\newcommand{\onetree}{
\bsegment 
\move (0 -4) \lvec (0 0) \tinydot
\lvec (-2 5) 
\move (0 0) \lvec (2 5)
\esegment
}
\newcommand{\zerotree}{
\bsegment 
\move (0 -4) \lvec (0 4)
\esegment
}
\newcommand{\twonodetree}{
\bsegment 
\move (0 -4) \smalldot \lvec (0 4) \smalldot
\esegment
}
\newcommand{\vtree}{
\bsegment 
\move (0 -4) \smalldot \lvec (-3 4) \smalldot
\move (0 -4) \lvec (3 4) \smalldot
\esegment
}
\renewcommand{\ps@headings}
	{\setlength{\headheight}{13pt}%
	 \setlength{\headsep}{12pt}%
	 \renewcommand{\@oddhead}{\parbox{\textwidth}{%
			\footnotesize
			\texttt{\jobname.tex \ \ \lastUpdate{2017-02-01 21:21} \ \ \ 
			\hfill [\thepage/\pageref{lastpage}]}
			\\ \rule[8pt]{\textwidth}{0.3pt}}%
	 }
	\renewcommand{\@oddfoot}{}
	\renewcommand{\@evenfoot}{}%
}
\newcommand{\red}{\writeps{1 0 0 setrgbcolor}}
\newcommand{\tokant}{
\bsegment \setunitscale{0.8}

\move (0 -10) \lvec (0 10)
      \move (0 5) \smalldot
      \move (0 -5) \smalldot
      \move (0 0) \larc r:5 sd:-90 ed:90

\esegment
}
\newcommand{\rundtokant}{
\bsegment \setunitscale{0.8}
\move (2 0) \lvec (6 0) \smalldot \move (15 0) \smalldot \lvec (19 0)
\move (10.5 0) \lcir r:4.25
\esegment
}
\newcommand{\freeEllipsis}[3]{
    \writeps {
      gsave 
      #3 rotate 
    }
    \lellip rx:#1 ry:#2
    \writeps { 
      grestore
    }
}
\begin{document}

\pagestyle{headings}

\vspace*{15pt}

\begin{center}
  
\overskrift{Polynomial functors and combinatorial Dyson--Schwinger equations}

\bigskip

\textsc{Joachim Kock}\footnote{
Departament de Matem\`atiques,
Universitat Aut\`onoma de
Barcelona,
08193 Bellaterra,
Spain}
\end{center}

\begin{abstract}\footnotesize
  We present a general abstract framework for combinatorial Dyson--Schwinger
  equations, in which combinatorial identities are lifted to explicit bijections
  of sets, and more generally equivalences of groupoids.  Key features of
  combinatorial Dyson--Schwinger equations are revealed to follow from general
  categorical constructions and universal properties.  Rather than beginning with
  an equation inside a given Hopf algebra and referring to given Hochschild
  $1$-cocycles, our starting point is an abstract fixpoint equation in
  groupoids, shown canonically to generate all the algebraic structure.
  Precisely, for any finitary polynomial endofunctor $P$ defined over groupoids,
  the system of combinatorial Dyson--Schwinger equations $X=1+P(X)$ has a
  universal solution, namely the groupoid of $P$-trees.  The isoclasses of
  $P$-trees generate naturally a Connes--Kreimer-like bialgebra, in which the
  abstract Dyson--Schwinger equation can be internalised in terms of canonical
  $B_+$-operators.  The solution to this equation is a series (the Green
  function) which always enjoys a Fa\`a di Bruno formula, and hence generates a
  sub-bialgebra isomorphic to the Fa\`a di Bruno bialgebra.  Varying $P$ yields
  different bialgebras, and cartesian natural transformations between various
  $P$ yield bialgebra homomorphisms and sub-bialgebras, corresponding for
  example to truncation of Dyson--Schwinger equations.  Finally, all
  constructions can be pushed inside the classical Connes--Kreimer Hopf algebra
  of trees by the operation of taking core of $P$-trees.  A byproduct of the
  theory is an interpretation of combinatorial Green functions as inductive
  data types in the sense of Martin-L\"of Type Theory (expounded elsewhere).
\end{abstract}

% \noindent {\bf Keywords:} Rooted trees, Hopf algebras, polynomial functors, 
% combinatorial Dyson--Schwinger equations, .
% 
% \noindent {\bf Mathematical subject classification:} 
% 05C05; 16T05; 18A99.

% 18A99 General theory of categories and functors -- None of the above, but in this section
% 05C05    	Trees
% 16T05    	Hopf algebras and their applications [See also 16S40, 57T05]
% 16B50    	Category-theoretic methods and results (except as in 16D90) [See also 18-XX]

% 16T10 Bialgebras
% 81T15 pertubative
% 81T18 Feynman diagrams	
% 18D50 Operads

\footnotesize

\tableofcontents

\normalsize

\setcounter{section}{-1}
%%%%%%%%%%%%%%%%%%%%%%%%%%%%%%%%%%%%%%%%%%%%%%%%%%
\section{Introduction}
%%%%%%%%%%%%%%%%%%%%%%%%%%%%%%%%%%%%%%%%%%%%%%%%%%

The Dyson--Schwinger equations are infinite hierarchies of integral equations
relating different Green functions of a given quantum field theory.  They are
often referred to as quantum equations of motion, as they can be expressed, in
the spirit of Schwinger's 1951 paper~\cite{Schwinger:1951}, in terms of
functional integrals and a least action principle --- a quantum version of the
Euler--Lagrange equations.  A standard reference for Schwinger's approach
is Itzykson--Zuber~\cite{Itzykson-Zuber:QFT}.

A more intuitive approach is in terms of skeleton expansions, in the spirit of
Dyson's 1949 paper~\cite{Dyson:S-matrix}, highlighting the recursive nature of
Feynman diagrams.  In this approach, for which we refer to
Bjorken--Drell~\cite{Bjorken-Drell:fields}, 
the equations for the 1PI Green functions are often
rendered graphically, such as
  \begin{center}\begin{texdraw}
    \move (0 0) 
    \bsegment
    \lvec (12 0) \rlvec (8 8) \move (12 0) \rlvec (8 -8)
    \move (12 0) \fcir f:0.8  r:3  \lcir r:3
    \esegment
    
    \htext (32 0) {$=$}
    
       \move (40 0) \bsegment
       \move (2 0)
       \lvec (12 0) \rlvec (8 8) \move (12 0) \rlvec (8 -8) 
       \esegment
       
        \htext (70 0) {$+$}
   
	\move (82 0)
	\bsegment
       \move (0 0)
       \lvec (12 0) \rlvec (18 18) \move (12 0) \rlvec (18 -18) 
       \move (22 10) \lvec (22 -10)
       \move (12 0) \fcir f:0.8  r:3  \lcir r:3
       \move (22 10) \fcir f:0.8  r:3  \lcir r:3
       \move (22 -10) \fcir f:0.8  r:3  \lcir r:3
       \esegment
       
        \htext (126 0) {$+ \ \ \frac{1}{2!}$}
   
	\move (140 0)
	\bsegment
       \move (0 0)
       \lvec (12 0) \rlvec (28 28) \move (12 0) \rlvec (28 -28) 
       \move (22 10) \lvec (32 -20)
       \move (27 0) \fcir f:1 r:3
       \move (22 -10) \lvec (32 20)
       \move (12 0) \fcir f:0.8  r:3  \lcir r:3
       \move (22 10) \fcir f:0.8  r:3  \lcir r:3
       \move (22 -10) \fcir f:0.8  r:3  \lcir r:3
       \move (32 20) \fcir f:0.8  r:3  \lcir r:3
       \move (32 -20) \fcir f:0.8  r:3  \lcir r:3
       \esegment
       
            \htext (208 0) {$+ \quad \cdots$}
  \end{texdraw}
  \end{center}
with one term for each primitive graph (skeleton).
Each term is regarded as a shorthand for an infinite sum of Feynman integrals
with a common kernel.

For a long period of time, the combinatorial aspect of QFT was characteristic
for {\em perturbative} QFT, and is intimately related with Renormalisation
Theory.  From Feynman via Bogoliubov, Parasiuk, Hepp, and Zimmermann, a
culmination was achieved in the work of Kreimer~\cite{Kreimer:9707029} and his
collaborators around the turn of the millennium, when this combinatorics was
distilled into clear-cut algebraic structures with numerous connections to many
fields of Mathematics~\cite{Connes-Kreimer:9808042},
\cite{Connes-Kreimer:9912092}, \cite{EbrahimiFard-Guo-Manchon:0602004}.
Specifically, Kreimer~\cite{Kreimer:9707029} discovered that the combinatorics
of perturbative renormalisation is encoded in a Hopf algebra of trees, now
called the Connes--Kreimer Hopf algebra.

However, it soon became clear (see for example \cite{Kreimer:0306020},
\cite{Bergbauer-Kreimer:0506190}, \cite{Kreimer:0509135},
\cite{Kreimer-Yeats:0605096}, \cite{vanSuijlekom:0807}) that this combinatorial
and algebraic insight is also valuable in the {\em non-perturbative} regime,
when combined with the renormalisation group.  Abstracting away the Feynman
rules from the Dyson--Schwinger equations, Kreimer~\cite{Kreimer:0306020} and
Bergbauer--Kreimer~\cite{Bergbauer-Kreimer:0506190} initiated the study of the
{\em combinatorial Dyson--Schwinger equations}, which are the main object of the
present contribution. 
To formulate the combinatorial Dyson--Schwinger equations,
Bergbauer--Kreimer required an ambient combinatorial Hopf algebra (typically the
Connes--Kreimer Hopf algebra), equipped with certain Hochschild
$1$-cocycles, as reviewed below.  While these Hopf algebras of graphs or trees
belong to the perturbative regime, they were shown to contain smaller Hopf
algebras spanned by the solutions to the combinatorial Dyson--Schwinger
equations (cf.~Theorem~\ref{thm:BK} below), which can be given also a 
non-perturbative meaning (see~\cite{Kreimer:0609004}).

\bigskip

The present contribution takes a further abstraction step, showing that also the
algebra can be dispensed with, so that only pure combinatorics is left,
accentuating the patent recursive aspect of the equations, which is here related
to fundamental constructions in Category Theory.  One may say that this is the
natural setting for the combinatorial Dyson--Schwinger equations, in the sense
that their essential intuitive content is modelled directly, without reference
to Hopf algebras or Hochschild cocycles.  It is shown that in fact the rich
algebraic structures can be {\em derived} from the abstract equation.

In this work, instead of considering combinatorial Dyson--Schwinger
equations inside preexisting combinatorial Hopf algebras, we start with an
abstract polynomial fixpoint equation formulated in the category of groupoids.
Each such equation, specified in terms of a finitary polynomial endofunctor $P$,
has a universal solution, namely the groupoid of $P$-trees (the homotopy initial
$(1+P)$-algebra).  This solution canonically {\em defines} a combinatorial
bialgebra $\BB_P$, by a slight variation of the Connes--Kreimer construction
\cite{Kock:1109.5785}, which becomes a canonical, almost tautological, home for
the further algebraic structures: again by general principles this bialgebra
contains a series, the Green function of all the generators of the bialgebra,
weighted by their symmetry factors, which is the solution to a version of the
Dyson--Schwinger equation internalised to $\BB_P$, formulated in terms of
canonical $B_+$-operators easily extracted from the groupoid equivalence
expressing the solution.
Furthermore, by a general result of \cite{GalvezCarrillo-Kock-Tonks:1207.6404}, 
the homogeneous pieces of this
series always generates a sub-bialgebra isomorphic to the Fa\`a di Bruno
bialgebra.  

Different choices of $P$ yield different bialgebras, and they are all interconnected
by bialgebra homomorphisms induced by cartesian natural transformations between the
polynomial endofunctors, and each containing a canonical copy of
the Fa\`a di Bruno bialgebra.  A special case of this is the case of cartesian
subfunctors, which correspond to {\em truncation} of Dyson--Schwinger equations,
a topic of high current interest, especially in Quantum Chromodynamics~\cite{Roberts:1203.5341}.
Finally, every such bialgebra of $P$-trees comes with a
canonical bialgebra homomorphism to the Connes--Kreimer Hopf algebra, which
therefore also receives a plethora of different Fa\`a di Bruno sub-bialgebras.

\bigskip

It is important to note that the trees arising naturally from the abstract 
equations are {\em
operadic} trees (i.e.~trees with open-ended leaves and root) rather than the
combinatorial trees commonly used in the literature.  
With combinatorial trees,
the only grading is by the number of nodes, but the natural grading for the
Dyson--Schwinger equations and the Fa\`a di Bruno formula is actually the {\em
operadic grading}, which is by the number of leaves (minus the number of roots),
a grading that cannot be seen at the level of combinatorial trees. 
When pushed into the Connes--Kreimer Hopf algebra of trees, the Fa\`a di Bruno
bialgebras become Hopf algebras, but they are not generated by homogeneous
elements for the node grading, in contrast to the infinite families of Fa\`a di 
Bruno Hopf sub-algebras constructed by Foissy~\cite{Foissy:0707.1204}.

The presence of leaves is key to the simplicity of the derivation of the
results, an insight going back to \cite{Kock:1109.5785} and exploited further in
\cite{GalvezCarrillo-Kock-Tonks:1207.6404}.  The price to pay is that we have
to work in bialgebras, not Hopf algebras, and that the $B_+$-operators are not
Hochschild $1$-cocycles.  However, it has been observed recently~\cite{Kock:1411.3098}
that the BPHZ renormalisation procedure generalises from Hopf algebras to the
kind of bialgebras arising here, which have in particular the property that the
zeroth graded piece is spanned by group-like elements.  Similarly, the crucial
features which Kreimer and his collaborators derive from the Hochschild
$1$-cocycle condition --- locality of counterterms and finite renormalisation,
as well as the Fa\`a di Bruno formula --- can also be derived from the
not-quite-$1$-cocycle $B_+$-operators.

\bigskip

The theory of polynomial functors, the main vehicle for the present approach to
combinatorial Dyson--Schwinger equations, has roots in Topology, Representation
Theory, Combinatorics, Logic, and Computer Science.  A standard reference is
\cite{Gambino-Kock:0906.4931}, which also contains pointers to those original
developments.  In many cases the polynomial functors serve to parametrise
operations of some sort, and this is also the case here, where they organise
$B_+$-operators, hence formalising operadic interpretations hinted at in many
papers by Kreimer (e.g.~\cite{Kreimer:0010059}, \cite{Kreimer:0202110}, 
\cite{Kreimer:0609004}), perhaps most explicitly with
Bergbauer~\cite{Bergbauer-Kreimer:0506190}.  It should be mentioned that
operads can be absorbed into the theory of polynomial 
functors~\cite{Gambino-Kock:0906.4931}, \cite{Weber:1412.7599}.

One pleasant aspect of the polynomial formalism is the ease of the passage from
the one-variable case to the many-variable case, such as found in realistic
quantum field theory: the formulae, and to a large extent also the proofs, look
the same in the many-variable case, only do the symbols refer to more
complicated structures.

The notion of $P$-tree, already studied in category theory \cite{Kock:0807},
\cite{Kock:MFPS28}, \cite{Kock:1109.5785},
\cite{Kock-Joyal-Batanin-Mascari:0706}, has some conceptual advantages over the
notions of decorated trees usual employed in the QFT literature.  In particular,
they feature meaningful symmetry factors with regard to the Feynman graphs they
are a recipe for (\cite{Kock:graphs-and-trees}), and results for graphs can be
derived from the $P$-tree formalism.  For this, it is essential to work over
groupoids, and not just over sets.

Hand in hand with the formalism of polynomial functors and groupoids goes the
possibility of lifting the results from their classical form of algebraic
identities to the `objective' level where they amount to bijections of sets and
equivalences of groupoids.  The algebraic results follow from these by taking
homotopy cardinality, but the bijective proofs represent deeper insight,
unveiling structural patterns that cannot be seen on the algebraic level, such
as universal properties.  (Historically, a starting point for `objective'
combinatorics is Joyal's theory of species~\cite{Joyal:1981}.)  Specifically, for
the present results, the solution to the Dyson--Schwinger equation takes the
form of an initial object in the category of ($1+P)$-algebras; the resulting
bialgebra is
an instance of the general construction of incidence (co)algebras from simplicial
groupoids, which can be formulated at the objective level of slices of the 
category of groupoids~\cite{Galvez-Kock-Tonks:1512.07573}; and the Fa\`a di Bruno formula
essentially drops out as an instance of a homotopy version of the double
counting principle~\cite{GalvezCarrillo-Kock-Tonks:1207.6404}, itself an 
instance of commutation of colimits.

The existence of initial algebras, and hence solutions of polynomial fixpoint
equations over sets, is classical insight in Category Theory, going back at
least to Lambek's 1968 paper \cite{Lambek:fixpoint}.  Subsequent developments of
this theory were driven by applications to Logic and Computer Science: initial
algebras provide semantics for inductive data types.  More precisely, under the
so-called Seely correspondence between locally cartesian closed categories and
Martin-L\"of Type Theory, initial algebras for polynomial functors are precisely
the W-types (also called wellfounded trees)~\cite{Moerdijk-Palmgren:Wellfounded}
of Martin-L\"of Type Theory.  The upgrade from sets to groupoids is motivated in
part by practical considerations in Computer Science related to data types with
symmetries (see \cite{Kock:MFPS28}), and in part by developments in Homotopy
Type Theory~\cite{HoTT-book} where ultimately $\infty$-groupoids is the real stuff.

The close connections with Type Theory is an important byproduct of the present
contribution, expanded upon in a companion paper~\cite{Kock:DSE-W}.  In
short, Green functions are revealed to be inductive types, in a precise
technical sense, thus formalising the classical wisdom that Dyson--Schwinger
equations express self-similarity properties of their solutions.

\bigskip

Since polynomial functors and groupoids are not assumed to be part of
the standard toolbox of this paper's intended core readership of mathematical 
physicists, an effort is made to explain these notions along the way. 
Thus, the natural level of generality of the
main results is not arrived at until Sections~\ref{sec:groupoid-DSE} and
\ref{sec:groupoid-Green}; building up to these results there are four
sections with the necessary categorical background on polynomial functors. 

However, it is actually possible to explain the key ideas with only modest
categorical background, by restricting to the case of polynomial functors
over $\Set$ in one variable.  For the sake of getting quickly to the main ideas,
this case is treated first, aiming also at motivating the heavier
machinery treated in turn.  The slight duplication of arguments resulting
from this preview will hopefully prove worthwhile.

Altogether, these considerations have led to the following arrangement of
the material into sections.

In Section~\ref{sec:DSE} we briefly recall the combinatorial Dyson--Schwinger
equations in their prototypical form in the Connes--Kreimer Hopf algebra
of rooted trees.  In Section~\ref{sec:1var} the main results are outlined in
the simplest case, that of polynomial functors over $\Set$
in one variable. 
As preparation for the main results in Section~\ref{sec:groupoid-DSE} and
\ref{sec:groupoid-Green}, many-variable functors are introduced in
Section~\ref{sec:many}, the needed notions from the (homotopy) theory of
groupoids in Section~\ref{sec:groupoids}, and groupoid polynomial functors in
Section~\ref{sec:P}.  After the main results in Section~\ref{sec:groupoid-DSE}
and \ref{sec:groupoid-Green}, the short Section~\ref{sec:nat} deals with
functoriality, observing that cartesian morphisms of polynomial functors induce
bialgebra homomorphism compatible with $B_+$-operators and Green functions in
the expected ways.  In Section~\ref{sec:Foissy} we take a look at the different
form of Dyson--Schwinger equation studied by Foissy~\cite{Foissy:0707.1204}.
Section~\ref{sec:trees-graphs}, a preview of a paper in
preparation~\cite{Kock:graphs-and-trees}, explains how to encode Feynman graphs
as $P$-trees, and how to transfer results to the realm of graphs.

\bigskip

\noindent
{\bf Acknowledgments.}
  It was Kurusch Ebrahimi-Fard who first got me interested in Quantum Field 
  Theory and Renormalisation
  seven years ago, and his continuing guidance and help since then has been essential for
  this work, and is gratefully acknowledged.  This paper is an elaboration upon
  my talk at the workshop on {\em Mathematical Aspects of Hadron Physics} at
  the ECT* in Trento, October 2012, and I thank the organisers, Kurusch
  Ebrahimi-Fard, Fr\'ed\'eric Patras, and Craig Roberts for the invitation to
  speak, and for many instructive discussions on Dyson--Schwinger equations and
  related topics.  More recently I have benefited from conversations with 
  Lutz Klaczynski and Michael Borinsky in Berlin.
  Finally, I acknowledge support from grant number
  MTM2013-42293-P of Spain.

%%%%%%%%%%%%%%%%%%%%%%%%%%%%%%%%%%%%%%%%%%%%%%%%%%
\section{Combinatorial Dyson--Schwinger equations}
%%%%%%%%%%%%%%%%%%%%%%%%%%%%%%%%%%%%%%%%%%%%%%%%%%
\label{sec:DSE}

\begin{blanko}{The Connes--Kreimer Hopf algebra of trees 
  \cite{Connes-Kreimer:9808042}.}\label{CK}
  The {\em Connes--Kreimer Hopf algebra} of (rooted) trees (also called the
  {\em Butcher--Connes--Kreimer Hopf algebra})
  is the free commutative
  algebra $\CK$ on the set of isomorphism classes of
  {\em combinatorial trees} such as \
  \raisebox{1pt}{\begin{texdraw}\smalldot\end{texdraw}} , \raisebox{-3pt}{
    \begin{texdraw}\twonodetree\end{texdraw}} , \raisebox{-3pt}{
    \begin{texdraw}\vtree\end{texdraw}} .
    (`Combinatorial' as opposed to the operadic trees (\ref{operadictrees})
    that will play an important role in what follows.)
    The comultiplication is
  given on generators by
  \begin{eqnarray*}
  \Delta:  \CK & \longrightarrow & \CK \otimes \CK  \\
    T & \longmapsto & \sum_c P_c \otimes S_c ,
  \end{eqnarray*}
  where the sum is over all admissible cuts of $T$; the left-hand factor $P_c$
  is the forest (interpreted as a monomial) found above the cut, and $S_c$ is
  the subtree found below the cut (or the empty forest, in case the cut is below
  the root).   Admissible cut means: either a subtree containing the root, or the empty set.
  $\CK$ is a connected bialgebra: the grading is by the number of nodes, and
  $(\CK)_0$ is spanned by the algebra unit, the empty forest.  Therefore, by 
  general principles (see for 
  example \cite{Figueroa-GraciaBondia:0408145}), it acquires
  an antipode and becomes a Hopf algebra.
\end{blanko}

\begin{blanko}{Combinatorial Dyson--Schwinger equations.}\label{BK-DSE}
  The {\em combinatorial Dyson--Schwinger equations} of Bergbauer and
  Kreimer~\cite{Bergbauer-Kreimer:0506190} refer to an ambient combinatorial
  Hopf algebra $\HH$ and a collection of Hochschild $1$-cocycles.  By {\em
  Hochschild $1$-cocycle} is meant a linear operator $B_+$ satisfying the
  equation
  $$
  \Delta \circ B_+ = \big((\Id\tensor B_+) + (B_+ \!\tensor \eta\epsilon)\big) \circ \Delta  
  $$
  where $\epsilon$ is the counit, and $\eta$ is the algebra unit.
  (Note that this is not the standard Hochschild cohomology: $\HH$ is considered
  a bicomodule over itself via the identity action on the left and via $\eta \circ 
  \epsilon$ on the right (see Moerdijk~\cite{Moerdjik:9907010}).)
  
  The general form of combinatorial Dyson--Schwinger equations considered by Bergbauer and
  Kreimer is:
  \begin{equation}\label{eq:BK-DSE}
    X = 1 + \sum_{n\geq 1} w_n \alpha^n \; B^n_+ (X^{n+1})  .
  \end{equation}
  Here the $B^n_+$ are a sequence of $1$-cocycles, $w_n$ are scalars, and the
  parameter $\alpha$ is a coupling constant.  The solution $X$ will be a formal
  series, an element in $\HH[[\alpha]]$.  By making the Ansatz $X= \sum_{k\geq
  0} c_k \alpha^k$, plugging it into the equation, and solving for powers of
  $\alpha$, it is easy to see that there is a unique solution, which can be
  calculated explicitly up to any given order, as exemplified below.
\end{blanko}

\begin{theorem}[Bergbauer and Kreimer~\cite{Bergbauer-Kreimer:0506190}]
  \label{thm:BK}
  These $c_k$ span a Hopf sub-algebra of $\HH$, isomorphic to the Fa\`a di Bruno
  Hopf algebra.
\end{theorem}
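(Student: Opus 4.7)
The plan is to apply the coproduct to the Dyson--Schwinger equation, use the Hochschild cocycle property to obtain a self-referential identity for $\Delta(X)$, then expand in $\alpha$ and read off an explicit recursion for $\Delta(c_n)$ that matches the Fa\`a di Bruno coproduct.

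First I would apply $\Delta$ to $X = 1 + \sum_{n \geq 1} w_n \alpha^n B^n_+(X^{n+1})$. Since $\Delta$ is an algebra homomorphism, $\Delta(X^{n+1}) = \Delta(X)^{n+1}$, and the cocycle identity
$$\Delta \circ B^n_+ = (\Id \otimes B^n_+)\circ \Delta + B^n_+ \otimes \eta\epsilon$$
makes all the $(-) \otimes 1$ contributions regroup, via the original equation, into $X \otimes 1$. The resulting identity
$$\Delta(X) = X \otimes 1 + \sum_{n \geq 1} w_n \alpha^n (\Id \otimes B^n_+)\bigl(\Delta(X)^{n+1}\bigr)$$
is the key relation.

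Second, writing $X = \sum_k c_k \alpha^k$ and extracting the coefficient of $\alpha^n$ on each side yields a recursion expressing $\Delta(c_n)$ in terms of the $\Delta(c_j)$ for $j < n$. The crucial observation is that when grouped correctly, the sums $\sum_n w_n (\Id \otimes B^n_+)(\cdots)$ on the right reassemble into tensors whose right factors are single $c_k$'s --- this is just the defining recursion for $c_k$ read backwards. Hence
$$\Delta(c_n) = c_n \otimes 1 + \sum_{k=0}^{n-1} Q_{n,k}(c_1, c_2, \ldots, c_{n-k}) \otimes c_k$$
for certain polynomials $Q_{n,k}$. This shows that the subalgebra $A$ of $\HH$ generated by the $c_k$ is closed under $\Delta$, hence is a sub-bialgebra, and by connectedness of $\HH$ a sub-Hopf algebra.

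Finally, to identify $A$ with the Fa\`a di Bruno Hopf algebra, one matches the polynomials $Q_{n,k}$ with the Bell-polynomial entries of the Fa\`a di Bruno coproduct on its standard generators $a_n$. The main obstacle is precisely this combinatorial identification, since a priori the $Q_{n,k}$ depend on the weights $w_n$ and on the specific 1-cocycles $B^n_+$; remarkably, these choices get absorbed into the generators themselves so that the intrinsic coproduct is always of Fa\`a di Bruno type. A conceptually cleaner route --- more in line with the objective viewpoint developed later in the paper --- is to observe that the recursion for $\Delta(X)$ encodes the substitution self-similarity characterising Fa\`a di Bruno, so a Hopf algebra homomorphism from Fa\`a di Bruno into $A$ exists by universality; an inductive degree count then shows it is an isomorphism.
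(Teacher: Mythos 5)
Your argument is essentially sound, but note that the paper does not actually prove Theorem~\ref{thm:BK}: it is stated with attribution to Bergbauer--Kreimer, and what you have written is a reconstruction of \emph{their} proof (the cocycle computation giving $\Delta(X)=X\otimes 1+\sum_n w_n\alpha^n(\Id\otimes B^n_+)(\Delta(X)^{n+1})$ is exactly the right starting point, and is correct). The one place where your sketch is thinner than it should be is the ``reassembly'' step. The clean way to run the induction is to prove the closed form $\Delta(X)=\sum_{k\ge 0}X^{k+1}\alpha^k\otimes c_k$: substituting this Ansatz into your key relation, the left-hand tensor factor of the term indexed by $(n;k_1,\dots,k_{n+1})$ is $X^{(n+1)+\sum k_i}\alpha^{\,n+\sum k_i}$, which depends only on $m=n+\sum_i k_i$, while the right-hand factors sum over all such decompositions to $\sum_n w_n B^n_+\big([X^{n+1}]_{m-n}\big)=c_m$ by the defining recursion. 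This is exactly your ``recursion read backwards'', but identifying the left factor as the power $X^{m+1}$ is what makes the Fa\`a di Bruno identification immediate (this \emph{is} the coordinate-free Fa\`a di Bruno coproduct), rather than an a posteriori matching of Bell polynomials; it also makes transparent why the $w_n$ and the particular cocycles drop out. One further caveat: ``isomorphic to the Fa\`a di Bruno Hopf algebra'' requires the $c_k$ to be algebraically independent, which needs a nondegeneracy assumption on the $w_n$ (e.g.\ $w_1\neq 0$); for degenerate choices one only obtains a quotient.

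By contrast, the route the paper itself develops for this circle of ideas is genuinely different: it replaces $\CK$ by the bialgebra $\BB_P$ of $P$-trees, where the power of $\alpha$ tracks the operadic (leaf) grading rather than the node grading, obtains the Fa\`a di Bruno formula $\Delta(G)=\sum_n G^n\tensor g_n$ (Theorem~\ref{thm:GKT}) at the objective level as an equivalence of groupoids, and stresses in \ref{notHH} that the Hochschild $1$-cocycle property is not what is responsible --- the $B_+$-operators of $\BB_P$ are in fact \emph{not} $1$-cocycles, and the sub-bialgebra property follows from the fixpoint/initial-algebra structure of the solution alone. Your approach buys a self-contained algebraic proof inside a given Hopf algebra equipped with cocycles; the paper's buys a bijective explanation of the coefficients and a uniform statement for all $P$, at the cost of the groupoid machinery.
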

In Quantum Field Theory, $\HH$ is a Hopf algebra of Feynman graphs.  The
importance of this theorem is that while $\HH$ itself is inherently of
perturbative nature, the
solution $X$, and hence the Fa\`a di Bruno Hopf sub-algebra spanned by it,
point towards non-perturbative interpretations, as explained by 
Kreimer~\cite{Kreimer:0609004}.

For many purposes, such
as the original purpose of organising the counter-terms in BPHZ 
renormalisation~\cite{Kreimer:9707029}, one can reduce to $\CK$, the Hopf algebra
of combinatorial trees.  In this case, there is only one $B_+$-operator, namely
the one that takes as input a forest and grafts the trees in it onto a new root
node to produce a single tree.  The following few examples refer to this Hopf
algebra.

\begin{blanko}{Example: a quadratic Dyson--Schwinger 
  equation~\cite{Bergbauer-Kreimer:0506190}.}\label{ex:binary-DSE}
  In the example
  $$
  X=1+\alpha B_+(X^2) ,
  $$
  by writing down the Ansatz $X = \sum_{k\geq 0} c_k \alpha^k$, 
  plugging it into the equation, and solving for powers of $\alpha$,
  one readily finds
  $$
  c_0 = 1, 
  \ c_1 = \raisebox{1pt}{\ctreeone} , 
  \ c_2 = 2 \raisebox{-2pt}{\ctreetwo} ,
  \ c_3 = 4 \raisebox{-4pt}{\ctreethreeL} + \raisebox{-2pt}{\ctreethreeV} ,
  \ c_4 = 8 \raisebox{-8pt}{\ctreefourL} 
  + 2 \!\!\raisebox{-4pt}{\ctreefourY} + 4 \raisebox{-4pt}{\ctreefourVL} ,
  \quad \text{etc.}
  $$
\end{blanko}

\begin{blanko}{Example with more complicated trees.}\label{ex:infinite-DSE}
  For the following `infinite' example,
  $$
  X = 1 + \sum_{n\geq 1} \alpha^n \; B_+ (X^{n+1}) ,
  $$
  with $X= \sum_{k\geq 0} c_k \alpha^k$, one finds
$$
c_0 = 1, \ \
  c_1 = \raisebox{1pt}{\ctreeone} , \ \
  c_2 = 2 \raisebox{-2pt}{\ctreetwo} + \raisebox{1pt}{\ctreeone}, \ \
  c_3 = 4 \raisebox{-4pt}{\ctreethreeL} + \raisebox{-2pt}{\ctreethreeV}
  + 5 \raisebox{-2pt}{\ctreetwo} + \raisebox{1pt}{\ctreeone}, 
  \phantom{xxxxxxxxxxxxxx}
$$
$$\phantom{xxxxxxxxxxxxxx} c_4 = 8 \raisebox{-8pt}{\ctreefourL} 
  + 2 \raisebox{-4pt}{\ctreefourY} + 4 \raisebox{-4pt}{\ctreefourVL} 
  + 16 \raisebox{-4pt}{\ctreethreeL} + 5 \raisebox{-2pt}{\ctreethreeV}
  + 9 \raisebox{-2pt}{\ctreetwo} + \raisebox{1pt}{\ctreeone},
  \quad \text{etc.}
$$
\end{blanko}

\begin{blanko}{Example with denominators.}\label{ex:infinite-symmetric-DSE}
  We consider finally an example where the constants $w_n$ are symmetry factors:
  $$
  X = 1 + \sum_{n\geq 1} \frac{1}{(n+1)!}\alpha^n \; B_+ (X^{n+1}) .
  $$
  With $X= \sum_{k\geq 0} c_k \alpha^k$ again, one finds this time
  $$
  c_0 = 1, \ \
  c_1 = \brok{1}{2}\raisebox{1pt}{\ctreeone} , \ \
  c_2 = \brok{1}{2} \raisebox{-2pt}{\ctreetwo} + \brok{1}{6}\raisebox{1pt}{\ctreeone}, \ \
  c_3 = \brok{1}{2} \raisebox{-4pt}{\ctreethreeL} + \brok{1}{8}\raisebox{-2pt}{\ctreethreeV}
  + \brok{5}{12} \raisebox{-2pt}{\ctreetwo} + \brok{1}{24}\raisebox{1pt}{\ctreeone},
  \phantom{xxxxxxxxxxxxx}
  $$
  $$\phantom{xxxxxxxxxxxxxx}
  c_4 = \brok{1}{2} \raisebox{-8pt}{\ctreefourL} 
  + \brok{1}{8} \raisebox{-4pt}{\ctreefourY} + \brok{1}{4} \raisebox{-4pt}{\ctreefourVL} 
  + \brok{2}{3} \raisebox{-4pt}{\ctreethreeL} + \brok{5}{24} \raisebox{-2pt}{\ctreethreeV}
  + \brok{5}{24} \raisebox{-2pt}{\ctreetwo} + \brok{1}{120}\raisebox{1pt}{\ctreeone},
  \quad \text{etc.}
  $$
\end{blanko}

While these expansions are straightforward to calculate, the pattern governing
them may not be obvious.  What is really going on will only become clear once we
pass to the setting of polynomial functors (see
Example~\ref{ex:stable-planar} for the case without symmetry factors, and
Example~\ref{ex:stable} for the
case with symmetry factors).

Foissy~\cite{Foissy:0707.1204} studies a different form of
combinatorial Dyson--Schwinger equations, whose solutions do not automatically
generate Hopf sub-algebras. The main difference is the role of the coupling
  constant and the grading it defines: in the Foissy equation, the power of
  $\alpha$ counts the number of $B_+$-operators, whereas in the
  Bergbauer--Kreimer equation it relates to the power of $X$ inside the
  $B_+$-operator, which is the operadic grading, as we shall see. 
We shall briefly return to Foissy's equations in 
Section~\ref{sec:Foissy}.

%%%%%%%%%%%%%%%%%%%%%%%%%%%%%%%%%%%%%%%%%%%%%%%%%%
\section{Polynomial functors in one variable; basic aspects of the theory}
%%%%%%%%%%%%%%%%%%%%%%%%%%%%%%%%%%%%%%%%%%%%%%%%%%
\label{sec:1var}

The notion of polynomial functor has origins in Topology, Representation Theory,
Combinatorics, Logic, and Computer Science, but the task of unifying these
developments has only recently begun~\cite{Gambino-Kock:0906.4931}.
For the present purposes, the natural level of generality is that of
many-variable polynomial functors over groupoids~\cite{Kock:MFPS28}.  This is
needed for example to handle Feynman graphs encoded as trees.  However, most of
the features of the theory developed here can be appreciated already in the setting of
one-variable polynomial functors over sets, and this requires almost no
background in Category Theory.

For the benefit of the less categorically inclined reader, we go through this
easier case first, to explain the main ideas, before turning to the general
setting in Section~\ref{sec:groupoid-DSE}, where the more technical aspects will be treated.

\begin{blanko}{Categories and functors.}
  Only very little category theory is needed for this section.
  For further background, see Leinster~\cite{Leinster:BCT} for a short and
  concise introduction, and Spivak~\cite{Spivak:CTS} for an account addressed at
  non-mathematicians.
  
  A {\em category} has objects and arrows (morphisms), and arrows can be composed.
  The primordial example is the category of sets, denoted $\Set$, where the objects
  are sets and the arrows are maps of sets.  Another example is the category
  $\kat{Vect}$ of vector spaces and linear maps.  A {\em functor} is a `morphism of
  categories', i.e.~sends objects to objects and arrows to arrows, in such
  a way as to preserve composition.  For example there is a functor $F: \Set \to
  \kat{Vect}$, sending a set $S$ to the vector space spanned by $S$, and sending
  a set map $f:S \to T$ to the linear map induced by its value on basis vectors.
\end{blanko}

\begin{blanko}{Sets.}
  The category of sets is the origin of elementary arithmetic: the {\em sum} of
  two sets $A$ and $B$ it their disjoint union, written with a sum sign
  $A+B$ to emphasise the additive nature of this operation: the cardinality of
  $A+B$ is the cardinality of $A$ plus the cardinality of $B$.  Furthermore,
  the sum has the universal property that to specify a map from $A+B$ to some 
  set $X$ is the same thing as specifying one map $A \to X$ and another map 
  $B\to X$.  Similarly, the {\em product} of $A$ and $B$ is their cartesian
  product.  The cardinality of $A\times B$ is of course the cardinality of $A$
  times the cardinality of $B$.  Finally we use the {\em exponential} notation
  $B^A$ for the set of maps
  from $A$ to $B$; this notation
  (standard in category theory) is justified by the fact that if $B$ is an
  $n$-element set, and $A$ is a $k$-element set, then $B^A$ is an $n^k$-element
  set. 
\end{blanko}

\begin{blanko}{Polynomial functors in one variable.}
  In their simplest manifestation, polynomial functors are endo\-functors of the
  category of sets built from arbitrary sums and products (and hence also
  constants and exponentiation).  A standard reference for polynomial functors
  is \cite{Gambino-Kock:0906.4931}; the
  manuscript~\cite{Kock:NotesOnPolynomialFunctors} aims at eventually becoming a
  unified reference.

  Given a map of sets $p: E\to B$, we define the associated {\em polynomial functor} 
  in one variable to be the functor
  \begin{eqnarray}
    P:\Set & \longrightarrow & \Set \notag  \\
    X & \longmapsto & \sum_{b\in B} X ^{E_b} .\label{eq:poly1var}
  \end{eqnarray}
  In the formula, the sum sign denotes disjoint union of sets, and $E_b =
  p^{-1}(b)$ denotes the inverse image of an element $b\in B$, also called the 
  fibre over $b$.
  We see that the role played by the map $p:E \to B$ is to deliver a
  family of sets indexed by $B$, namely
  \begin{equation}
  \label{eq:setfamily}
  ( E_b \mid b\in B).
  \end{equation}
  We say that $E_b$ is the {\em arity} of $b$.
  Note that $B$ may be an infinite set; then the sum in \eqref{eq:poly1var} is
  accordingly an infinite sum (so in a sense polynomial functors are more like
  power series than like polynomials).  On the other hand, we shall always require 
  that
  the map $p$ has finite fibres.  Such polynomial functors are called {\em 
  finitary}.

  To say that $P$ is a {\em functor} means that it operates not just on sets but 
  also on maps: given a map of sets $a:X \to Y$, there is induced a map
  $$
  \sum_{b\in B} X ^{E_b} \to \sum_{b\in B} Y ^{E_b}
  $$
  termwise given by
  \begin{eqnarray*}
    X^{E_b} & \longrightarrow & Y^{E_b}  \\
    f & \longmapsto & a\circ f.
  \end{eqnarray*}
\end{blanko}

\begin{blanko}{Polynomial fixpoint equations.}
  The abstract combinatorial `Dyson--Schwinger equations' we shall consider here
  are equations of the form
  \begin{equation}\label{eq:polyfix}
  X \isopilback 1 +P(X) ,
  \end{equation}
  where $P$ is a polynomial functor, and $1$ denotes a singleton set.  This is
  an equation of sets, and to solve it means to find a set $X$ together with a
  specific bijection with $1+P(X)$, as indicated by the symbol $\isopilback$.
  In fact, we are not satisfied with finding {\em some} solution; we want the
  {\em best} solution, the {\em least fixpoint}.  Making this precise requires a
  few more notions of category theory:
\end{blanko}
  
\begin{blanko}{Initial objects.}
  An object $I$ in a category $\CC$ is called {\em initial} if for every object
  $C$ there is a unique morphism in $\CC$ from $I$ to $C$.  It is easy to show
  that an initial object, if it exists, is unique (up to isomorphism).  For
  example, the category of sets has an initial object, namely the empty set
  $\emptyset$, since for any set $X$ there is a unique set map $\emptyset \to X$.
\end{blanko}

\begin{blanko}{$P$-algebras and initial algebras.}
  A {\em $P$-algebra} is by definition a pair $(A,a)$ where $A$ is a set, and
  $a: P(A) \to A$ is a set map.  A {\em homomorphism} of $P$-algebras from $(A,a)$ to 
  $(B,b)$ is a set map $f:A\to B$ compatible with the structure maps $a$ and $b$,
  i.e.~such that this square commutes:
  \begin{equation}\label{eq:P-alg map}
    \xymatrix{
     P(A) \ar[r]^-a\ar[d]_{P(f)} & A \ar[d]^f \\
     P(B) \ar[r]_-b & B   .
  }
  \end{equation}
  Note that the functoriality of $P$ is necessary even to be able to state this
  compatibility: we need to be able to evaluate $P$ not just on sets but also
  on maps.  Altogether, there is a category $P\alg$ of $P$-algebras
  and $P$-algebra homomorphisms.
    
  {\em Lambek's lemma}~\cite{Lambek:fixpoint}
  says that if the category of $P$-algebras has an initial
  object $(A,a)$, then the structure map $a$ is invertible.  (This is not a
  difficult result.)  This is precisely to say that an initial $P$-algebra
  $(A,a)$ is a solution to the equation $X \isopilback P(X)$: the underlying set
  $A$ is $X$, and the structure map $a$ is the required bijection.  Initiality
  is the technical condition that justifies referring to this as the {\em least
  fixpoint}.
  
  We shall later (\ref{bla:Foissy}) come back to the equation $X \isopilback P(X)$.
  Right now, the equation we wanted to solve is rather
  $$
  X\isopilback 1+P(X),
  $$
  so what we are looking for is the initial $(1+P)$-algebra rather than the initial
  $P$-algebra.   (Here $1$ denotes the constant functor $1$, and the sum is 
  pointwise sum of $\Set$-valued functors.)
\end{blanko}

\begin{theorem}
  If $P$ is a polynomial functor, then the fixpoint equation
  $$
  X\isopilback 1+P(X)
  $$ 
  has a least solution, that is, the category of $(1+P)$-algebras has an initial
  object.  Assuming $P$ is finitary, this solution is the set of (isomorphism
  classes of) $P$-trees, now to be defined.
\end{theorem}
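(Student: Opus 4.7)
The plan is to take the standard \emph{initial algebra} construction, guided by the shape of the polynomial $P$. I would first fix what a $P$-tree means here: a finite rooted tree in which every node $n$ is equipped with a ``decoration'' $b(n)\in B$ together with a bijection between the fibre $E_{b(n)}$ and the set of edges incoming at $n$; the root has a single outgoing open edge (the root edge); and $P$-trees are considered up to isomorphism. Call this set $T$.

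Next I would exhibit a $(1+P)$-algebra structure on $T$. The $1$-summand picks out the \emph{trivial} $P$-tree, consisting of just one open edge and no nodes. The $P(T)$-summand, unfolded as $\sum_{b\in B} T^{E_b}$, is interpreted via \emph{grafting}: given a pair $(b, f\colon E_b \to T)$, form a fresh root node of type $b$, graft onto each of its incoming edges $e\in E_b$ the root edge of the tree $f(e)$, and read off the resulting $P$-tree. This gives the required structure map $\sigma\colon 1+P(T)\to T$, and in fact $\sigma$ is plainly a bijection, so the fixpoint equation is witnessed.

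To prove initiality, let $(A,a)$ be any $(1+P)$-algebra with $a=a_0 + a_1$, $a_0\colon 1\to A$, $a_1\colon P(A)\to A$. Define $\phi\colon T\to A$ by structural recursion: $\phi$ sends the trivial tree to $a_0(\star)$, and sends a tree obtained from $(b,f)$ by grafting to $a_1(b,\phi\circ f)$. The finitariness hypothesis (finite fibres of $p$) together with finiteness of $P$-trees makes the recursion well-founded. Commutativity of the square \eqref{eq:P-alg map} for $(1+P)$ holds on the $1$-summand by definition of $\phi$ on the trivial tree, and on the $P(T)$-summand exactly by the recursive clause, since $P(\phi)(b,f)=(b,\phi\circ f)$. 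Uniqueness is immediate: any homomorphism $T\to A$ must satisfy the same two recursive clauses, and is therefore equal to $\phi$ by induction on the tree.

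The delicate step is justifying the structural recursion on $P$-trees, i.e.\ the well-foundedness needed to make ``induction on trees'' rigorous. A clean way to bypass the ad hoc induction is the colimit construction of the initial algebra: form $T_n := (1+P)^n(\emptyset)$, the chain $\emptyset \hookrightarrow T_1 \hookrightarrow T_2 \hookrightarrow \cdots$, and its colimit $T_\infty = \operatorname{colim}_n T_n$. Because $P$ is finitary (fibres of $p$ are finite), $P$ preserves this $\omega$-filtered colimit, hence $(1+P)(T_\infty) \cong T_\infty$, and the colimit is an initial $(1+P)$-algebra by the standard argument. Finally, one identifies $T_n$ with the set of isoclasses of $P$-trees of height at most $n$ (the grafting constructor corresponds to the $P$-summand, the trivial tree to the $1$-summand), so that $T_\infty = T$ is the set of all finite $P$-trees. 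This matches the algebra structure of the first approach and yields the theorem.
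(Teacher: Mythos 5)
Your proof is correct, and its first half is essentially the argument the paper itself uses (for the groupoid analogue, Theorem~\ref{thm:hoinit-gr}): the structure map is grafting onto a $b$-corolla, and its inverse is the decomposition of a nontrivial tree into its bottom node and the ideal subtrees sitting over the incoming edges. Where you genuinely add something is the initiality part. The paper does not prove this theorem at the $\Set$ level at all --- it cites Lambek for existence of initial algebras and \cite{Kock:0807} for the identification with $P$-trees, and even in the groupoid case it explicitly omits the initiality argument, remarking only that it ``goes by induction on the number of nodes.'' Your structural recursion defining $\phi$ and the induction proving its uniqueness is precisely that omitted argument, carried out; and your fallback via the initial chain $T_n=(1+P)^n(\emptyset)$ with $T_\infty=\operatorname{colim}_n T_n$ is the standard Ad\'amek-style construction that the Lambek citation is standing in for, with finitariness used exactly where it should be (so that $P$, having finite exponents $E_b$, preserves the filtered colimit). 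The only point worth flagging is minor: in the one-variable $\Set$ case the decorations rigidify the trees, so passing to isomorphism classes is harmless and the set $T$ you define is literally a set; in the groupoid upgrade later in the paper this is exactly the step that must be replaced by matching automorphism groups, which is why the paper's groupoid proof spends its effort there rather than on initiality.
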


The existence of initial algebras for polynomial functors is a classical result
in Category Theory, going back to Lambek in the late 1960s~\cite{Lambek:fixpoint}.
The initial $(1+P)$-algebra is also the set of operations for the free monad
on $P$, somewhat explaining the importance of the equation $X\isopilback 1+P(X)$
over $X\isopilback P(X)$.
The explicit characterisation of the solution in terms of $P$-trees
is from \cite{Kock:0807}.

\begin{blanko}{Operadic trees.}\label{operadictrees}
  By {\em operadic trees} we mean rooted trees admitting open-ended edges
  (leaves and root), such as the following:
  $$
    \begin{texdraw}%\setunitscale 0.8
  \linewd 0.8 \footnotesize
  \move (-50 0)
  \bsegment
    \move (0 0) \lvec (0 20)
  \esegment
  
  \move (0 0)
  \bsegment
    \move (0 0) \lvec (0 10) \onedot
  \esegment
  
  \move (50 0)
  \bsegment
    \move (0 0) \lvec (0 26)
    \move (0 13) \onedot
  \esegment
  
  \move (105 0)
  \bsegment
    \move (0 0) \lvec (0 10) \onedot
    \lvec (-5 25) \onedot \lvec (-10 40)
    \move (0 10) \lvec (-14 21) \onedot
    \move (0 10) \lvec (5 25) \onedot \lvec (0 40)
    \move (5 25) \lvec (10 40)
    \move (0 10) \lvec (23 36)
  \esegment
  \end{texdraw}
  $$
  A formal definition will be given in \ref{polytree-def}, exploiting the many-variable
  setting.  Trees of this kind are called operadic because each node is to be
  thought of as an operation, with its incoming edges (reading from top to
  bottom) as input slots and its outgoing edge as output slot.  Note the
  difference between a leaf (an open-ended edge) and a nullary node.  In 
  contrast to operadic trees, we refer to the usual trees without open-ended 
  edges as {\em combinatorial trees}.
\end{blanko}

\begin{blanko}{$P$-trees.}\label{P-trees-naive}
  (We shall come to a more abstract description in \ref{Ptree} and 
  \ref{PtreeGr}.)
  Let $P : \Set\to\Set$ be a finitary polynomial functor represented by a set map
  $p:E\to B$.
  A {\em $P$-tree} is an operadic tree with nodes decorated by elements in $B$,
  and for each node $x$ decorated by $b$ a specified bijection between the
  incoming edges of $x$ and the set $E_b$.  In other words, each node is
  decorated with an operation of matching arity.
\end{blanko}

\begin{blanko}{Core.}\label{core}
  The {\em core} of a $P$-tree is the combinatorial tree constituted by its
  inner edges.  So to obtain it, forget all decorations, and shave off leaf
  edges and root edge.  (The word `core' is used for something else in 
  \cite{Kreimer:0902.1223}, but confusion should not be likely.)
\end{blanko}

\begin{blanko}{Example: binary trees.}\label{ex:binary}
  Consider the polynomial functor $P$ defined by the set map $p:
  \{\mathtt{left},\mathtt{right}\} \to 1$.  It is the functor
  \begin{eqnarray*}
    \Set & \longrightarrow & \Set  \\
    X & \longmapsto & X^2 .
  \end{eqnarray*}
  For this $P$, a $P$-tree is precisely a (planar) binary tree.
  Indeed, since in this case the set $B$ is just singleton,
  to $P$-decorate a tree amounts to specifying for each
  node a bijection between the set of incoming edges and the set 
  $\{\mathtt{left},\mathtt{right}\}$.  For this bijection to be possible,
  each node must have precisely two incoming edges, and the bijection says
  which is the left branch and which is the right.

  The corresponding fixpoint equation $X\isopilback 1+P(X)$ is now
  $$
  X \isopilback 1 + X^2 ,
  $$
  and
  the theorem thus says that the solution, the initial $(1+P)$-algebra, is the set
  of planar binary trees.  Indeed, the fixpoint equation can be read as saying:
  a planar binary tree is either the trivial tree, or it is given by a pair of
  planar binary trees.  This is precisely the recursive characterisation of
  binary trees.
  Here are the first few elements:

  \begin{equation}\label{eq:binary-trees}\begin{texdraw}\setunitscale 0.8
    \move (0 0) 
    \bsegment
      \rlvec (0 20) 
    \esegment
    
    \move (40 0) 
    \bsegment
      \move (0 0)
      \lvec (0 10) \onedot \rlvec (-5 10)
      \move (0 10) \rlvec (5 10)
    \esegment
    
    \move (85 0)   
    \bsegment
      \move (0 0)
	\lvec (0 10) \onedot \rlvec (-10 20)
        \move (0 10) \rlvec (10 20)
        \move (-5 20) \onedot \rlvec (5 10)
    \esegment

    \move (115 0)   
    \bsegment
      \move (0 0)
	\lvec (0 10) \onedot \rlvec (-10 20)
        \move (0 10) \rlvec (10 20)
        \move (5 20) \onedot \rlvec (-5 10)
    \esegment
    
    \move (170 0)
    \bsegment
      \move (0 0) 
      \bsegment
	\move (0 0) 
	\lvec (0 10) \onedot \rlvec (-15 30)
	\move (0 10) \rlvec (15 30)
	\move (-5 20) \onedot \rlvec (10 20)
	\move (-10 30) \onedot \rlvec (5 10)
      \esegment
     
      \move (40 0) 
      \bsegment
	\move (0 0) 
	\lvec (0 10) \onedot \rlvec (-15 30)
	\move (0 10) \rlvec (15 30)
	\move (-5 20) \onedot \rlvec (10 20)
	\move (0 30) \onedot \rlvec (-5 10)
      \esegment
   
      \move (80 0) 
      \bsegment
	\move (0 0) 
	\lvec (0 10) \onedot \rlvec (-15 30)
	\move (0 10) \rlvec (15 30)
	\move (-10 30) \onedot \rlvec (5 10)
	\move (10 30) \onedot \rlvec (-5 10)
      \esegment
     
      \move (120 0) 
      \bsegment
	\move (0 0) 
	\lvec (0 10) \onedot \rlvec (-15 30)
	\move (0 10) \rlvec (15 30)
	\move (5 20) \onedot \rlvec (-10 20)
	\move (0 30) \onedot \rlvec (5 10)
      \esegment

      \move (160 0) 
      \bsegment
	\move (0 0) 
	\lvec (0 10) \onedot \rlvec (-15 30)
	\move (0 10) \rlvec (15 30)
	\move (5 20) \onedot \rlvec (-10 20)
	\move (10 30) \onedot \rlvec (-5 10)
      \esegment
      
    \esegment
  \end{texdraw}\end{equation}

  Sorting by the number of leaves and
  taking core sends these binary trees to combinatorial trees which are
  sub-binary (i.e.~have at most two incoming edges at each node), and where the
  planar structure has been lost.  We see that the coefficients $c_k$ appearing
  in the solution of the quadratic Dyson--Schwinger equation \ref{ex:binary-DSE}
  are precisely the numbers of binary trees with a given core.  (This
  interpretation of the coefficients $c_k$ was given already by Bergbauer and
  Kreimer~\cite{Bergbauer-Kreimer:0506190}, in some form.)
\end{blanko}

\begin{blanko}{The structure bijection as $B_+$-operator.}
  We continue the example of binary trees.
  Ignoring for a moment that the structure map for the $(1+P)$-algebra $X$
  is a bijection, it is first of all a map 
  $$
  X \leftarrow 1 + X^2 .
  $$
  By the universal property of the sum, this map consists of two maps:
  $$
  X \leftarrow 1 \qquad \text{ and } \qquad X \leftarrow X^2 .
  $$
  The first picks out the trivial tree.  The second associates to each
  pair of trees a single tree; by unravelling the bijection this single 
  tree is seen to be precisely obtained by grafting the
  two trees onto a one-node binary tree.  We see that the abstract bijection
  of sets realising the solution is precisely the $B_+$-operator.
  (A more detailed argument is given in the groupoid case in the proof of 
  Theorem~\ref{thm:hoinit-gr}.)
\end{blanko}

\begin{blanko}{Example: planar trees.}\label{ex:planar}
  Take $P(X) = X^0 + X^1 + X^2 + X^3 + \cdots$ This is the list
  endofunctor, sending a set $X$ to the set of lists of elements in $X$.
  This is a polynomial functor, represented by 
  $$
  \N'\to\N,$$
  where $\N$ is (isoclasses of) finite ordered sets (i.e.~the natural numbers)
  and $\N'$ is (isoclasses of) finite pointed ordered sets.
  The fibre over any $n$-element set is that set itself, hence accounting for
  the exponents $X^n$ in the expression for $P$.
  Then $P$-trees are planar trees.
  The fixpoint equation 
  $$
  X \isopilback 1 + X^0 + X^1 + X^2 + X^3 + \cdots
  $$
  says that a planar tree is either the trivial tree or a list of planar trees.

  Since we allow nullary and unary nodes (corresponding to the terms $X^0$ and
  $X^1$ in the polynomial functor), for each fixed number of leaves, there are
  infinitely many trees.  For the sake of comparison with \ref{BK-DSE}, it is
  interesting to tweak this functor a little bit, to avoid this infinity:
\end{blanko}
  
\begin{blanko}{Example: stable planar trees.}\label{ex:stable-planar}
  Consider instead the polynomial functor 
  $$P(X) = X^2 + X^3 +X^4 + \cdots$$
  for which $P$-trees are {\em stable} planar trees, meaning without nullary or 
  unary nodes.  The exclusion of nullary and 
  unary nodes implies that now for each fixed number $k$ there are only
  finitely many trees with $k$ leaves. 
  These are the 
  Hipparchus--Schr\"oder numbers, $1, 1, 3, 11, 45, 197, 903,\ldots$
  Here are pictures of all the stable
  planar trees with up to $4$ leaves:
   
    \begin{equation}\label{eq:stable-planar-trees}\begin{texdraw}
      \setunitscale 0.8
    \move (0 0) 
    \bsegment
      \rlvec (0 20) 
    \esegment
    
    \move (80 0) 
    \bsegment
      \move (0 0)
      \lvec (0 10) \onedot \rlvec (-5 10)
      \move (0 10) \rlvec (5 10)
    \esegment
    
    \move (160 -5)   
    \bsegment
      \move (0 0)
          \bsegment
	    \move (0 0)
	    \lvec (0 10) \onedot \rlvec (-10 20)
	    \move (0 10) \rlvec (10 20)
	    \move (-5 20) \onedot \rlvec (5 10)
	  \esegment

	  \move (35 0)
	  \bsegment
	    \move (0 0)
	    \lvec (0 10) \onedot \rlvec (-10 20)
	    \move (0 10) \rlvec (10 20)
	    \move (5 20) \onedot \rlvec (-5 10)
	  \esegment
	  \move (70 0)
	  \bsegment
	    \move (0 4)
	    \lvec (0 14) \onedot \rlvec (-10 12)
	    \move (0 14) \rlvec (10 12)
	    \move (0 14) \rlvec (0 14)
	  \esegment
    \esegment
    
    \move (0 -60)
    \bsegment
      \move (0 0) 
      \bsegment
	\move (0 0) 
	\lvec (0 10) \onedot \rlvec (-15 30)
	\move (0 10) \rlvec (15 30)
	\move (-5 20) \onedot \rlvec (10 20)
	\move (-10 30) \onedot \rlvec (5 10)
      \esegment
     
      \move (40 0) 
      \bsegment
	\move (0 0) 
	\lvec (0 10) \onedot \rlvec (-15 30)
	\move (0 10) \rlvec (15 30)
	\move (-5 20) \onedot \rlvec (10 20)
	\move (0 30) \onedot \rlvec (-5 10)
      \esegment
   
      \move (80 0) 
      \bsegment
	\move (0 0) 
	\lvec (0 10) \onedot \rlvec (-15 30)
	\move (0 10) \rlvec (15 30)
	\move (-10 30) \onedot \rlvec (5 10)
	\move (10 30) \onedot \rlvec (-5 10)
      \esegment
     
      \move (120 0) 
      \bsegment
	\move (0 0) 
	\lvec (0 10) \onedot \rlvec (-15 30)
	\move (0 10) \rlvec (15 30)
	\move (5 20) \onedot \rlvec (-10 20)
	\move (0 30) \onedot \rlvec (5 10)
      \esegment
       
      \move (160 0) 
      \bsegment
	\move (0 0) 
	\lvec (0 10) \onedot \rlvec (-15 30)
	\move (0 10) \rlvec (15 30)
	\move (5 20) \onedot \rlvec (-10 20)
	\move (10 30) \onedot \rlvec (-5 10)
      \esegment

      \move (200 0) 
      \bsegment
	\move (0 5) 
	\lvec (0 15) \onedot \rlvec (15 22)
	\move (0 15) \lvec (-5 25) 
	\onedot \rlvec (-10 12)
	\move (-5 25) \rlvec (10 12)
	\move (-5 25) \rlvec (0 12)
      \esegment

      \move (240 0) 
      \bsegment
	\move (0 5) 
	\lvec (0 15) \onedot \rlvec (-15 22)
	\move (0 15) \lvec (5 25) 
	\onedot \rlvec (-10 12)
	\move (5 25) \rlvec (10 12)
	\move (5 25) \rlvec (0 12)
      \esegment

      \move (280 0) 
      \bsegment
	\move (0 5) 
	\lvec (0 15) \onedot \lvec (-8 25) \onedot
	\rlvec (-5 12) \move (-8 25)
	\rlvec (5 12)
	\move (0 15) \lvec (6 37) 
	\move (0 15) \lvec (17 37) 
      \esegment
      
      \move (320 0) 
      \bsegment
	\move (0 5) 
	\lvec (0 15) \onedot \lvec (-14 37)
	\move (0 15) \lvec (14 37)
	\move (0 15) \lvec (0 25) \onedot
	\rlvec (-5 12) \move (0 25)
	\rlvec (5 12)
      \esegment
      
      \move (360 0) 
      \bsegment
	\move (0 5) 
	\lvec (0 15) \onedot \lvec (8 25) \onedot
	\rlvec (5 12) \move (8 25)
	\rlvec (-5 12)
	\move (0 15) \lvec (-6 37) 
	\move (0 15) \lvec (-17 37) 
      \esegment

      \move (400 0) 
      \bsegment
	\move (0 8) 
	\lvec (0 18) \onedot 
	\lvec (-16 30)
	\move ( 0 18) \lvec (-5 33)
	\move ( 0 18) \lvec (5 33)
	\move ( 0 18) \lvec (16 30)
      \esegment

    \esegment
    
    \setunitscale 1
  \end{texdraw}\end{equation}  
  Sorting by the number of leaves and
  taking core yields precisely the 
  combinatorial trees found in the solution to the Dyson--Schwinger equation
  \eqref{ex:infinite-DSE}, and again the coefficients $c_k$ in the solution
  are precisely the numbers of trees with $k+1$ leaves and with a given core.
\end{blanko}

\begin{blanko}{Multiple $B_+$-operators.}
  In the previous two examples, the structure bijection is constituted by
  maps $X \leftarrow 1$ and
  $$
  X \leftarrow X^n
  $$
  for all $n$ (or all $n\geq2$).  What each of these maps does is to take
  an $n$-tuple of planar trees and return a single tree obtained by grafting
  all these trees onto an $n$-ary one-node planar tree.  A first
  $B_+$-operator interpretation of the fixpoint equation
  in this case is therefore that there is one
  $B_+$-operator for each $n$, 
  corresponding to grafting onto an $n$-ary node,  
  and that each of these operators only accepts
  a fixed-size tuple of trees as input, namely an $n$-tuple.
  In Equation~\eqref{ex:infinite-DSE},
  $$
  X = 1 + \sum_{n\geq 1} \alpha^n \; B_+ (X^{n+1}) ,
  $$  
  there is reference to only one $B_+$-operator, but there is nevertheless some
  distinction implicit in the equation, because the coupling constant $\alpha$
  keeps track of how many input trees the operator takes: the $n$th power of
  $\alpha$ corresponds to the $B_+$-operator with $n+1$ inputs,
  and hence altogether, in the solution, power $\alpha^n$ corresponds to
  trees with $n+1$ leaves.
\end{blanko}

\begin{blanko}{Bialgebra of $P$-trees ($1$-variable version).}\label{treebialg1var}
  For any finitary polynomial endofunctor $P$, the set of (isoclasses of)
  $P$-trees form a Connes--Kreimer-style bialgebra, first studied in~\cite{Kock:1109.5785}
   and subsumed into the general framework of incidence coalgebras of 
   decomposition spaces in \cite{Galvez-Kock-Tonks:1512.07573}; see 
   \cite{Galvez-Kock-Tonks:1612.09225} for detailed discussion.
   The idea is the same as for the Connes--Kreimer Hopf algebra, to sum over 
   cuts, but there is the important difference that
   when cutting edges, they are really cut, not removed, as
   exemplified by
  
  \vspace{-4pt}
  
  \begin{center}\begin{texdraw}
    \htext (0 0) {$\Delta($}
   \rmove (12 0) \twotree
     \rmove (15 0)
    \htext  {$) \ \ = $}
        \rmove (23 0) \zerotree
    \rmove (3.5 0) \zerotree
    \rmove (3.5 0) \zerotree

          \rmove (11 0) \htext {$\tensor$}
   \rmove (14 0) \twotree

	            \rmove (18 0) \htext {$+$}

    \rmove (20 0) \onetree  \rmove (5 0) \zerotree 
    
    \rmove (12 0) \htext {$\tensor$}
   \rmove (12 0) \onetree
   
      \rmove (19 0) \htext {$+$}

    \rmove (21 0) \twotree
    
    \rmove (13 0) \htext {$\tensor$}
   \rmove (10 0) \zerotree
  \end{texdraw}\end{center}
  This is an essential point, as otherwise the decorations would be spoiled:
  removing an edge instead of cutting it would break the given arity bijections
  in the decorations (not rendered in the drawing).  The counit $\varepsilon$
  sends nodeless trees to $1$, and all other trees to $0$.  Note that
  this bialgebra is not connected: the degree-zero piece is spanned by the
  nodeless trees and forests.
\end{blanko}

Taking core constitutes a bialgebra homomorphism from the bialgebra of $P$-trees
to the Connes--Kreimer Hopf algebra (cf.~\ref{core-bialghomo} for this result in
the general setting).

\begin{blanko}{$B_+$-operators inside the bialgebra of $P$-trees.}
  In the case of planar (and stable planar) trees we already saw
  how the structure bijection of the solution to the fixpoint equation
  can be interpreted in terms of $B_+$-operators.  This generalises to
  arbitrary polynomial endofunctors $P$, yielding linear operators on
  the bialgebra of $P$-trees.
  Namely, for each $b\in B$ there is a $B_+$-operator corresponding to 
  the $b$-summand of the structure map
  $$
  X \isopilback 1+\sum_{b\in B} X^{E_b} ;
  $$
  this map takes an $E_b$-tuple of $P$-trees and associates to it
  a single $P$-tree obtained by grafting all those $P$-trees onto
  the $P$-corolla labelled by $b$.  Each of these extends to a linear map.
  All these maps are supplemented by the map $X \leftarrow 1$ which 
  singles out the trivial $P$-tree. 
  
  The very bijection altogether says that every non-trivial $P$-tree is in the
  image of (precisely) one $B_+$-operator.  This fact is important in Quantum
  Field Theory as it ensures locality of counter terms
  (Bergbauer--Kreimer~\cite{Bergbauer-Kreimer:0506190}, \S1.6).  Bergbauer and
  Kreimer derive this fact from the $1$-cocycle condition of the
  $B_+$-operators.  In the present setting it is a
  direct consequence already of the fixpoint property.  In fact, at the natural
  $\Set$-level, the $B_+$-operators are {\em not} $1$-cocycles, as we shall see
  in \ref{notHH}.
 
  Some further subtleties related to symmetries deserve attention here.
  The input to the $B_+$-operator should be an abstract forest (a $P$-forest, 
  of course), not an $E_b$-indexed family of $P$-trees.  The final version of
  the $B_+$-operator takes an abstract $P$-forest $F$ and turns it into an 
  $E_b$-indexed family of $P$-trees in all possible ways, compatible with the 
  colours (see \ref{B+} for precision).  This introduces a factor
  $$
  \Bij(\operatorname{leaves}(b), \operatorname{roots}(F)) .
  $$
  But the group $\Aut(b)$ acts naturally (and freely) on this set, and it is the
  quotient of this action that matters, as we are talking isoclasses of trees at
  this point.  In the case of naked trees (which will be covered by the notion 
  of $P$-tree once we upgrade to groupoids~(\ref{ex:naked})), the automorphism group of an
  $n$-corolla $b$ is the full symmetry group $\mathfrak S_n$, so the action is
  also transitive in this case, and altogether the quotient is a singleton set,
  so no symmetry factors arise in this case.  In the case of planar trees, 
  $\Aut(b)$ is trivial and there would always be a factor $n!$.  In the 
  literature (notably Foissy~\cite{Foissy:0707.1204}), this is sometimes circumvented
  by considering a noncommutative algebra, consisting of {\em planar} forests 
  (i.e.~lists of planar trees).  In quantum field theory, the symmetry issues
  are considerably more complicated and cannot be bypassed, as is well known
  (see for example \cite{Kreimer:0509135} for detailed discussion and example 
  computations).
  We shall see that once the polynomial machinery is upgraded to the groupoid
  setting, all symmetry issues take care of themselves completely transparently.
\end{blanko}

\begin{blanko}{Green functions.}
  The abstract Dyson--Schwinger equation at the $\Set$-level can now 
  be internalised to the bialgebra.  While the set solution has its own
  canonical structure (since each element in the set is a tree, with the 
  various attributes that can be associated to and read off a tree), 
  in the algebraic situation inside the bialgebra it is convenient to
  control these attributes by a formal parameter, the coupling 
  constant $\alpha$, which serves as a bookkeeping device to keep track of the
  number of leaves.
  
  The bialgebra has two natural gradings: one by the number of nodes and one
  by the number of leaves minus the number of roots.  Since every tree
  is inductively constructed by grafting to corollas, each step given by a 
  $B_+$-operator, we attach the power $\alpha^n$ to the corolla with $n+1$ 
  leaves, and hence altogether, trees or forests whose number of leaves minus
  number of roots equals $n$.  Note that a priori this allows for exponent
  $-1$ on trees (corresponding to trees without leaves), and arbitrary
  negative exponents on forests.  (It is not unreasonable in practice to 
  restrict to working
  with a $P$ without nullary operations so as to obtain a non-negative grading.)
  
  While in the traditional setting, as outlined in Section~\ref{sec:DSE}, the
  solution to the combinatorial Dyson--Schwinger equation is a series calculated
  inside a given Hopf algebra, at the natural $\Set$-level, the ambient
  bialgebra is designed specifically around the equation, and for this reason
  the solution inside it is tautological: the {\em Green function} $G$, which
  solves the internalised Dyson--Schwinger equation, is simply the formal series
  with one term for each generating element in the bialgebra, that is, each
  $P$-tree.  (At the $\Set$-level there are no symmetries, but we shall see in
  the groupoid case (Section~\ref{sec:groupoid-Green}) that symmetry factors
  naturally arise.)
  
  For very general reasons, expressible at the objective level, this
  Green function $G$ satisfies the Fa\`a di Bruno 
  formula~\cite{GalvezCarrillo-Kock-Tonks:1207.6404}:
  $G$ can be written as a sum 
  $$
  G = \sum_{n\geq 0} g_n,
  $$
  where $g_n$ is the sum of all trees with $n$ leaves,
  and now there is a Fa\`a di Bruno formula
  $$
  \Delta(G) = \sum_{n\geq 0} G^n \tensor g_n
  $$
  in the style of van Suijlekom~\cite{vanSuijlekom:0807}.  The point is that 
  the exponent $n$ on the left-hand tensor factor 
  counts $n$ trees, each with a root, precisely matching the 
  subscript $n$ in the right-hand tensor factor, which is the number of leaves in the 
  trees in $g_n$.  This kind of information cannot be seen in at the level of 
  combinatorial trees.
  
  (Calling this formula the Fa\`a di Bruno formula is justified by the fact that
  in the classical Fa\`a di Bruno bialgebra, dual to composition of power 
  series, the formula is equivalent to the classical Fa\`a di Bruno formula in 
  terms of Bell numbers.  This was discovered by Brouder, Frabetti and 
  Krattenthaler~\cite{Brouder-Frabetti-Krattenthaler:0406117}.  A very abstract
  version of the formula which englobes both the classical case and the case of
  trees \cite{GalvezCarrillo-Kock-Tonks:1207.6404} is established in 
  \cite{Kock-Weber:1609.03276}.)

  The formula shows that the homogeneous components $g_n$ span a sub-bialgebra
  isomorphic to the Fa\`a di Bruno bialgebra. We shall come back to these
  results (proved in \cite{GalvezCarrillo-Kock-Tonks:1207.6404}) in the groupoid setting
  of Section~\ref{sec:groupoid-Green}.
\end{blanko}

%%%%%%%%%%%%%%%%%%%%%%%%%%%%%%%%%%%%%%%%%%%%%%%%%%
\section{Polynomial functors in many variables}
%%%%%%%%%%%%%%%%%%%%%%%%%%%%%%%%%%%%%%%%%%%%%%%%%%
\label{sec:many}

\begin{blanko}{Polynomial functors in many variables.}
  We now pass to the multivariate case.  For purely mathematical reasons this is
  important because it allows trees themselves to be treated on equal footing
  with polynomial functors: trees {\em are} themselves polynomial functors in
  many variables, as we shall see in \ref{polytree-def}.  Importance comes also
  from the motivation in quantum field theory, since many-variable polynomial 
  functors $P$ are needed for the interpretation of Feynman graphs as $P$-trees,
  discussed in Section~\ref{sec:trees-graphs}.
  
  In the many-variable case, instead of just one variable input set $X$,
  we have an
  $I$-indexed family of input sets, and the output is not a single set but a
  $J$-indexed family of sets.  For the present purposes, we will always have
  $I=J$, so that the functor is still an endofunctor, but now on slices.
  An $I$-indexed family of sets $(X_i \mid i\in I)$ can conveniently be
  encoded as a single map of sets $f:X\to I$.  Then an individual family member
  $X_i$ is given by the fibre $X_i = f^{-1}(i)$.  The appropriate notion of
  morphism between $I$-indexed families of sets is given by the {\em slice
  category} $\Set_{/I}$: its objects are maps $X \to I$, and its morphisms are
  commutative triangles
  $$
  \xymatrix@C=3ex{
  X \ar[rr] \ar[rd] &  & Y \ar[ld] \\
  &I .&
  }$$

  Following \cite{Gambino-Kock:0906.4931}, a {\em polynomial} is a diagram of sets
  \begin{equation}\label{IEBJ}
  I \stackrel s \longleftarrow E \stackrel p \longrightarrow B \stackrel t 
  \longrightarrow J ,
  \end{equation}
  and
  the associated {\em polynomial functor} is given by the composite
  \begin{equation}\label{poly}
  \Set_{/I} \overset{s\upperstar}\longrightarrow
  \Set_{/E} \overset{p\lowerstar}\longrightarrow
  \Set_{/B} \overset{t\lowershriek}\longrightarrow
  \Set_{/J} \,,
  \end{equation}
  where $s\upperstar$ is pullback along $s$,
  $p\lowerstar$ is the right adjoint
  to pullback, 
  and $t\lowershriek$ is left adjoint to 
  pullback.
  For a map $f:B\to A$ we have the three explicit formulae
  \begin{align}
  f\upperstar (X_a\mid a\in A) &= ( X_{f(b)}\mid b\in B)
    \label{pbk}\\[3pt]
  f\lowershriek (Y_b \mid b\in B) &= (\underset{b\in B_a}{\textstyle{\sum}} Y_b \mid a\in A)
    \label{sum}\\
  f\lowerstar  (Y_b \mid b\in B) &= (\underset{b\in B_a}{\textstyle{\prod}} Y_b \mid a\in A) \,,
    \label{prod}
  \end{align}
  giving altogether the following formula for \eqref{poly}
  $$
  (X_i \mid i\in I) \longmapsto (\sum_{b\in B_j} \prod_{e\in E_b} X_{s(e)}  \mid 
  j\in J) ,
  $$
  which specialises to \eqref{eq:poly1var} when $I=J=1$.
\end{blanko}

\begin{blanko}{Many-variable fixpoint equations.}
  We can now formulate polynomial fixpoint equations in slice categories.
  For $P: \Set_{/I} \to \Set_{/I}$ a finitary polynomial endofunctor, given by
  $I \leftarrow E \to B \to I$, we consider the equation
  $$
  X \isopilback 1 + P(X) .
  $$
  The solution will be an object $X$ in the slice category $\Set_{/I}$, that is, 
  an $I$-indexed family of sets.  The symbol $1$ now denotes the terminal object
  in this slice category, which is the trivial family $I \to I$, the
  $I$-indexed family of singletons.  The general category theory that ensures the 
  existence of initial algebras in sets, also ensures the existence of initial 
  algebras in slices of $\Set$, and again the solution consists of (isoclasses 
  of) $P$-trees --- which notion needs refinement, though:
  corresponding to the fact
  that there are now many variables, indexed by the set $I$, there will now also
  be $I$-decorations on the edges.  A $P$-tree is now a tree whose edges are
  decorated by elements in $I$, and whose nodes are decorated by elements in
  $B$.  These decorations are subject to various compatibility constraints: if
  $b\in B$ decorates a node $x$, then the outgoing edge of $x$ must be decorated
  by $t(b)$; furthermore, as in the $1$-variable case, a bijection must be
  specified between the incoming edges of $x$ and the set $E_b$, and this
  bijection must be compatible with edge decorations in the sense that if an
  incoming edge corresponds to an element $e\in E_b$ then the decoration of that
  edge must be $s(e)$.
  
  While this may seem complicated at first look, there are plenty of natural
  examples of this, one of them being Feynman graphs, as we shall see in 
  Section~\ref{sec:trees-graphs}.  It is all greatly clarified by the following
  formalism.
\end{blanko}

\begin{blanko}{Trees.}\label{polytree-def}
  It was observed in \cite{Kock:0807} that operadic trees can be
  conveniently encoded by diagrams of the same shape as polynomial functors. 
  By definition, a
  {\em (finite rooted) tree} is a diagram of finite sets
\begin{equation}\label{tree}
\xymatrix{
    A & \ar[l]_s  M  \ar[r]^p & N  \ar[r]^t & A
}
\end{equation}
satisfying the following three conditions:
  
  (1) $t$ is injective
  
  (2) $s$ is injective with singleton complement (called the {\em 
  root} and denoted $1$).
  
  \noindent With $A=1+M$, 
  define the walk-to-the-root function
  $\sigma: A \to A$ by $1\mapsto 1$ and $e\mapsto t(p(e))$ for
  $e\in M$. 
  
  (3)  $\forall x\in A : \exists k\in \N : \sigma^{k}(x)=1$.
  
  The elements of $A$ are called {\em edges}.  The elements of $N$
  are called {\em nodes}.  For $b\in N$, the edge $t(b)$ is called
  the {\em output edge} of the node.  That $t$ is injective is just to
  say that each edge is the output edge of at most one node.  For
  $b\in N$, the elements of the fibre $M_b:= p^{-1}(b)$ are
  called {\em input edges} of $b$.  Hence the whole set
  $M=\sum_{b\in N} M_b$ can be thought of as the set of
  nodes-with-a-marked-input-edge, i.e.~pairs $(b,e)$ where $b$ is a
  node and $e$ is an input edge of $b$.  The map $s$ returns the
  marked edge.  Condition (2) says that every edge is the input edge
  of a unique node, except the root edge.
  Condition (3) says that if you walk towards the root, in a finite 
  number of steps you arrive there.
  The edges not in the image of $t$ are called {\em leaves}.
    
  The tree
  $$
  1 \leftarrow 0 \to 0 \to 1
  $$
  is the {\em trivial tree} \inlineDotlessTree.
\end{blanko}

\begin{blanko}{$P$-trees.}\label{Ptree}
  A great benefit of having trees and polynomials on the same footing is the
  efficiency in encoding and manipulating decorations of trees \cite{Kock:0807} (see also
  \cite{Kock:1109.5785,Kock:MFPS28,Kock:graphs-and-trees,Kock-Joyal-Batanin-Mascari:0706}).
  With a polynomial endofunctor $P$ fixed, given by a diagram
  $I \leftarrow E \to B \to I$,
  a {\em $P$-tree} is  by definition a diagram
  $$    \xymatrix{
    A \ar[d]_\alpha& \ar[l]  M\ar[d] \drpullback \ar[r] & N\ar[d] \ar[r] & 
    A\ar[d]^\alpha \\
    I  &\ar[l] E\ar[r] & B \ar[r]  &I ,
  }$$
  where the top row is a tree.  Hence nodes are decorated by elements in $B$,
  and edges are decorated by elements in $I$.  That the middle square is a
  pullback expresses that $n$-ary nodes of the tree have to be decorated by
  $n$-ary operations, and that a specific bijection is given.  When $P$ is a
  one-variable polynomial functor over $\Set$, this notion of $P$-tree
  specialises to the naive notion of $P$-trees from \ref{P-trees-naive}.
\end{blanko}

\begin{blanko}{Examples of $P$-trees.}
  Natural numbers are $P$-trees for the identity monad $P(X)=X$, and are also
  the set of operations of the list monad.  Planar trees are $P$-trees for $P$
  the list monad, and are also the set of operations of the
  free-non-symmetric-operad monad~\cite{Leinster:0305049}.  These two examples
  are the first entries of a canonical sequence of inductive data types
  underlying several approaches to higher category theory, the {\em opetopes}:
  opetopes in dimension $n$ are $P$-trees for $P$ a $\Set$-polynomial functor
  whose operations are $(n-1)$-opetopes~\cite{Kock-Joyal-Batanin-Mascari:0706};
  hence opetopes are higher-dimensional trees.

  Abstract trees cannot be realised as $P$-trees for any $\Set$-polynomial $P$.
  To realise  abstract naked trees as $P$-trees we need an endofunctor $P$ over 
  groupoids, as we now proceed to introduce.
\end{blanko}

%%%%%%%%%%%%%%%%%%%%%%%%%%%%%%%%%%%%%%%%%%%%%%%%%%
\section{Groupoids}
%%%%%%%%%%%%%%%%%%%%%%%%%%%%%%%%%%%%%%%%%%%%%%%%%%

\label{sec:groupoids}

While the theory of polynomial functors over sets is nice and useful for
dealing with trees, it comes short in
capturing important examples, like for example abstract (i.e.~non-planar) trees.
These ought to be $P$-trees for $P$ the terminal polynomial functor, which 
exists over groupoids but not over sets.
The case of particular interest in the present context
is that of
Feynman graphs: dues to their symmetries, these can be encoded as $P$-trees
over groupoids but not over sets, as explained in Section~\ref{sec:trees-graphs}.

It is thus necessary to upgrade the theory to groupoids.  With today's insight
into higher category theory, this upgrade is rather smooth: everything works
exactly in the same way as for sets, except that all notions have to be taken in
the homotopy sense: homotopy slices, homotopy pullbacks, homotopy adjoints,
homotopy quotients, homotopy cardinality, etc.  The reader unacquainted with
this machinery should not worry: it is legitimate to think of all the
constructions as taking place in the category of sets, but only bear in mind
that all problems with symmetries that would require special attention in the
category of sets, are {\em automatically} taken care of by the homotopy
formalism of groupoids.  We recall some basic facts about groupoids.
  
\begin{blanko}{Groupoids.}
  A {\em groupoid} is a category in which all arrows are invertible.  A {\em
  morphism} of groupoids is a functor, and we shall also need their natural
  transformations.  A useful intuition for the present purposes is that
  groupoids are `sets fattened with symmetries'.  From the correct homotopical
  viewpoint groupoids behave very much like sets.
  The homotopy viewpoint comes from the fact that groupoids are models for
  homotopy $1$-types, i.e.~topological spaces whose homotopy groups $\pi_k(X)$
  vanish for $k>1$. 

  A set can be considered a groupoid in which the only arrows are the identity arrows.
This defines a functor
$$
D: \Set \to \Grpd.
$$
Conversely, a groupoid $X$ gives rise to a set by taking its set of connected
components, i.e.~the set of isomorphism classes in $X$, denoted $\pi_0(X)$;
this defines a functor in the other direction (the left adjoint of $D$)
$$
\pi_0 : \Grpd \to \Set.
$$
Many sets arising in Combinatorics and Physics are actually $\pi_0$ of 
a groupoid, like when we say `the set of all trees' to mean the set of 
isomorphism classes of trees.

A group can be considered as a groupoid with only one object; the group 
elements being then the arrows. Conversely, for
each object $x$ in a groupoid $X$ there is associated a group, the {\em vertex
group}, denoted $\pi_1(x)$ or $\Aut(x)$, which consists of all the arrows from
$x$ to itself.

The homotopy notations $\pi_0$ and $\pi_1$ match their counterparts in Topology.
\end{blanko}

\begin{blanko}{Equivalences of groupoids; discreteness and contractibility.}
  An {\em equivalence} of groupoids is just an equivalence of categories, i.e.~a
  functor admitting a pseudo-inverse.  Pseudo-inverse means that the two
  composites are not necessarily exactly the identity functors, but are only
  required to be isomorphic to the identity functors.  A morphism of groupoids
  is an equivalence if and only if it induces a bijection on $\pi_0$, and an
  isomorphism at the level of $\pi_1$.  This is the analogue of a homotopy
  equivalence in Topology.

  A groupoid $X$ is called {\em discrete} if it is equivalent to a set considered as 
  a groupoid; this set can then be taken to be $\pi_0(X)$.
  Another way of saying the same is that all vertex groups are
  trivial: $\pi_1(x)=1$ for all objects $x\in X$, so all the information is 
  stored in $\pi_0$.
  (There is a potential
  risk of confusion with the word `discrete': in settings where one considers
  Lie groupoids (as in \cite{Connes:1994}), the word `discrete' usually designates 
  groupoids whose underlying topological space is discrete.)

  A groupoid is called {\em contractible} if it is equivalent to a singleton set.

  We are interested in groupoids up to equivalence, and for
  this reason many familiar $1$-categorical notions, such as pullback and fibre,
  are not appropriate, as they are not invariant under equivalence.  The good
  notions are the corresponding {\em homotopy} notions, which we briefly recall.
\end{blanko}

\begin{blanko}{Pullbacks and fibres.}
  Given a diagram of groupoids $X,Y,S$ indicated by the solid arrows,
  $$
  \xymatrix{
  X\times_S Y \drpullback \ar@{-->}[r]\ar@{-->}[d]
  &Y\ar[d]^-{g}\\X\ar[r]_-{f}&S
  }
  $$
  the {\em homotopy pullback} is
  the groupoid $X\times_S Y$ whose objects 
  are triples
  $(x,y,\phi)$
  with $x \in X$, $y\in Y$ and  $\phi:fx\to gy$ an arrow of $S$, and whose
  arrows are pairs $(\alpha,\beta):(x,y,\phi)\to(x',y',\phi')$ consisting
  of
  $\alpha: x \to x'$ an arrow in $X$ and $\beta: y \to y'$ an arrow in $Y$
  such that the following diagram commutes in $S$:
  $$
  \xymatrix{
  fx\ar[r]^-\phi\ar[d]_-{f(\alpha)}&gy\ar[d]^-{g(\beta)}\\
  fx'\ar[r]_-{\phi'}&gy'.
  }
  $$
  The homotopy pullback is an example of a homotopy limit, and as such enjoys a
  universal property analogous to that of ordinary pullbacks of sets.
  From now on, {\em pullback} means homotopy pullback, indicated in the square
  with an angle symbol.

  The \emph{homotopy fibre} $E_b$ of a morphism $p:E\to B$ over an object $b$ in $B$ is 
  the homotopy pullback of $p$ along the inclusion map
  $\xymatrix{1\ar[r]^-{\name b}&B}$ that sends the unique point in the 
  trivial groupoid to $b$:
  $$
  \xymatrix{
  E_b\drpullback\ar[r]\ar[d]
  &E\ar[d]^-{p}\\1\ar[r]_-{\name b}&B.
  }
  $$
  (Note that the homotopy fibre $E_b$ is not in general a
  subgroupoid of $E$, although the map $E_b \to E$ is always faithful.)
\end{blanko}

\footnotesize

\begin{blanko}{Homotopy pullbacks are common in Quantum Field Theory.}
  Say we want to substitute a graph $H\in \kat{Gr}$ (graphs) into the marked vertex $v$ of 
  another graph $G\in \kat{Gr}\upperstar$ (graphs with a marked vertex).
  For this, it is clearly necessary that the set of external lines 
  of $H$ (the residue of $H$) matches the set of lines incident to the vertex 
  $v\in \kat{Star}$ (connected graphs without internal lines).  Obviously the
  two sets are not exactly the same, so what is really meant is that there is
  a bijection between the set of external lines of $H$ and the set of lines 
  incident to
  $v$, and it is also necessary to know {\em which} bijection is used.  
  Altogether, we are precisely considering an element of the homotopy pullback
  $$\xymatrix{
      \drpullback\ar[r]\ar[d] & \kat{Gr}\upperstar \ar[d]^{\text{return the marked vertex}} \\
     \kat{Gr} \ar[r]_{\res} & \kat{Star}
  }$$
\end{blanko}

\normalsize

\begin{blanko}{Homotopy quotients.}
  Whenever a group $G$ acts on a set or a groupoid $X$, the
  {\em homotopy quotient} $X/G$ is the groupoid obtained by gluing in a path
  (i.e.~an arrow)
  between $x$ and $y$ for each $g\in G$ such that $xg=y$.
  It is a special case of a homotopy colimit. 
  (The notation $X/\!/G$ is often used~\cite{Baez-Dolan:finset-feynman}.) 
  If $G$ acts on the point groupoid $1$, then 
  $1/G$ is the groupoid with one object and vertex group $G$.

  If $p:E\to B$ is a morphism of groupoids, for $b\in B$ the
  `inclusion' of the homotopy fibre $E_b \to E$ is faithful but not full in general
  (see \cite{Leinster:BCT} for terminology).  But
  $\Aut(b)$ acts on $E_b$ canonically,
  \begin{eqnarray*}
    E_b \times \Aut(b) & \longrightarrow & E_b  \\
    ((x,\phi), \beta) & \longmapsto & (x,\beta\circ \phi) ,
  \end{eqnarray*}
  and the homotopy quotient $ E_b /\Aut(b) $
  provides exactly the missing arrows, so as to make the natural map $E_b
  /\Aut(b) \to E$ fully faithful.  Since every object $x\in E$ must map to some
  connected component of $B$, we find the equivalence
\begin{equation}\label{hosum}
E \ \simeq
\sum_{b\in \pi_0 B} E_b/\Aut(b) \ =: \ \int^{b\in B} E_b,
\end{equation}
expressing $E$ as the {\em homotopy sum} of the fibres.
  While an ordinary sum (disjoint union) is a colimit indexed by a set, a
  homotopy sum is a (homotopy) colimit indexed over a groupoid.
  
Our main use of such `fibrations' $p:E\to B$ 
is to deliver a family of groupoids, namely 
the fibres $E_b$.  To reconstruct the fibration from the family of groupoids, it is
necessary to know how they are glued together, which amounts to having actions
of the vertex groups of the base on the fibres.  We will often let these actions
be implicit, and specify a family as a collection of groupoids, like this:
$$
(E_b \mid b\in B).
$$
\end{blanko}

\footnotesize

\begin{blanko}{Homotopy sums are common in Quantum Field Theory.}
  The characterisation \eqref{hosum} shows that the homotopy sum can be 
  calculated as sum over the set of isoclasses, with each summand
  being quotiented by the natural action of the 
  automorphism group:
  $$
  \int^{b\in B} E_b = \sum_{b\in \pi_0 B} \frac{E_b}{\Aut(b)}
  $$
  Summing over isoclasses of objects and dividing out by symmetry factors is
  common in Physics and Combinatorics.  Virtually always this can be explained
  as a homotopy sum of groupoids.  See
  \cite{GalvezCarrillo-Kock-Tonks:1207.6404} for further exploitation of this
  viewpoint.
\end{blanko}

\normalsize

%%%%%%%%%%%%%%%%%%%%%%%%%%%%%%%%%%%%%%%%%%%%%%%%%%
\section{Polynomial functors over groupoids, and $P$-trees}
%%%%%%%%%%%%%%%%%%%%%%%%%%%%%%%%%%%%%%%%%%%%%%%%%%

\label{sec:P}

\begin{blanko}{Slices.}
  If $I$ is a groupoid, the {\em homotopy slice} $\Grpd_{/I}$ is the
  category  whose 
  objects are maps $X \to I$ and whose arrows are
  triangles with a $2$-cell
  $$
  \xymatrix@C=3ex{
  X \ar[rr] \ar[rd] & \ar@{}[d]|{\Rightarrow} & Y \ar[ld] \\
  &I .&
  }$$
  (In reality the groupoid slice should be construed as a $2$-category, in order
  for the following universal constructions to be correct.  In the present
  exposition we gloss over these subtleties.)
\end{blanko}

\begin{blanko}{Basic adjoints between slices.}\label{adj}
  Taking homotopy pullback along a morphism of groupoids $f:B\to A$ defines a
  functor between the slice categories
  $$
  f\upperstar 
  :\Grpd_{/A}\to \Grpd_{/B} ,
  $$
  which in family notation  is given by the formula
  $$
  f\upperstar(X_a\mid a\in A) \ = \ (X_{f(b)} \mid b\in B) ,
  $$
  completely analogous to the $\Set$ case (Formula~\eqref{pbk}).
  
This has a homotopy left adjoint
$$f\lowershriek  :\Grpd_{/B}\to \Grpd_{/A}$$ 
given in family notation by
$$
f\lowershriek(Y_b \mid b\in B) \ = \ (\textstyle{\int^{b\in B_a} Y_b} \mid a\in A) \,,
$$
just as Formula~\eqref{sum} in the $\Set$ case.

The pullback functor also has a homotopy right adjoint 
$$f\lowerstar  :\Grpd_{/B}\to \Grpd_{/A}.$$ 
  For $Y \to B$, the
  fibre of $f\lowerstar Y$ over $a\in A$ can be described explicitly as the
  mapping groupoid
  \vspace*{-4pt}
  $$
  (f\lowerstar Y)_a = \Map_{/B}( B_a , Y) .
  $$

(The homotopy adjoint properties are expressed by 
natural equivalences of mapping spaces
$\Map_{/A}(f\lowershriek Y,X)\simeq
\Map_{/B}(Y,f\upperstar X)$ and $\Map_{/B}(f\upperstar X,Y)\simeq
\Map_{/A}(X,f\lowerstar Y)$, but this will not actually be needed.)
\end{blanko}

\begin{blanko}{Sets versus groupoids.}
  It should be observed that all the fancy homotopy notions for groupoids
  actually specialise to the classical notions when the groupoid is discrete.
  For example, in the homotopy pullback, if $S$ is discrete, then there are 
  no non-trivial $\phi$, so the homotopy pullback reduces to the ordinary 
  pullback.  Similarly, the homotopy sum indexed by a discrete groupoid is
  precisely the ordinary sum (over a strictly discrete skeleton).
\end{blanko}

\begin{blanko}{Polynomial functors.}
  A {\em polynomial} is a diagram of groupoids
  $$
I \stackrel s \longleftarrow E \stackrel p \longrightarrow B \stackrel t 
\longrightarrow J .
$$
The associated {\em polynomial functor}
is given as the composite
$$
\Grpd_{/I} \overset{s\upperstar}\longrightarrow
\Grpd_{/E} \overset{p\lowerstar}\longrightarrow
\Grpd_{/B} \overset{t\lowershriek}\longrightarrow
\Grpd_{/J}  .
$$
 (Here of course we are talking about homotopy slices, and upperstar, lowerstar 
 and lowershriek refer to the adjunctions in \ref{adj}.)
  The intuition is that $B$ is a collection of operations, the
  arity of an operation $b\in B$ is (the size of) the fibre $E_b$, and that each
  operation is typed: the output type of $b$ is $t(b)$, and the input types are
  the $s(e)$ for $e\in E_b$.
  
  We shall be concerned only with the case where $p$ has finite discrete fibres.
  In this case, there is the following explicit formula for the polynomial 
  functor:
  $$
  (X_i \mid i\in I) \longmapsto (  \int^{b\in B_j} \prod_{e\in E_b} x_{se} \mid j\in J)  .
  $$
\end{blanko}

\begin{blanko}{Morphisms of trees (cf.~\cite{Kock:0807}).}\label{sub}
  A {\em tree embedding} is by definition a diagram (of sets)
  \begin{equation}
    \label{equ:cartmorphism}
  \xymatrix{
  A' \ar[d]_\alpha& \ar[l]  M'\ar[d] \drpullback \ar[r] & N'\ar[d] \ar[r] & A'\ar[d]^\alpha  \\
  A  &\ar[l] M\ar[r] & N \ar[r]  &A ,
  }
  \end{equation}
  where the rows are trees.
  The terminology is justified by the fact that each of
  the components of such a map is necessarily injective; this follows from the 
  tree axioms \cite{Kock:0807}.  Hence the category of
  trees and tree embeddings, denoted $\TEmb$, is mostly concerned with subtrees,
  but note that it also contains automorphisms of trees.
  The fact that the middle square is a pullback amounts to having,
  for each node $b$ of the first tree, a bijection between
  the incoming edges of $b$ and the incoming edges of the image of $b$.
  In other words, a tree embedding is arity preserving.

  A tree embedding is {\em root preserving} when it sends the root to the root.
  In formal terms, these are diagrams \eqref{equ:cartmorphism}
  such that also the left-hand square is cartesian \cite{Kock:0807}.
  
  \label{ideal}
  An {\em ideal embedding} (or an {\em ideal subtree}) is a subtree $S$
  in which for every edge $e$, all the descendant edges and nodes of $e$ are also in $S$.
  There is one ideal subtree $D_x$
  generated by each edge $x$ in the tree.  The ideal embeddings are characterised
  as having also the right-hand square of \eqref{equ:cartmorphism} cartesian \cite{Kock:0807}.
  
  Ideal embeddings and root-preserving embeddings admit pushouts along each 
  other in the category $\TEmb$ \cite{Kock:0807}.
  The most interesting case is pushout over a 
  trivial tree: this trivial tree 
  is then the root of one tree and a leaf of another tree,
  and the pushout is the grafting onto that leaf.
\end{blanko}

\begin{blanko}{$P$-trees.}\label{PtreeGr}
  For $P$ a fixed finitary polynomial endofunctor defined by groupoids
  $$
  I \stackrel s\leftarrow E \stackrel p \to B \stackrel t \to I,
  $$
  as in the set case, a {\em $P$-tree} is a diagram
$$    \xymatrix{
  A \ar[d]_\alpha& \ar[l]  M\ar[d] \drpullback \ar[r] & N\ar[d] \ar[r] 
& A\ar[d]^\alpha \\
  I  &\ar[l] E\ar[r] & B \ar[r]  &I ,
  }$$
  where the top row is a tree, but now 
  the squares
  are commutative only up to isomorphism, and it is important that the isos 
  be specified as
  part of the structure.
  Unfolding the definition, we see that a $P$-tree is a
  tree whose edges are decorated in $I$, whose nodes are decorated
  in $B$, and with the additional structure of a bijection for each
  node $n \in N$ (with decoration $b \in B$) between the set of
  input edges of $n$ and the fibre $E_b$, subject to the
  compatibility condition that such an edge $e\in E_b$ has
  decoration $s(e)$, and the output edge of $n$ has decoration isomorphic
  to $t(b)$.
  
  (Note that the natural thing to say is `equivalence' instead of `bijection', but 
  under the assumption that the fibres of $p$ are discrete, we can say 
  bijection instead of equivalence, provided we understand that the discrete
  groupoid is strictly discrete (i.e. replace it by its $\pi_0$).)
  
  In Section~\ref{sec:trees-graphs} we shall see how to encode Feynman
  graphs as $P$-trees.  To properly account for symmetries of graphs,
  it is essential that the 
  representing diagram $I \leftarrow E \to B \to I$ be of groupoids, 
  not just sets.
\end{blanko}

\begin{blanko}{Remark.}
  Both the definition of tree embedding and the definition of $P$-tree
  are special cases of the notion of cartesian morphism of polynomial 
  endofunctors, as we shall see in \ref{cart}.
\end{blanko}

\begin{blanko}{Trivial trees.}
  A {\em trivial $P$-tree} is a $P$-tree without nodes, so it is just a single edge
  decorated by an element in $I$.  Denote by $\kat{Triv}$ the groupoid of 
  trivial trees, and denote by $\kat{Triv}\comma P$ the groupoid of trivial $P$-trees.
\end{blanko}

\begin{lemma}
  We have
  $$
  \kat{Triv}\comma P \simeq I .
  $$  
\end{lemma}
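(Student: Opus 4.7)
The plan is to unfold the definitions and exhibit an explicit equivalence of groupoids $\kat{Triv}\comma P \to I$ given by reading off the root decoration.

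First, I would write out what a trivial $P$-tree is. By definition, it is a $P$-tree whose underlying top row is the trivial tree \inlineDotlessTree, namely $1 \leftarrow 0 \to 0 \to 1$. Substituting this into the general definition of $P$-tree in~\ref{PtreeGr}, only the outermost columns carry content: the pullback middle square is automatically satisfied (both top corners are empty), and the right-hand square is trivially commutative. The only data left is the edge-decoration map $\alpha : 1 \to I$, i.e.\ the choice of an object of $I$.

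Next I would identify the morphisms. A morphism of trivial $P$-trees is a morphism of the underlying trees in the sense of~\ref{sub} (necessarily the identity on \inlineDotlessTree, since the sets $M$ and $N$ are empty) together with the coherence $2$-cells making the outer columns commute up to specified isomorphism. On the $M$ and $N$ columns this is vacuous, and on the outer columns it amounts to a single isomorphism $\alpha \Rightarrow \alpha'$ in $I$, i.e.\ an arrow of $I$ between the two chosen objects.

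With this in hand, I would define the functor $F : \kat{Triv}\comma P \to I$ sending a trivial $P$-tree to its unique edge decoration $\alpha(\star)\in I$, and a morphism to the associated arrow in $I$. The functor $F$ is essentially surjective because for every $i\in I$ the map $\name{i} : 1 \to I$ defines a trivial $P$-tree with $F$-image $i$; and it is fully faithful because, by the analysis above, a morphism of trivial $P$-trees is precisely an arrow in $I$ between the two decorations. Hence $F$ is an equivalence of groupoids.

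The only real subtlety — and the step I would take most care with — is the bookkeeping of the $2$-categorical coherence data in the definition of $P$-tree: one must check that the specified isomorphism filling the left-hand square is the \emph{only} non-identity datum in a morphism of trivial $P$-trees, and that composition of such morphisms in $\kat{Triv}\comma P$ corresponds to composition of the corresponding arrows in $I$. Once this routine verification is made, the equivalence $\kat{Triv}\comma P \simeq I$ follows.
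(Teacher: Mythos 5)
Your proof is correct and follows exactly the route the paper intends: the paper states this lemma without proof, regarding it as immediate from the remark that a trivial $P$-tree ``is just a single edge decorated by an element in $I$'', and your unfolding of the definition (the $M$ and $N$ columns being empty, so that the only surviving datum is $\alpha:1\to I$ and, on morphisms, an arrow of $I$) is precisely the verification being left implicit. Your analysis of morphisms also correctly anticipates the corollary that $\Aut$ of a trivial $P$-tree decorated by $i$ is $\Aut_I(i)$.
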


\begin{cor}
  The automorphism group of a trivial $P$-tree decorated by $i\in I$ is
  $\Aut_I(i)$.
\end{cor}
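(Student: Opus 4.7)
The plan is to unfold the definition of $P$-tree applied to the case where the underlying tree is trivial, and observe that everything reduces to naming an object of $I$. The explicit functor to construct is $F:I\to \kat{Triv}\comma P$ sending an object $i\in I$ to the trivial $P$-tree whose single edge is decorated by $i$, and the task is to show $F$ is an equivalence.

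First I would recall that a trivial tree is $1\leftarrow 0\to 0\to 1$, so a trivial $P$-tree is a commutative (up to specified iso) diagram
$$\xymatrix{
1 \ar[d]_\alpha& \ar[l]  0\ar[d] \drpullback \ar[r] & 0\ar[d] \ar[r] & 1\ar[d]^\alpha \\
I  &\ar[l] E\ar[r] & B \ar[r]  &I .
}$$
The middle and right squares have two corners equal to $0$, hence are automatically commutative, and the middle one is automatically a homotopy pullback because $0$ is the initial groupoid (so $0\times_B E\simeq 0$). Consequently the entire data of a trivial $P$-tree collapses to the left-hand vertical arrow $\alpha:1\to I$, i.e.\ an object $i\in I$. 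This gives essential surjectivity of $F$.

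For fully faithfulness, I would unpack a morphism of $P$-trees as an iso of underlying trees together with compatible isos of the three decoration squares. Between trivial trees, the only iso of underlying trees is the identity on $A=1$ (and the identities on the two $0$'s), so the only freedom is in the iso relating the two decorations $\alpha,\alpha':1\to I$; this is precisely a morphism in $I$ from $i$ to $i'$. Hence $F$ is fully faithful.

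Combining essential surjectivity and fully faithfulness yields the equivalence $\kat{Triv}\comma P\simeq I$. I do not anticipate any real obstacle: the only point requiring a little care is the verification that the middle square is automatically a homotopy pullback, but this is immediate from the fact that $0$ is initial in $\Grpd$ and pullback preserves the initial object in each variable. The corollary about automorphism groups is then just the observation that the equivalence sends the trivial $P$-tree decorated by $i$ to the object $i\in I$, and fully faithful functors preserve automorphism groups on the nose.
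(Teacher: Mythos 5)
Your proposal is correct and is essentially the argument the paper intends: the paper states both the lemma $\kat{Triv}\comma P \simeq I$ and this corollary without proof, treating them as immediate from unfolding the definition of a $P$-tree over the trivial tree $1\leftarrow 0\to 0\to 1$, which is exactly what you do. Your observations that the middle and right squares carry no data because $0$ is initial, and that the resulting equivalence is fully faithful and hence identifies automorphism groups, fill in the details faithfully.
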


\begin{blanko}{Corollas.}
  A {\em corolla} is a tree with exactly one node.  We denote by $\kat{Cor}$ the
  groupoid of all corollas.  A {\em $P$-corolla} is a $P$-tree whose underlying
  tree is a corolla.  The groupoid of $P$-corollas is denoted $\kat{Cor}\comma P$.
\end{blanko}

\begin{lemma}\label{lem:Cor-P}
  We have
  $$
  \kat{Cor}\comma P \simeq B .
  $$
\end{lemma}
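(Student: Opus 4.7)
The plan is to establish the equivalence by constructing explicit mutually inverse functors. The forward functor $F : \kat{Cor}\comma P \to B$ will extract the node decoration: since a corolla has a single node (so $N \simeq 1$), the middle vertical arrow $N \to B$ in the defining diagram of a $P$-corolla amounts to choosing an object of $B$, and this is what $F$ returns.

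In the other direction, I would define $G : B \to \kat{Cor}\comma P$ by sending $b \in B$ to the $P$-corolla built tautologically from it: take $N = 1$ mapping to $b$, take $M := E_b$ (which is finite discrete by the finitary hypothesis on $P$) with $M \to E$ the canonical inclusion of the homotopy fibre, and take $A = 1 + E_b$ with the root decorated by $t(b)$ and each leaf $e \in E_b$ decorated by $s(e)$. The middle square is cartesian by construction as the defining pullback of the fibre $E_b$.

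That $F \circ G$ is canonically equivalent to $\id_B$ is immediate from the construction. For $G \circ F \simeq \id_{\kat{Cor}\comma P}$, the argument will proceed as follows: given any $P$-corolla with node decoration $b$, the pullback axiom forces a canonical equivalence of the set $M$ of input edges with the fibre $E_b$ (compatibly with the $I$-decorations via $s$), and the right-hand commutative square forces the root edge decoration to be $t(b)$; this data assembles canonically into an isomorphism with $G(b)$, natural in the $P$-corolla.

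The main content to verify is that $F$ is fully faithful on automorphism groups: an automorphism of a $P$-corolla with node decoration $b$ consists of an automorphism of the underlying corolla together with natural isomorphisms filling the decoration squares, but the middle pullback condition together with the unavoidable identity on $N = 1$ forces this entire package of data to be determined by a single element of $\Aut_B(b)$, which then dictates the induced permutation of the input edges via its action on $E_b$. I expect the main obstacle to be tracking the $2$-categorical coherence of the natural isomorphisms that fill the squares of the defining diagrams in the groupoid setting; however, the discreteness of the fibres $E_b$ (guaranteed by the finitary hypothesis) reduces most of this bookkeeping to ordinary set-level reasoning inside each fibre, and the universal property of the homotopy pullback supplies the rest.
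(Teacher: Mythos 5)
Your construction is correct; the paper in fact states this lemma with no proof at all, treating it as immediate from the definition of $P$-corolla, and your explicit pair of mutually inverse functors (extract the node decoration $b\in B$ from the unique node; build the tautological corolla on the fibre $E_b$), together with the identification of the automorphism group with $\Aut_B(b)$, is precisely the argument the paper leaves implicit (and is consistent with the description of arrows of $P(X)$ in \ref{Peval} and with the corollary the paper draws from the lemma). The one detail worth making explicit in your step ``take $M:=E_b$'' is that $M$ must be the strictly discrete set $\pi_0(E_b)$ so that the top row is genuinely a tree of sets; this is licensed by the standing hypothesis that $p$ has finite discrete fibres, exactly as the paper remarks in \ref{PtreeGr}.
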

It follows from Lemma~\ref{lem:Cor-P} that the automorphism group of
a $P$-corolla labelled by $b\in B$ is $\Aut_B(b)$.

\begin{blanko}{Combinatorial interpretation of functor evaluation.}
  \label{Peval}
  In view of the previous lemma, we can picture the elements in
  $B$ as $P$-corollas:
  \begin{equation}
    \begin{texdraw}%\setunitscale 0.8
  \linewd 0.8 \footnotesize
  \move (0 0)
  \bsegment
    \move (0 0) \lvec (0 12) \onedot 
    \lvec (-14 25)
    \move (0 12) \lvec (-5 28) 
    \move (0 12) \lvec (5 28) 
    \move (0 12) \lvec (14 25)
    \move (8 10) \htext {$b$}
    \move (0 42) \htext {$E_b$}
    \move (0 32) \htext {$\overbrace{\phantom{xxxxx}}$}
  \esegment
  \end{texdraw}\end{equation}
  With this interpretation, we can trace through the definition of the 
  polynomial functor
  $P$
  to see that for $X \to I$ an $I$-indexed groupoid, the groupoid $P(X)$
  is described as that of $X$-decorated $P$-corollas. 
  Precisely, $P(X)$ has as
  objects pairs $(b,f)$ where $b\in B$ and $f: E_b \to X$ is an $I$-map,
  so actually a $2$-cell
    $$
  \xymatrix@C=3ex{
  E_b \ar[rr]^f \ar[rd] & \ar@{}[d]|{\Rightarrow} & X \ar[ld] \\
  &I .&
  }$$
  An arrow in $P(X)$ from $(b,f)$ to $(b',f')$ consists of an arrow
  $\beta:b\isopil b'$ in $B$ together with a $2$-cell (over $I$)
      $$
  \xymatrix@C=3ex{
  E_b \ar[rr]_\sim^{\beta\lowershriek} \ar[rd]_f & \ar@{}[d]|{\Rightarrow} & E_{b'} \ar[ld]^{f'} \\
  &X&
  }$$
  where $\beta\lowershriek : E_b \isopil E_{b'}$ is the map induced by $\beta$:
  \begin{eqnarray*}
   \beta\lowershriek : E_b & \longrightarrow & E_{b'}  \\
    (e,\phi) & \longmapsto & (e,\beta\circ \phi)
  \end{eqnarray*}
  
  Note that $P(X)$ is naturally a groupoid over $B$ and hence also a groupoid
  over $I$.  As a groupoid over $B$, we see that the description is given by 
  describing the fibres, and saying how the groups $\Aut(b)$ acts on the fibre 
  over $b$.  This is the description of $P(X)$ as the homotopy sum of its 
  fibres.
  
  We should also say what $P$ does on a morphism $a:X \to Y$ in $\Grpd_{/I}$:
  its value on $a$  sends an object $(b,f)$ to the object $(b,a\circ f)$.
\end{blanko}

\begin{prop}
  To give an automorphism of a $P$-tree $T$, assumed to have a bottom node $R$,
  is to give an automorphism $\sigma: R \isopil R$ of the corolla $R$ together
  with an isomorphism $D_x \isopil D_{\sigma x}$ for each $x$ incoming leaf of 
  the corolla $R$:
  $$
  \Aut(T) \simeq \Aut(R) \times \prod_{x\in E_b} 
  \operatorname{Iso}(D_x, D_{\sigma x}) . 
  $$
\end{prop}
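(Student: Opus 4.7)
The plan is to exploit the decomposition of the tree $T$ as a grafting: since $T$ has a bottom node $R$, it is the pushout in $\TEmb$ (over trivial trees, one per leaf of $R$) of the corolla $R$ with the ideal subtrees $D_x$ rooted at the leaves $x \in E_b$ of $R$, as recalled in \ref{ideal}. This decomposition lifts to $P$-trees: a $P$-decoration of $T$ is equivalently the data of a $P$-decoration of $R$ together with $P$-decorations of each $D_x$, compatibly on the shared trivial tree interfaces (which carry matching $I$-colours).

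First I would show that an automorphism of $T$ gives rise to the asserted data. The root edge is intrinsically distinguished (it is the unique edge not in the image of the map $t$ of \ref{polytree-def}), hence preserved by any automorphism, which therefore restricts to an automorphism $\sigma$ of the bottom corolla $R$, yielding an element of $\Aut(R)$. Since ideal subtrees are intrinsically characterised in \ref{ideal} as the subtrees of descendants of a given edge, the automorphism permutes them: if $\sigma$ sends the leaf $x$ of $R$ to $\sigma(x)$, then the automorphism must carry $D_x$ isomorphically onto $D_{\sigma(x)}$ (as a $P$-tree), yielding $\phi_x \in \operatorname{Iso}(D_x, D_{\sigma x})$.

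Conversely, I would use the universal property of the grafting pushout (applied in the category of $P$-trees) to glue the data $(\sigma,(\phi_x)_x)$ into a single $P$-tree automorphism of $T$. The maps agree on the trivial tree interfaces by construction, since the target of $\phi_x$ is the subtree $D_{\sigma(x)}$ grafted precisely at the leaf $\sigma(x)$ of $R$, which is the image of $x$ under $\sigma$. The two assignments are evidently inverse to one another, and the correspondence is natural in the relevant morphisms, so it upgrades from a bijection of underlying sets to an equivalence of groupoids.

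The main obstacle is the bookkeeping of the $P$-decorations across the gluing: one must verify that the arity bijection of $R$ (identifying its incoming edges with $E_b$) transports correctly under $\sigma$, and that the $I$-decoration at each grafting leaf agrees with the root decoration of the corresponding ideal subtree, so that the $\phi_x$ really do assemble into a morphism of $P$-trees and not merely of naked trees. Once this pullback condition at the bottom node of $R$ is tracked through, the equivalence reduces to the pushout-gluing principle for $P$-trees.
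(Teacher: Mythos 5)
Your argument is correct and is essentially the one the paper relies on: the Proposition is stated without proof in the text, and the proof of Theorem~\ref{thm:hoinit-gr} invokes exactly your decomposition of $T$ into the bottom $P$-corolla $R$ and the ideal subtrees $D_x$ hanging off its leaves, with the converse direction given by grafting and the automorphism data matching that of a pair $(b,f)$ as in \ref{Peval}. One small correction: by condition (2) of \ref{polytree-def} the root is the unique edge \emph{not in the image of $s$} (the edges outside the image of $t$ are the leaves), but this does not affect your conclusion that the root, and hence the bottom node, is intrinsically characterised and therefore preserved by every automorphism.
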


%%%%%%%%%%%%%%%%%%%%%%%%%%%%%%%%%%%%%%%%%%%%%%%%%%
\section{Abstract combinatorial Dyson--Schwinger equations}
%%%%%%%%%%%%%%%%%%%%%%%%%%%%%%%%%%%%%%%%%%%%%%%%%%

\label{sec:groupoid-DSE}

\begin{blanko}{Fixpoint equations, general case.}
  We keep fixed a polynomial endofunctor $P: \Grpd_{/I} \to \Grpd_{/I}$ given by
  groupoids
  $$
  I \stackrel s \leftarrow E \stackrel p \to B \stackrel t \to I
  $$
  and assume that $p$ has finite discrete fibres.
  We are interested in solving the fixpoint equation
  $$
  X \isopilback 1 + P(X) .
  $$
  Here $X$ is an object in the slice category $\Grpd_{/I}$, that is, an
  $I$-indexed family of groupoids, and $1$ denotes the terminal object in this 
  slice category, which is the family $\id:I \to I$, whose members are all 
  terminal.
  
  {\em Solving} this equation means finding the least fixpoint, and more 
  formally,
  finding a homotopy initial $(1+P)$-algebra (where $1$ is the constant functor 
  $1$, and the sum is sum of functors).  
  Some subtleties deserve explanation here: in the category of
  $(1+P)$-algebras, the morphisms are not just commutative squares like in
  \eqref{eq:P-alg map}: the square is allowed to commute only up to a specified
  invertible natural transformation.  Homotopy initial means that we are not 
  dealing
  with an isomorphism, but only with an equivalence of groupoids, and 
  accordingly, homotopy initial objects are unique only up to unique equivalence.
  So when we say below that the solution is the groupoid of $P$-trees, we
  are entitled to substitute this groupoid by any equivalent groupoid, like
  for example taking a skeleton of it.  
\end{blanko}

\begin{theorem}\label{thm:hoinit-gr}
  The (homotopy) initial $(1+P)$-algebra is the groupoid of $P$-trees.
\end{theorem}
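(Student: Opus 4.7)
The plan is to construct a canonical $(1+P)$-algebra structure on the groupoid $\kat{tr}_P$ of $P$-trees, verify that this structure map is an equivalence (so Lambek's condition comes for free), and then exhibit, for any other $(1+P)$-algebra, a morphism from $\kat{tr}_P$ obtained by recursion on tree depth, unique up to equivalence. First I would build the structure map $1 + P(\kat{tr}_P) \to \kat{tr}_P$. The $1$-summand sends each $i\in I$ to the trivial $P$-tree coloured by $i$, which is legitimate because $\kat{Triv}\comma P \simeq I$. For the $P$-summand I invoke the combinatorial interpretation of~\ref{Peval}: an object of $P(\kat{tr}_P)$ is a pair $(b,f)$ with $b$ a $P$-corolla and $f\colon E_b \to \kat{tr}_P$ an $I$-compatible assignment of a $P$-tree to each leaf of $b$. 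The grafting operation of~\ref{sub} (pushout of an ideal embedding against a root-preserving embedding over the trivial tree) then produces a single $P$-tree with a distinguished bottom node. This grafting is precisely the canonical $B_+$-operator at the objective level.

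That this map is an equivalence is the analogue of Lambek's observation run backwards: essential surjectivity holds because every $P$-tree is either trivial or has a unique bottom node, and in the latter case is the grafting of its bottom corolla with the ideal subtrees sitting above its input edges; full faithfulness follows from the final proposition of Section~\ref{sec:P}, since $\Aut(T) \simeq \Aut(R) \times \prod_{x\in E_b} \operatorname{Iso}(D_x, D_{\sigma x})$ is exactly the automorphism group of $(b,f)$ computed on $P(\kat{tr}_P)$ through~\ref{Peval}. Next, to prove homotopy initiality, take an arbitrary $(1+P)$-algebra $(A,\alpha)$ with components $\alpha_0\colon 1 \to A$ and $\alpha_1\colon P(A) \to A$, and define $\Phi\colon \kat{tr}_P \to A$ by recursion on number of nodes: on a trivial tree coloured $i$, set $\Phi$ equal to $\alpha_0(i)$; on a $P$-tree $T$ with bottom corolla labelled by $b\in B$ and ideal subtrees $D_x$ indexed by $x\in E_b$, set $\Phi(T) := \alpha_1(b,(\Phi(D_x))_{x\in E_b})$. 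The action of $\Phi$ on morphisms is then forced by the inductive description of $\Aut(T)$. Compatibility with the algebra structure is built into the definition. For uniqueness, any competing $(1+P)$-algebra morphism $\Psi\colon \kat{tr}_P \to A$ must agree with $\alpha_0$ on trivial trees and with $\alpha_1$ on grafted trees, so an induction on node number yields an essentially unique equivalence $\Psi \simeq \Phi$.

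The hard part will be the coherence of this recursion in the $2$-categorical/groupoid setting: not only must $\Phi$ be defined on objects, it must respect morphisms, and the algebra-map square must commute up to a specified coherent $2$-cell; likewise uniqueness of $\Phi$ is only up to a contractible space of choices, and this must be tracked through the recursion. The explicit equivalence $\kat{tr}_P \simeq 1 + P(\kat{tr}_P)$, together with the explicit formula for $\Aut(T)$, supply precisely the structural data needed to make the recursion coherent; this is morally the same as the construction of W-types in the groupoid model of Martin-L\"of type theory, and the upgrade from the $\Set$ case essentially amounts to carrying along the automorphism groups at each inductive step.
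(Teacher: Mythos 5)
Your proposal follows essentially the same route as the paper: the structure map is built from trivial trees and grafting onto a bottom corolla, the inverse is given by taking ideal subtrees over the input edges of the bottom node, and the automorphism computation $\Aut(T) \simeq \Aut(R) \times \prod_{x} \operatorname{Iso}(D_x, D_{\sigma x})$ is used exactly as in the paper to verify the equivalence. The paper in fact omits the initiality argument entirely (referring to an induction on the number of nodes as in the discrete case), so your sketch of the recursion defining $\Phi$ and its coherence issues goes slightly beyond what the paper writes down, but it is the same argument the paper gestures at.
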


We prove this theorem by hand, establishing that the groupoid of $P$-trees
actually solves the equation.  A more formal proof is possible, valid even
for $\infty$-groupoids
\cite{Gepner-Kock}.  
\begin{proof}
  Denote by $W$ the groupoid of $P$-trees, considered as a groupoid over $I$,
  that is an object in $\Grpd_{/I}$.  It is the disjoint union of
  $\kat{Triv}\comma P$ and the groupoid $W^+$ of $P$-trees with a bottom node.
  The first is the terminal object $I\to I$ in $\Grpd_{/I}$, so we are done if
  we can exhibit a natural equivalence of groupoids (over $I$)
  $$
  W^+ \isopilback P(W) .
  $$
  The map is described as follows: according to \ref{Peval}, an object in $P(W)$
  is a pair $(b,f)$ where $b\in B$ and $f: E_b \to W$ over $I$.  We can
  interpret $b$ as a $P$-corolla, and since its leaves $x$ match the roots of
  the trees $f(x)$, we can glue all these trees onto $b$ to obtain a tree (which
  clearly has a bottom node, namely $b$).  Conversely, given a $P$-tree with a
  bottom node $b$, consider for each $x\in E_b$ the ideal tree $D_x$, as in 
  \ref{ideal}.  This data
  is precisely the data of $(b,f)$.  So on objects we can go back and forth
  easily.  To make this correspondence an equivalence of groupoids, it remains
  to check that the automorphism groups match up.  We have already computed the
  automorphism group of a tree $T$ with a node: to give an automorphism of $T$
  is to give an automorphism group of the bottom node $R$ and for each $x$ an
  incoming edge of $R$, an isomorphism $D_x \isopil D_{\sigma x}$.  Now we also know that
  the automorphism group of $R$ is $\Aut(b)$, so now it is easy to see that
  altogether this is precisely the automorphism group of the data $(b,f)$:
  indeed, that is an automorphism of $b$ together with an automorphism of $f:
  E_b \to W$.
  
  We omit the proof that no smaller groupoid solves the equation, which amounts 
  to initiality.
  This argument (similar to the discrete case~\cite{Kock:0807})
  goes by induction on the number of nodes.
\end{proof}

%%%%%%%%%%%%%%%%%%%%%%%%%%%%%%%%%%%%%%%%%%%%%%%%%%
\section{Bialgebras and $B_+$-operators}
%%%%%%%%%%%%%%%%%%%%%%%%%%%%%%%%%%%%%%%%%%%%%%%%%%

\label{sec:groupoid-Green}

Recall that subtree means a morphism of trees in the sense of \ref{sub}
(this is not required to preserve the 
root).  Hence subtrees are always full, in the sense that if a node
is contained in the subtree then so are all the incident edges to that node.

\begin{blanko}{The bialgebra of $P$-trees (cf.~\cite{Kock:1109.5785}).}
  \label{treebialg-general}
  We deal with $P$-trees here, but for short we just say tree.
  A {\em cut} of a tree is defined to be a subtree containing the
  root.  If $c:S\subset T$ is a subtree containing the root, then each leaf $e$
  of $S$ determines an ideal subtree of $T$, namely the tree $D_e$
  consisting of $e$ (which
  becomes the new root) and all the edges and nodes above it (\ref{ideal}).  This is still
  true when $e$ is also a leaf of $T$: in this case, the ideal tree is the
  trivial tree consisting solely of $e$.  Note also that the root
  edge is a subtree; the ideal tree of the root edge is of course the tree
  itself.  This is the analogue of the cut-below-the-root in the combinatorial
  case (\ref{CK}).  For a cut $c:S\subset T$, define $P_c$ to be the forest consisting
  of all the ideal trees generated by the leaves of $S$.
  
  Let $\BB=\BB_P$ be the free commutative algebra $\Q[T \in \pi_0 \tree]$
  on the set of isomorphism 
  classes of $P$-trees.  It becomes a bialgebra with
  comultiplication defined on the generators by
  \begin{eqnarray*}
  \Delta:  \BB & \longrightarrow & \ \ \BB \otimes \BB  \\
    T & \longmapsto & \sum_{c:S\subset T}\!\! P_c \otimes S ,
  \end{eqnarray*}
  as exemplified by 

  \begin{center}\begin{texdraw}
    \htext (0 0) {$\Delta($}
   \rmove (12 0) \twotree
     \rmove (15 0)
    \htext  {$) \ \ = $}
        \rmove (23 0) \zerotree
    \rmove (3.5 0) \zerotree
    \rmove (3.5 0) \zerotree

          \rmove (11 0) \htext {$\tensor$}
   \rmove (14 0) \twotree

	            \rmove (18 0) \htext {$+$}

    \rmove (20 0) \onetree  \rmove (5 0) \zerotree 
    
    \rmove (12 0) \htext {$\tensor$}
   \rmove (12 0) \onetree
   
      \rmove (19 0) \htext {$+$}

    \rmove (21 0) \twotree
    
    \rmove (13 0) \htext {$\tensor$}
   \rmove (10 0) \zerotree
  \end{texdraw}\end{center}  
  The description of cuts in terms of subtrees and ideal subtrees means that, for
  operadic trees, cuts are not allowed to go above the leaves, and that cutting
  an edge does not delete it, but really cuts it.  This is an essential point, as 
  deleting edges would break the given arity
  bijections in the decorations.

  Coassociativity of the comultiplication
  can be checked by hand, but it is not necessary: it is
  a consequence of general principles \cite{Galvez-Kock-Tonks:1512.07573}:
  any polynomial endofunctor defines a monoidal decomposition space, and any
  monoidal decomposition space (subject to some finiteness 
  conditions~\cite{Galvez-Kock-Tonks:1512.07577}) defines a bialgebra.
\end{blanko}
  
\begin{blanko}{Core.}\label{core-bialghomo}
  There is a canonical bialgebra homomorphism from any such bialgebra of
  $P$-trees to the Connes--Kreimer bialgebra $\CK$, given by taking
  core~\cite{Kock:1109.5785}: this amounts to forgetting the $P$-decorations and
  shaving off all leaves and roots.  In other words, the core of a $P$-tree is
  the combinatorial tree given by its inner edges.
\end{blanko}

\begin{blanko}{Gradings.}
  The bialgebra $\BB_P$ has two natural gradings: an $\N$-grading by the number
  of nodes, and a $\Z$-grading by `number of leaves minus number of roots'.
  (Neither of these gradings has actually anything particular to do with the
  polynomial endofunctor $P$; they refer only to the underlying tree.  On the
  other hand, both gradings admit refinements and variations depending on $P$:
  for example, certain nodes and edges might have different weight.)

  The node grading is an instance of {\em length filtration}, a 
  general notion for decomposition spaces~\cite{Galvez-Kock-Tonks:1512.07577},
  which specialises to the familiar notion of length in locally finite posets.
  
  The second grading is called the {\em operadic grading}, and will play an important 
  role in what follows.  Note that for the operadic grading, nullary trees have
  degree $-1$, and that by taking disjoint union with nullary trees, forests
  of arbitrary negative degree can be obtained.  Particular examples of $P$
  may of course prohibit nullary nodes altogether, and hence turn the operadic
  grading into an $\N$-grading too.
\end{blanko}

\begin{blanko}{Remark in defence of bialgebras that are not quite connected.}
  \label{bla:nonconnected}
  The bialgebra $\BB_P$ is not connected for the length grading: degree zero is
  spanned by all the nodeless forests.  (It is not connected for the operadic
  grading either.)  In particular, it is not a Hopf algebra, as used in the
  Connes--Kreimer approach to renormalisation.  However, it can be
  shown~\cite{Kock:1411.3098} that Hopf algebra renormalisation works for
  bialgebras of $P$-trees too, the essential point being that degree zero is
  spanned by group-like elements.  For the same reason, the connected quotient
  of $\BB_P$ exists, obtained by equating all nodeless forests to $1$.
  However, this process would clearly destroy the operadic grading, and with it 
  algebraic structure that will be exploited in an essential way in what 
  follows.
\end{blanko}

\begin{blanko}{The bialgebra at the objective level.}
  Although for applications one may be interested primarily in the
  bialgebra $\BB_P$ at the level of numbers, vector spaces and linear maps, it
  is an important aspect of the theory, in the larger picture, that all the
  constructions can be lifted to the objective level of objects, slices, and
  functors, cf.~\cite{GalvezCarrillo-Kock-Tonks:1207.6404}.
  We briefly discuss this, as it will be
  useful in order to describe both $B_+$-operators and Green functions.

  In a nutshell, Algebraic Combinatorics is
  concerned with vector spaces $\Q_{\pi_0 S}$ spanned by iso-classes of
  certain combinatorial objects forming a groupoid $S$.
  Further algebraic structure is induced by the combinatorics of these objects,
  and often such operations can be realised as (homotopy) cardinality of
  operations at the combinatorial level, also referred to as the objective 
  level \cite{Galvez-Kock-Tonks:1602.05082}.
  The objectification of the vector space $\Q_{\pi_0 S}$
  is the slice category $\Grpd_{/S}$ of groupoids 
  over $S$.  For $X \to S$ an object in here, and $s\in S$, the homotopy fibre $X_s$ is
  the objective version of the coefficient of $s$ in a linear combination.
  Linear maps $\Q_{\pi_0 T} \to \Q_{\pi_0 S}$ are objectified by linear functors
  $\Grpd_{/T} \to \Grpd_{/S}$, which in turn are given by
  {\em spans} $T \leftarrow M \to S$ (see \cite{Galvez-Kock-Tonks:1602.05082} for terminology);
  this is a groupoid indexed by the product 
  $T\times S$,
  just as linear maps are given by matrices. 
  There is a symmetric monoidal structure on the category of groupoid slices 
  given by
  $$
  \Grpd_{/I} \tensor \Grpd_{/J} = \Grpd_{/I \times J}
  $$
  objectifying the familiar isomorphism $\Q_I \tensor \Q_J = \Q_{I \times J}$.
  The neutral object is $\Grpd=\Grpd_{/1}$, the slice over the terminal object,
  corresponding to the ground field $\Q$.  In all these cases, the 
  linear-algebra notion is obtained by taking (homotopy) cardinality of the 
  groupoid notions~\cite{Galvez-Kock-Tonks:1602.05082}.

  We let $\tree$ denote the groupoid of
  $P$-trees. Let
  $\forest = \forest_1$ denote the groupoid of $P$-forests (i.e.~disjoint unions of 
  $P$-trees).  Denote by $\forest_2$ the groupoid of forests with a cut,
  and denote by $\forest_0$ the groupoid of nodeless forests, i.e.~disjoint unions
  of trivial trees.
  The reason for the indices
  is that these groupoids form the beginning of a simplicial 
  object~\cite{Galvez-Kock-Tonks:1512.07573}, \cite{Kock-Weber:1609.03276}
  $$\xymatrix{
  \forest_0  & \ar@<-2pt>[l]\ar@<2pt>[l]
  \forest_1  & \ar@<-4pt>[l]  \ar[l]  \ar@<4pt>[l]
  \forest_2  & \ar@<-6pt>[l]\ar@<-2pt>[l]\ar@<2pt>[l]\ar@<6pt>[l]
  \forest_3 &
  \cdots
  }$$
  where in general $\forest_k$ is the groupoid of forests with $k-1$ compatible 
  cuts.  The face maps consist in forgetting cuts (e.g.~$d_1: \forest_2 \to 
  \forest_1$) or removing the crown (e.g.~$d_0: \forest_2 \to \forest_1$) or the 
  bottom forest (e.g.~$d_2: \forest_2 \to \forest_1$).  It is a fundamental fact 
  (\cite[Key Lemma]{GalvezCarrillo-Kock-Tonks:1207.6404}, see also 
  \cite{Kock-Weber:1609.03276}), that we have
  $$
  \forest_2 \simeq \forest_1 \times_{\forest_0} \forest_1
  $$
  which is to say that this simplicial object is a category object (i.e.~the 
  fat nerve of a category as in \cite{Galvez-Kock-Tonks:1512.07573}).
  This important property is not shared by the simplicial object of 
  combinatorial trees, as explained in detail in \cite{Galvez-Kock-Tonks:1612.09225}.
  
  The objectification of 
  $\Q_{\pi_0 \forest_1}$ is hence the groupoid slice $\Grpd_{/\forest_1}$.
  The comultiplication in the bialgebra of $P$-trees is now objectified by the
  linear functor
  $$
  \Grpd_{/\forest_1} \stackrel{d_1\upperstar}\longrightarrow \Grpd_{/\forest_2} 
  \stackrel{(d_2,d_0)\lowershriek}\longrightarrow 
  \Grpd_{/\forest_1\times\forest_1}
  $$
  given by the span
  $$
  \forest_1 \stackrel{d_1}\longleftarrow \forest_2 \stackrel{(d_2,d_0)}\longrightarrow 
  \forest_1\times\forest_1 .
  $$
  Here the leftward arrow sends a forest with a cut to the total forest
  (just forget the cut), whereas the rightward arrow sends a forest with a
  cut to the
  pair consisting of the top and bottom forests.
\end{blanko}

\begin{blanko}{$B_+$-operators.}\label{B+}
  The bialgebra of $P$-trees has canonical $B_+$-operators, which are
  essentially the components of the structure equivalence
  $$
  \tree \isopilback 1 + P(\tree).
  $$
  Indeed, the formula for $P$ gives
  $$
  P(X) = \int^{b\in B} \Map_I(E_b,X) ,
  $$
  and the map from
  $P(\tree)$ to $\tree$ is therefore the homotopy sum of maps (one for each $b\in B$)
  \begin{equation}\label{preB}
    \Map_I(E_b,\tree) \to \tree .
  \end{equation}
  The very construction of the solution $\tree$ showed
  that this map takes an $E_b$-indexed family of trees and returns the tree
  obtained by grafting all these trees onto the corolla $b$.  This is essentially
  the $B_+$-operator corresponding to $b\in B$. 
  
  We can analyse this a bit further, by noting the
  canonical equivalence
  $$
  \Map_I(E_b,\tree) \simeq \forest_1 \times_{\forest_0} \{b\} ,
  $$
  which states precisely that an $E_b$-indexed family of trees is the same thing
  as a forest equipped with a bijection from its set of roots to the set of 
  leaves of the corolla $b$. 
  This in turn can be interpreted in terms of the canonical map
  $$
  \forest_1 \times_{\forest_0} \forest_1 \simeq \forest_2 \stackrel{d_1}\longrightarrow 
  \forest_1 .
  $$
  This map takes two forests and glues the first onto the leaves of the second
  (the fibre product expresses the condition that the roots of the first forest 
  match the leaves of the second forest).  If we demand the bottom forest to 
  be a tree, clearly the result will be a tree again, and hence the map 
  restricts to 
  $
     \forest_1 \times_{\forest_0} \tree 
     \longrightarrow
      \tree $,
  and we can restrict further to require the bottom tree to be a corolla, or
  indeed to be the specific corolla $b$:
  $$\xymatrix{
     \forest_1 \times_{\forest_0} \{b\} \ar[rr]\ar[d] && \{b\} \ar[d] \\
     \forest_1 \times_{\forest_0} B \ar[rr]\ar[d] && B \ar[d] \\
     \forest_1 \times_{\forest_0} \tree \ar[rr]\ar[d] && \tree \ar[d] \\
     \forest_1 \times_{\forest_0} \forest_1 \ar[rr] && \forest_1
  }$$
  Now the top morphism is precisely 
  the preliminary $B_+$-operator given in 
  \eqref{preB}.

  It remains to observe that the final version of the $B_+$-operator should take
  as input any forest, not specifically an $E_b$-indexed family of trees.  The
  final version of the $B_+$-operator corresponding to $b$ is the linear map
  given by the span
  $$
  \forest_1 \longleftarrow \forest_1 \times_{\forest_0} \{b\} \longrightarrow 
  \tree   .
  $$

  These are the individual $B_+$-operators, one for each $b\in B$.
  There is also a global $B_+$-operator, obtained as the homotopy sum of
  all the individual ones.  This comes about very formally: it is simply the
  diagram
  $$
  \forest_1 \longleftarrow 
   \forest_1 \times_{\forest_0} B  \longrightarrow \tree
   $$
   giving the global $B_+$-operator 
$$
\Grpd_{/\forest_1} \longrightarrow \Grpd_{/\tree}
$$
by pullback along the projection, and then lowershriek
over to $\tree$.

Note that naturally, without choices or tricks, the global $B_+$-operator is a
homotopy sum of the individual $B_+$-operators, which can be spelled out as a
sum weighted by the symmetry factors of each $b$.  (The appearance of these
symmetry factors is a familiar feature (see for example \cite{Kreimer:0509135}
for detailed example computations).)
\end{blanko}

\begin{blanko}{Remark in defence of $B_+$-operators that are not quite
  Hochschild $1$-cocycles.}\label{notHH}
  It should be noted that neither the individual
  $B_+$-operators, nor the global one, are Hochschild $1$-cocycles
  for the twisted Hochschild cohomology introduced by Connes and 
  Kreimer~\cite{Connes-Kreimer:9808042}.
  The twist in this notion of Hochschild cohomology is the fact
  that the bicomodule structure on a bialgebra $H$ is not the
  standard one given by comultiplication, but that the right coaction
  involves $\eta\epsilon$ instead of the identity
  (see Moerdijk~\cite{Moerdjik:9907010}).
  
  The corresponding $1$-cocycle condition for $B_+$ would be
  $$\xymatrix{
     \BB \ar[r]^-\Delta\ar[d]_{B_+} & \BB\tensor \BB 
     \ar[d]^{\Id \tensor B_+ + B_+ \tensor \eta\epsilon} \\
     \BB \ar[r]_-\Delta & \BB\tensor \BB,
  }$$
  which fails to hold because of one term: by applying first $B_+$ to a forest
  and then cutting up the resulting tree with $\Delta$ produces (among others) 
  one term corresponding to cut below the new node, providing a
  right-hand term which is a trivial tree.  On the other hand, by first
  comultiplying and then applying the right-hand vertical map, the second
  summand $B_+ \tensor \eta\epsilon$ produces instead the term $1$.
  The failure to be a Hochschild $1$-cocycle is thus intimately related
  to the failure of $\BB$ to be connected, cf.~\ref{bla:nonconnected}.
  In fact it is clear that the $B_+$-operators descend to the connected
  quotient of $\BB$, and that for this Hopf algebra they {\em do} become
  Hochschild $1$-cocycles.
  
  The Hochschild $1$-cocycle condition is emphasised in numerous papers by
  Kreimer 
  for ensuring locality of counter-terms and finiteness of renormalisation.  A
  rather detailed proof can be found in \cite{Kreimer:0306020}, with additional
  remarks in \cite{Kreimer:0202110}.  There are two arguments in the proof that
  use the $1$-cocycle condition: one is that every relevant tree is in the image
  of some $B_+$-operator, and the other is that the $1$-cocycle condition allows
  for a proof by induction.  Inspection of these two arguments shows that they
  work also for the not-quite-$1$-cocycles in $\BB_P$.  That every tree is in
  the image of some $B_+$ is tautological in the present approach.  For the
  induction argument, note again that the failure of the $1$-cocycle condition
  amounts to the appearance of a nodeless tree instead of a unit.  But since
  nodeless trees are still group-like, they can be used as basis for the
  induction instead of the algebra unit $1$, as detailed in \cite{Kock:1411.3098}.

  Another important consequence attributed to the $1$-cocycle condition is
  Theorem~\ref{thm:BK}, the fact that the homogeneous parts of the solution to
  the Dyson--Schwinger equation form a Hopf sub-algebra.  We shall see below
  (\ref{thm:GKT}) that this is ensured already by the inductive structure of the
  solution, seen to be more fundamental than the question of whether the $B_+$
  are Hochschild $1$-cocycles.
\end{blanko}
  
\begin{blanko}{Green functions.}
  While taking core immediately puts us into the familiar Connes--Kreimer Hopf
  algebra, it is actually important not to throw away the information encoded 
  in leaves and root: the
  strict type obedience characteristic for $P$-trees 
  (respect for arities) allows for meaningful automorphism
  groups and the existence of meaningful Green 
  functions~\cite{GalvezCarrillo-Kock-Tonks:1207.6404}.  In the abstract 
  setting of~\cite{GalvezCarrillo-Kock-Tonks:1207.6404} and 
  \cite{Kock-Weber:1609.03276}, the (connected) Green function is simply the
  sum of all connected objects (weighted by symmetry factors).  In the present
  case, the objects are forests, and the connected forests are trees.
  At the groupoid level, symmetry factors are 
  accounted for automatically, and the connected Green function is simply the functor
  $$
  \tree \to \forest_1 ,
  $$
  considered as an object in $\Grpd_{/\forest_1}$.
  Since $\tree$ is an infinite groupoid, at the vector-space level,
  the connected Green function lives in the completion
  $\Q[[T\in \pi_0 \tree]]$ and is a formal series
  $$
  G = \sum_{T\in \pi_0 \tree} \frac{T}{\norm{\Aut(T)}},
  $$
  the sum is over iso-classes of $P$-trees, and it comes about really as
  a homotopy sum.
  The adjective `connected' is
  appropriate from the viewpoint of trees, but is misleading when interpreted
  in quantum field theory: depending on how the
  dictionary is set up (as in Section~\ref{sec:trees-graphs}), `connected'
  for trees corresponds to `1PI (and connected)' for graphs.  The Green function
  just introduced is the only one used in this work, so to avoid confusion we 
  just call it the {\em Green function}.  From the constructions, the following
  is clear.
\end{blanko}
  
\begin{prop}
  $G$ solves the Dyson--Schwinger equation.
\end{prop}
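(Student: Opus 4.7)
The plan is to observe that the statement is essentially tautological: the Green function $G$ is defined as the homotopy cardinality of the object $\tree \to \forest_1$ in $\Grpd_{/\forest_1}$, while the internalised Dyson--Schwinger equation is, by construction, the cardinality of the structure equivalence $\tree \isopilback 1 + P(\tree)$ from Theorem~\ref{thm:hoinit-gr}. The proof is therefore a matter of pushing this groupoid equivalence forward along $\tree \to \forest_1$, taking homotopy cardinality, and recognising the resulting algebraic identity.

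First, I would invoke Theorem~\ref{thm:hoinit-gr} for the structure equivalence $\tree \isopilback 1 + P(\tree)$ in $\Grpd_{/I}$, with the summand $1$ being $\kat{Triv}\comma P \simeq I$ and the summand $P(\tree)$ consisting of all $P$-trees with a bottom node. Next, I would use the identifications from \ref{B+}, namely $P(\tree) \simeq \int^{b\in B} \Map_I(E_b, \tree) \simeq \forest_1 \times_{\forest_0} B$, together with the crucial observation there that the resulting map $\forest_1 \times_{\forest_0} B \to \tree$, viewed as a span in $\Grpd_{/\forest_1}$, is precisely the global $B_+$-operator applied to the Green-function input.

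Finally, pushing the equivalence forward along $\tree \to \forest_1$ and taking homotopy cardinality produces an identity in the completion of $\BB_P$. The left-hand side yields $G$ by definition. Using the homotopy-sum formula of Section~\ref{sec:groupoids}, the right-hand side expands to the sum of the trivial-tree series, weighted by $1/|\Aut_I(i)|$, and the image of $G$ under the global $B_+$-operator, which in turn unfolds as the homotopy sum
$$B_+(G) \ = \ \sum_{b \in \pi_0 B} \frac{1}{|\Aut_B(b)|}\, B_+^b(G)$$
over the individual $B_+^b$-operators. This is exactly the internalised Dyson--Schwinger equation, and so $G$ solves it. No genuine obstacle arises: the only point requiring care is that homotopy cardinality commutes with the pushforward along $\tree \to \forest_1$ and with the coend $\int^{b\in B}$, which is the standard calculus of spans of groupoids recalled in Section~\ref{sec:groupoids}.
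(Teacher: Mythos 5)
Your proposal matches the paper exactly in spirit: the paper offers no proof at all beyond the remark that the statement is clear from the constructions, and your elaboration --- the structure equivalence of Theorem~\ref{thm:hoinit-gr} transported to $\Grpd_{/\forest_1}$ and read off via homotopy cardinality, with the $B_+$-operators being by definition the components of that very equivalence --- is precisely the intended tautology. One small notational caution: the argument of each individual $B_+^b$ in the internalised equation is not $G$ itself but the groupoid $\Map_I(E_b,\tree)\simeq\forest_1\times_{\forest_0}\{b\}$ of $E_b$-indexed families of trees, i.e.\ at cardinality level a product of copies of $G$ (the analogue of the powers $X^{n+1}$ in the Bergbauer--Kreimer form), so your expression $B_+^b(G)$ should be read as $B_+^b$ applied to the appropriate forest Green function rather than to $G$ literally.
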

 
Furthermore, the Green function naturally splits into summands
$$
G = \sum_{n\geq 0} g_n   ,
$$
where $g_n$ consists of all the trees with $n$ leaves.
We have

\begin{theorem}\label{thm:GKT} (\cite{GalvezCarrillo-Kock-Tonks:1207.6404})
  The Green function satisfies the Fa\`a di Bruno formula
  $$
  \Delta(G) = \sum_{n\geq 0} G^n \tensor g_n .
  $$
\end{theorem}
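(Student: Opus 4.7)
The plan is to verify the formula at the objective level, where the comultiplication of $\BB_P$ is realized by the span $\forest_1 \overset{d_1}{\longleftarrow} \forest_2 \overset{(d_2,d_0)}{\longrightarrow} \forest_1 \times \forest_1$, and where $G$ is objectified by the inclusion $\tree \to \forest_1$. The algebraic identity in $\BB_P \tensor \BB_P$ will then follow by taking homotopy cardinality. The central tool is the Key Lemma $\forest_2 \simeq \forest_1 \times_{\forest_0} \forest_1$, which identifies a forest-with-cut with a pair (upper forest, lower forest) matched along a common nodeless interface in $\forest_0$.

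First, I would compute $\Delta(G)$ at the objective level by pulling back $\tree \to \forest_1$ along $d_1$ and pushing forward along $(d_2,d_0)$. The pullback is the groupoid of forests-with-cut whose underlying total forest is a tree; since by definition the bottom of a cut is a subtree containing the root, it is itself a tree, and hence the Key Lemma identifies this pullback groupoid with $\forest_1 \times_{\forest_0} \tree$: an upper forest together with a bottom tree, matched along a common interface.

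Next, I would stratify the middle groupoid $\forest_1 \times_{\forest_0} \tree$ according to $n$, the number of leaves of the bottom tree (equivalently, the number of roots of the upper forest). For each $n$, the stratum is equivalent to the fiber product of an ordered $n$-tuple of trees (a forest with $n$ canonically labeled roots, i.e., an object of $\tree^n$) with a bottom tree having $n$ leaves (contributing to $g_n$), matched over their common $I$-indexed interface. Pushed forward along $(d_2, d_0)$, this stratum is precisely the objectification of $G^n \tensor g_n$ as an object of $\Grpd_{/\forest_1 \times \forest_1}$. Summing over $n$ and taking homotopy cardinality yields the Fa\`a di Bruno formula.

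The main obstacle is the careful bookkeeping of symmetry factors entering the identification of the $n$-th stratum with the objective form of $G^n \tensor g_n$. Algebraically, $G^n$ involves $1/|\Aut(F)|$ factors for $n$-component forests, whose automorphism groups include both internal symmetries of the components and permutations of isomorphic components; these are exactly balanced against factorials from enumerating orderings of the interface. At the objective level, however, all these factors are handled automatically by the homotopy quotients implicit in the fiber products and homotopy sums, reducing the argument to an instance of the homotopy double counting principle of \cite{GalvezCarrillo-Kock-Tonks:1207.6404}, which may be invoked directly.
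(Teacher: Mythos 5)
The paper itself gives no proof of this theorem: it is quoted from \cite{GalvezCarrillo-Kock-Tonks:1207.6404}, with the remark that the proof there proceeds at the objective level as an equivalence of groupoids, essentially as an instance of a homotopy double counting principle. Your outline reproduces exactly that strategy --- objectify $\Delta(G)$ via the span $\forest_1 \leftarrow \forest_2 \to \forest_1\times\forest_1$, use the Key Lemma $\forest_2\simeq\forest_1\times_{\forest_0}\forest_1$ to identify the pullback with $\forest_1\times_{\forest_0}\tree$, stratify by the number of leaves of the bottom tree, and invoke the double counting principle --- so it is essentially the same approach; just be aware that the entire mathematical content is concentrated in the final step you defer to the cited principle, namely the equivalence between the homotopy sum over bottom trees $T$ of leaf-indexed families $\prod_{\ell}\tree$ and the objectification of $G^n\tensor g_n$, which is where the interchange of the homotopy colimit with the product (and hence all the symmetry-factor bookkeeping) actually happens.
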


An important point here is that 
the exponent $n$ on the left-hand tensor factor 
counts $n$ trees, each with a root, precisely matching the 
subscript $n$ in the right-hand tensor factor, which is the number of leaves in the 
trees in $g_n$.  This kind of information cannot be seen in at the level of 
combinatorial trees.

The theorem is proved in \cite{GalvezCarrillo-Kock-Tonks:1207.6404} at the
objective level, meaning that it is expressed as an equivalence of groupoids.
The coupling constant can be introduced without affecting the result, since in
any case it encodes precisely the grading with respect to which the
comultiplication is homogeneous.  It should be mentioned that in the case
involving more than one colour ($I\neq 1$), there are also separate Green
functions $G_v$ for each $v\in I$ (i.e.~for each residue),
each satisfying a Fa\`a di Bruno formula
(cf.~\cite[Theorem~7.8]{GalvezCarrillo-Kock-Tonks:1207.6404}).

\begin{blanko}{Example: trees.}\label{ex:naked}
  Let $\B$ denote the groupoid of finite sets and bijections, and let
  $\B'$ denote the groupoid of finite pointed sets and basepoint-preserving 
  bijections.
  The polynomial functor represented by $1 \leftarrow \B' \to \B \to 1$
  is the exponential functor
  \begin{eqnarray*}
    \Grpd & \longrightarrow & \Grpd  \\
    X & \longmapsto & \sum_{n\in\N} \frac{X^n}{\mathfrak S_n} .
  \end{eqnarray*}
  The $P$-trees for this functor are the naked trees (i.e.~abstract trees 
  without a planar structure or any other structure or decorations).
\end{blanko}

\begin{blanko}{Example: stable trees.}\label{ex:stable}
  In a similar vein, we can consider $P$-trees for the polynomial
  functor $P(X)=\exp(X)-1-X$, represented by
  $$
  1\leftarrow {\bf Y}'\to {\bf Y}\to 1,
  $$
  where $\bf Y$ is the groupoid of finite sets of cardinality at least $2$.
  These are naked trees with no nullary and no unary nodes, called {\em reduced
  trees} by Ginzburg and Kapranov~\cite{Ginzburg-Kapranov}.  We prefer the term
  {\em stable tree} \cite{GalvezCarrillo-Kock-Tonks:1207.6404}. 
  For a given number of leaves there is only a finite number
  of isoclasses of stable trees.

  Here are pictures of the isoclasses of stable trees with up to five leaves,
  with the symmetry factor of each tree indicated.
    \begin{equation}\label{eq:stable-trees}\begin{texdraw}
      \setunitscale 0.6
      \tiny
      
    % ONE LEAF
    \move (0 0) 
    \bsegment
      \rlvec (0 20) 
    \esegment
    
    % TWO LEAVES
    \move (90 0) 
    \bsegment
      \move (0 0)
      \lvec (0 10) \onedot \rlvec (-5 10)
      \move (0 10) \rlvec (5 10)
      \move (6 0) \htext{$\brok{1}{2}$}
    \esegment
    
    % THREE LEAVES
    \move (180 -5)   
    \bsegment
      \move (0 0)
          \bsegment
	    \move (0 0)
	    \lvec (0 10) \onedot \rlvec (-10 20)
	    \move (0 10) \rlvec (10 20)
	    \move (-5 20) \onedot \rlvec (5 10)
	    \move (6 0) \htext{$\brok{1}{2}$}

	  \esegment

	  \move (35 0)
	  \bsegment
	    \move (0 4)
	    \lvec (0 14) \onedot \rlvec (-10 12)
	    \move (0 14) \rlvec (10 12)
	    \move (0 14) \rlvec (0 14)
	    \move (6 4) \htext{$\brok{1}{6}$}

	  \esegment
    \esegment
    
    % FOUR LEAVES
    \move (320 0)
    \bsegment
      \move (0 0) 
      \bsegment
	\move (0 0) 
	\lvec (0 10) \onedot \rlvec (-15 30)
	\move (0 10) \rlvec (15 30)
	\move (-5 20) \onedot \rlvec (10 20)
	\move (-10 30) \onedot \rlvec (5 10)
	\move (6 0) \htext{$\brok{1}{2}$}

      \esegment
   
      \move (50 0) 
      \bsegment
	\move (0 0) 
	\lvec (0 10) \onedot \rlvec (-15 30)
	\move (0 10) \rlvec (15 30)
	\move (-10 30) \onedot \rlvec (5 10)
	\move (10 30) \onedot \rlvec (-5 10)
	\move (6 0) \htext{$\brok{1}{8}$}
      \esegment

      \move (100 0) 
      \bsegment
	\move (0 5) 
	\lvec (0 15) \onedot \rlvec (15 22)
	\move (0 15) \lvec (-5 25) 
	\onedot \rlvec (-10 12)
	\move (-5 25) \rlvec (10 12)
	\move (-5 25) \rlvec (0 12)
	\move (6 5) \htext{$\brok{1}{6}$}
      \esegment

      \move (170 0) 
      \bsegment
	\move (0 5) 
	\lvec (0 15) \onedot \lvec (-8 25) \onedot
	\rlvec (-5 12) \move (-8 25)
	\rlvec (5 12)
	\move (0 15) \lvec (6 37) 
	\move (0 15) \lvec (17 37) 
	\move (6 5) \htext{$\brok{1}{4}$}
      \esegment

      \move (220 0) 
      \bsegment
	\move (0 8) 
	\lvec (0 18) \onedot 
	\lvec (-16 30)
	\move ( 0 18) \lvec (-5 33)
	\move ( 0 18) \lvec (5 33)
	\move ( 0 18) \lvec (16 30)
	\move (7 8) \htext{$\brok{1}{24}$}
      \esegment

    \esegment

    % FIVE LEAVES

    \move (0 -80)
    \bsegment
      \move (0 0) 
      \bsegment
	\move (0 0) 
	\lvec (0 10) \onedot \rlvec (-20 40)
	\move (0 10) \rlvec (20 40)
	\move (-5 20) \onedot \rlvec (15 30)
	\move (-10 30) \onedot \rlvec (10 20)
	\move (-15 40) \onedot \rlvec (5 10)
	\move (6 0) \htext{$\brok{1}{2}$}
      \esegment
     
      \move (60 0)
      \bsegment
	\move (0 0) 
	\lvec (0 10) \onedot \rlvec (-20 40)
	\move (0 10) \rlvec (20 40)
	\move (-5 20) \onedot \rlvec (15 30)
	\move (5 40) \onedot \rlvec (-5 10)
	\move (-15 40) \onedot \rlvec (5 10)
	\move (6 0) \htext{$\brok{1}{8}$}
      \esegment
      
      \move (120 0)
       \bsegment
	\move (0 0) 
	\lvec (0 10) \onedot \rlvec (-20 40)
	\move (0 10) \rlvec (20 40)
	\move (-10 30) \onedot \rlvec (10 20)
	\move (15 40) \onedot \rlvec (-5 10)
	\move (-15 40) \onedot \rlvec (5 10)
	\move (6 0) \htext{$\brok{1}{4}$}
      \esegment
     
      \move (200 5) 
      \bsegment
	\move (0 0) 
	\lvec (0 10) \onedot \lvec (-10 30)
	\onedot 
	\rlvec (-10 12)
	\move (-10 30) \rlvec (10 12)
	\move (-10 30) \rlvec (0 12)
	\move (0 10) \lvec (20 42)
	\move (-5 20) \onedot \lvec (10 42)
	\move (6 0) \htext{$\brok{1}{6}$}
      \esegment

      \move (260 5) 
      \bsegment
	\move (0 0) 
	\lvec (0 10) \onedot \lvec (-10 30)
	\onedot 
	\rlvec (-10 12)
	\move (-5 20) \lvec (0 42)
	\move (-10 30) \rlvec (0 12)
	\move (0 10) \lvec (20 42)
	\move (-5 20) \onedot \lvec (10 42)
	\move (6 0) \htext{$\brok{1}{4}$}
      \esegment
      
      \move (320 5) 
      \bsegment
	\move (0 0) 
	\lvec (0 10) \onedot \lvec (-10 30)
	\onedot 
	\rlvec (-10 12)
	\move (-10 30) \rlvec (10 12)
	\move (-10 30) \rlvec (0 12)
	\move (0 10) \lvec (12 30) \onedot \lvec (10 42)
	\move (12 30) \lvec (20 42)
	\move (6.5 0) \htext{$\brok{1}{12}$}
      \esegment

      \move (380 5)
      \bsegment
	\move (0 0) 
	\lvec (0 10) \onedot \lvec (-10 20) \lvec (-15 30) \lvec (-20 42)
	\move (0 10) \lvec (20 42)
	\move (0 10) \onedot \lvec (10 42)
	\move (-10 20) \onedot \lvec (0 42)
	\move (-15 30) \onedot \lvec (-10 42)
	\move (6 0) \htext{$\brok{1}{4}$}
      \esegment

      \move (440 5)
      \bsegment
	\move (0 0) 
	\lvec (0 10) \onedot \lvec (-15 30) \lvec (-20 42)
	\move (-15 30) \onedot \lvec (-10 42)
	\move (0 10) \lvec (2.5 30) \onedot \lvec (0 42)
	\move (2.5 30) \lvec (10 42)
	\move (0 10) \onedot \lvec (20 42)
	\move (6 0) \htext{$\brok{1}{8}$}
      \esegment
      \move (520 10) 
      \bsegment
	\move (0 0) 
	\lvec (0 10) \onedot \lvec (-12 22) 
	\onedot \lvec (-20 33)
	\move (-12 22) \lvec (-10 33)
	\move (0 10) \lvec(0 33)
	\move (0 10) \lvec(10 33)
	\move (0 10) \lvec(20 33)
 	\move (6.5 0) \htext{$\brok{1}{12}$}
     \esegment
      \move (580 10) 
      \bsegment
	\move (0 0) 
	\lvec (0 10) \onedot \lvec (-5 20) 
	\onedot \lvec (-20 33)
	\move (-5 20) \lvec (-10 33)
	\move (-5 20) \lvec(0 33)
	\move (-5 20) \lvec(10 33)
	\move (0 10) \lvec(20 33)
	\move (7 0) \htext{$\brok{1}{24}$}
      \esegment
      \move (640 10) 
      \bsegment
	\move (0 0) 
	\lvec (0 10) \onedot \lvec (-10 20) 
	\onedot \lvec (-20 33)
	\move (-10 20) \lvec (-10 33)
	\move (-10 20) \lvec(0 33)
	\move (0 10) \lvec(10 33)
	\move (0 10) \lvec(20 33)
	\move (6 0) \htext{$\brok{1}{12}$}
      \esegment

      \move (700 3) 
      \bsegment
	\move (0 8) 
	\lvec (0 18) 
	\onedot \lvec (-16 33)
	\move ( 0 18) \lvec (-8 36)
	\move (0 18) \lvec (0 38)
	\move ( 0 18) \lvec (8 36)
	\move ( 0 18) \lvec (16 33)
	\move (7.7 8) \htext{$\brok{1}{120}$}
      \esegment

    \esegment

    \setunitscale 1
  \end{texdraw}\end{equation}

  Taking core yields the solution to the Dyson--Schwinger equation in 
  Example~\ref{ex:infinite-symmetric-DSE}.  Hence the coefficients there
  are explained as the number of stable trees with $k+1$ leaves, counted with
  symmetry factors.
\end{blanko}

More generally we have the following theorem, whose proof is straightforward.
\begin{theorem}
  Consider any finitary polynomial endofunctor $P$ (assumed to have no nullary and no unary 
  operations), and its corresponding abstract Dyson--Schwinger equation
  $$
  X \isopilback 1 + P(X),
  $$
  solved by $\tree_P$, the groupoid of $P$-trees.
  Write down the corresponding Bergbauer--Kreimer style equation in $\CK$,
  with solution $\sum_k c_k \alpha^k$.  Then the coefficient of a combinatorial
  tree $t$ in $c_k$ is the number of $P$-trees with $k+1$ leaves and core $t$.
\end{theorem}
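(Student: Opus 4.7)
The plan is to transport the Green function $G \in \BB_P[[\alpha]]$ through the core bialgebra homomorphism $\pi : \BB_P \to \CK$ (\ref{core-bialghomo}) and identify the image with the unique Bergbauer--Kreimer solution.

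By Theorem~\ref{thm:hoinit-gr}, the groupoid $\tree_P$ solves $X \isopilback 1 + P(X)$. Since $P$ has no nullary or unary operations, the operadic grading (number of leaves minus one, as every tree has a single root) takes only non-negative values and the homogeneous pieces $\tree_P^{(k)}$ of $P$-trees with $k{+}1$ leaves are finite groupoids. I form the Green function
$$G \;=\; \sum_{k \geq 0} G_k\, \alpha^k, \qquad G_k \;=\; \sum_{T \in \pi_0 \tree_P^{(k)}} \frac{T}{|\Aut(T)|},$$
as in Section~\ref{sec:groupoid-Green}. Combining the structure equivalence of $\tree_P$ with the explicit groupoid formula $P(X) \simeq \int^{b \in B} X^{E_b}$, and taking homotopy cardinality, yields the internalised abstract Dyson--Schwinger equation
$$G \;=\; 1 \;+\; \sum_{b \in \pi_0 B} \frac{\alpha^{|E_b|-1}}{|\Aut(b)|}\, B^b_+\bigl(G^{|E_b|}\bigr)$$
inside $\BB_P[[\alpha]]$, where each $\alpha$-exponent records the operadic degree added by the corolla $b$.

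Now apply $\pi$. The key intertwining is $\pi \circ B^b_+ = B_+ \circ \pi$: a $P$-tree whose bottom corolla is $b$, with a forest $F$ grafted onto the leaves of $b$, has core equal to the combinatorial tree obtained from $\pi(F)$ by grafting onto a single new root node, since both the leaves of $b$ and its root get shaved off. Collecting isoclasses $b \in B$ by arity gives
$$\pi(G) \;=\; 1 \;+\; \sum_{n \geq 1} w_n\, \alpha^n\, B_+\bigl(\pi(G)^{n+1}\bigr), \qquad w_n \;=\; \sum_{\substack{b \in \pi_0 B \\ |E_b| = n+1}} \frac{1}{|\Aut(b)|},$$
which is by definition the Bergbauer--Kreimer equation associated to $P$. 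Since this equation admits a unique solution $\sum_k c_k \alpha^k$ in $\CK[[\alpha]]$, determined order by order (\ref{BK-DSE}), we conclude $c_k = \pi(G_k)$. Expanding,
$$c_k \;=\; \sum_t \Biggl(\;\sum_{\substack{T \in \pi_0 \tree_P^{(k)} \\ \pi(T)=t}} \frac{1}{|\Aut(T)|}\Biggr)\, t,$$
so the coefficient of $t$ is the homotopy cardinality of the groupoid of $P$-trees with $k{+}1$ leaves and core $t$ --- the intended meaning of ``number of $P$-trees counted with symmetry factors'' (cf.~Examples~\ref{ex:stable-planar} and \ref{ex:stable}; in the planar case all automorphism groups are trivial and the count is literal).

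The bookkeeping of $\alpha$-exponents and the read-off of the $w_n$ are routine; the step deserving real attention is the intertwining $\pi \circ B^b_+ = B_+ \circ \pi$, which must be verified at the groupoid level, i.e.\ compatibly with automorphisms, so that the symmetry factors transport correctly. Using the description of $B^b_+$ via the span $\forest_1 \leftarrow \forest_1 \times_{\forest_0} \{b\} \to \tree$ from \ref{B+}, this reduces to the essentially tautological fact that shaving root and leaves is preserved under grafting along the leaves of $b$.
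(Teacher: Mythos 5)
The paper offers no proof of this theorem beyond declaring it straightforward, and your argument is precisely the one the surrounding text intends: internalise the structure equivalence as a Dyson--Schwinger equation for the Green function $G$, push it through the core homomorphism using $\pi \circ B^b_+ = B_+ \circ \pi$, and invoke uniqueness of the Bergbauer--Kreimer solution to identify $c_k$ with $\pi(G_k)$. Your reading of ``number of $P$-trees'' as homotopy cardinality (weighted by $1/\norm{\Aut(T)}$) is the correct and necessary interpretation, consistent with Examples~\ref{ex:stable-planar} and~\ref{ex:stable}, and the identification $w_n=\sum_{\norm{E_b}=n+1}1/\norm{\Aut(b)}$ correctly pins down what ``the corresponding Bergbauer--Kreimer style equation'' means.
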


%%%%%%%%%%%%%%%%%%%%%%%%%%%%%%%%%%%%%%%%%%%%%%%%%%
\section{Natural transformations}
%%%%%%%%%%%%%%%%%%%%%%%%%%%%%%%%%%%%%%%%%%%%%%%%%%
\label{sec:nat}

\begin{blanko}{Homotopy natural transformations.}\label{NT}
  Given two polynomial endofunctors 
  $$
  P,P': \Grpd_{/I} \to \Grpd_{/I},
  $$
  a {\em homotopy natural transformation} from $P'$ to $P$
  is the data of:
  for each object $X$ in $\Grpd_{/I}$ a morphism $u_X: P'(X) \to P(X)$ in 
  $\Grpd_{/I}$,
  and for each morphism $f:X \to Y$ in $\Grpd_{/I}$ a $2$-cell
  $$\xymatrix{
     P'(X) \ar[r]^{u_X}\ar[d]_{P'(f)} \ar@{}[rd]|\Rightarrow& P(X) \ar[d]^{P(f)} \\
     P'(Y) \ar[r]_{u_Y} & P(Y) ,
  }$$
  required to satisfy various compatibility conditions.
  It is called a {\em cartesian} natural transformation (the word `homotopy' being 
  dropped) when for every $f:X \to Y$ this square is a homotopy pullback.
  
  Just as in the classical case~\cite{Gambino-Kock:0906.4931}, cartesian
  natural transformations $P'\Rightarrow P$ correspond precisely to 
  diagrams
  $$    \xymatrix{
  I \ar[d]_=& \ar[l]  E'\ar[d] \drpullback \ar[r] & B'\ar[d] \ar[r] 
& I\ar[d]^= \\
  I  &\ar[l] E\ar[r] & B \ar[r]  &I ,
  }$$
  the middle square being required to be a homotopy pullback.
\end{blanko}

\begin{blanko}{Cartesian morphisms.}\label{cart}
  More generally there is a notion of cartesian morphism between
  polynomial endofunctors defined on different slices, which amounts to diagrams
  \begin{equation}\label{eq:cart}\xymatrix{
  I' \ar[d]& \ar[l]  E'\ar[d] \drpullback \ar[r] & B'\ar[d] \ar[r] 
& I'\ar[d] \\
  I  &\ar[l] E\ar[r] & B \ar[r]  &I .
  }\end{equation}
  With the interpretation of the groupoids $B'$ and $B$ as spaces of 
  operations, we see that the pullback condition on the middle square amounts
  to requiring that each operation of $P'$ is sent to an operation of $P$
  with the same arity.  The outer squares express compatibility with colours.
  
  In particular, as was already mentioned, the notion of $P$-tree is a special 
  case of cartesian morphism of polynomial endofunctors.  For the same reason,
  the notion of cartesian morphism of polynomial endofunctors is precisely the
  notion of morphism that behaves well with respect to decorating trees:
  if $P' \Rightarrow P$ is a cartesian morphism of polynomial endofunctors,
  then there is induced a morphism of groupoids from $P'$-trees to $P$-trees,
  given simply by composing diagrams like \eqref{eq:cart}.  This amounts
  to redecoration of the nodes and edges along the vertical maps in 
  \eqref{eq:cart}.
  
  Furthermore, this morphism is compatible
  with the algebraic structure we have set up:
\end{blanko}

\begin{prop}
  A cartesian morphism of polynomial functors $u:P' \Rightarrow P$
  induces a bialgebra homomorphism $\BB_{P'} \to \BB_P$, which is 
  compatible with $B_+$-operators in the sense that for each $b' \in B'$
  the following square commutes:
  $$\xymatrix{
     \forest_{P'} \ar[r]\ar[d]_{b'} & \forest_{P} \ar[d]^{u(b')} \\
     \tree_{P'} \ar[r] & \tree_P .
  }$$  
\end{prop}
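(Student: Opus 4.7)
The plan is to construct the induced map concretely at the level of (groupoids of) trees and then verify each algebraic compatibility by tracing through the relevant diagrams. Recall from~\ref{cart} that a $P'$-tree is itself a cartesian morphism of polynomial endofunctors $T \Rightarrow P'$, and that composing with $u$ yields a cartesian morphism $T \Rightarrow P$, i.e.~a $P$-tree on the same underlying tree $T$. This operation is functorial in $T$, preserves disjoint unions, and therefore induces a morphism of groupoids $\tree_{P'} \to \tree_P$ and $\forest_{P'} \to \forest_P$. Passing to $\pi_0$ and extending $\Q$-linearly then gives an algebra homomorphism $\BB_{P'} \to \BB_P$, since both are free commutative algebras on iso-classes of $P'$- and $P$-trees respectively.

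The key observation for the coalgebra compatibility is that the comultiplication in $\BB_P$ (and in $\BB_{P'}$) is defined purely at the level of the underlying tree: a cut $c:S\subset T$ depends only on the shape of $T$, not on its decorations. Since redecorating along $u$ leaves the underlying tree unchanged, the set of cuts of a $P'$-tree $T'$ is in canonical bijection with the set of cuts of its image $u(T')$, and under this bijection the decorated factors $P_c$ and $S_c$ on one side are sent to those on the other. Hence $\Delta \circ u = (u\otimes u) \circ \Delta$, and the counit (nodeless trees to $1$) is likewise preserved since nodelessness is a property of the underlying tree. More formally, at the objective level of~\ref{treebialg-general}, the simplicial groupoid $\forest_\bullet$ is natural in cartesian morphisms of polynomial endofunctors, so the comultiplication span $\forest_1 \leftarrow \forest_2 \to \forest_1\times\forest_1$ pulls back along $u$ to the corresponding span for $P'$.

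For the $B_+$-compatibility square, I would use the description of $B_+$ from~\ref{B+}: the operator associated with $b'\in B'$ is the span $\forest_{P'} \leftarrow \forest_{P'}\times_{\forest_0}\{b'\} \to \tree_{P'}$, where the right-hand map is grafting onto the $P'$-corolla $b'$. Because the middle square of~\eqref{eq:cart} is a pullback, the $P$-corolla $u(b')$ has the same arity and colour profile as $b'$, so grafting a redecorated forest $u(F')$ onto $u(b')$ yields exactly the $u$-image of the tree obtained by grafting $F'$ onto $b'$. This is precisely the commutativity of the displayed square. The main (and essentially the only) subtle point is to ensure that this identification of graftings is compatible with the bijections between leaves of the corolla and roots of the forest; this is guaranteed by the pullback condition on the middle square of~\eqref{eq:cart}, which transports the $E_{b'}$-indexing on $P'$-decorations to the $E_{u(b')}$-indexing on $P$-decorations in an arity- and colour-preserving way. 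Once this observation is made, the rest reduces to unwinding the definitions.
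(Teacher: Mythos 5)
Your proof is correct and follows exactly the paper's approach: the paper's own proof consists of the single line ``It is simply a question of redecorating trees,'' relying on the observation in \ref{cart} that the induced map is composition of cartesian diagrams. Your write-up simply makes explicit the verifications (cuts depend only on the underlying tree; grafting commutes with redecoration because the middle square of \eqref{eq:cart} is a pullback) that the paper leaves implicit.
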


\begin{proof}
  (Cf.~\cite{GalvezCarrillo-Kock-Tonks:1207.6404}.)
  It is simply a question of redecorating trees.
\end{proof}

There is also a compatibility with Green functions, but it is contravariant:
\begin{prop}
  For $u: P' \Rightarrow P$ a cartesian morphism of polynomial endofunctors
  we have
  $$
  G' = u\upperstar G,
  $$
  where $G'$ is the Green function in $\BB_{P'}$ and $G$ is the 
  Green function in $\BB_P$.
\end{prop}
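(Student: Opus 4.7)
The plan is to work at the objective level and reduce the claim to a single homotopy pullback square of groupoids. Recall that the Green function of a finitary polynomial endofunctor $Q$ is represented objectively by the inclusion morphism $\tree_Q \to \forest_Q$, viewed as an object of $\Grpd_{/\forest_Q}$. By the preceding proposition, the cartesian morphism $u: P' \Rightarrow P$ induces a redecoration map $\bar u: \forest_{P'} \to \forest_P$ (restricting to $\tree_{P'} \to \tree_P$). By the very definition of pullback in slice categories, $u\upperstar G$ is the pullback of $\tree_P \to \forest_P$ along $\bar u$. Hence the identity $G' = u\upperstar G$ is equivalent to the assertion that the square
$$
\xymatrix{
\tree_{P'} \ar[r]\ar[d] \drpullback & \tree_P \ar[d] \\
\forest_{P'} \ar[r]_{\bar u} & \forest_P
}
$$
is a homotopy pullback. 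Once established, homotopy cardinality yields the vector-space identity.

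To verify the pullback property, observe that the redecoration map $\bar u$ is an identity on the underlying combinatorial tree: the vertical arrows in the cartesian diagram defining $u$ only alter colour data on edges and nodes, never the shape of the tree itself. In particular, the set of connected components of a $P'$-forest coincides with that of its image $P$-forest, so a $P'$-forest $F'$ is a $P'$-tree if and only if $\bar u(F')$ is a $P$-tree. Combined with the fact that the inclusions $\tree_Q \hookrightarrow \forest_Q$ are fully faithful (trees form a full subgroupoid of forests), this gives the pullback: an object of $\forest_{P'} \times_{\forest_P} \tree_P$ is a $P'$-forest $F'$ equipped with an isomorphism $\bar u(F') \simeq T$ to a $P$-tree, and such an isomorphism forces $F'$ to be connected, hence to be a $P'$-tree uniquely lifting $T$.

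The delicate part is checking that morphisms match under this correspondence, i.e.~that the automorphisms of a lifted $P'$-tree biject with pairs (automorphism of the $P$-tree, automorphism of the $P'$-forest structure) compatible under $\bar u$. This is a pasting of two pullback squares: the square expressing the $P$-tree condition on $T$ (arities of $T$ pull back from $B$) and the cartesian middle square of $u$ (operations of $P'$ pull back from operations of $P$). By the pasting lemma for homotopy pullbacks, the composite square expresses precisely the $P'$-tree condition on the lift. This is the only nontrivial verification; once in place, the pullback square above follows, and taking homotopy cardinality delivers $G' = u\upperstar G$.
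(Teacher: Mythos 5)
Your proposal is correct and follows essentially the same route as the paper: both reduce the identity $G' = u\upperstar G$ to the assertion that the square with $\tree_{P'}, \tree_P$ on top and $\forest_{P'}, \forest_P$ on the bottom is a homotopy pullback, using that the Green function is objectively the fully faithful functor $\tree \to \forest$. The only difference is that the paper declares this square ``obvious'' while you supply the (correct) verification that redecoration preserves the underlying tree shape, hence preserves and reflects connectedness.
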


\begin{proof}
  This follows from the definition of the Green functions as the fully faithful
  functors $\tree \to \forest$, and the obvious fact that this square is a
  homotopy pullback:
  $$\xymatrix{
     \tree_{P'} \drpullback \ar[r]\ar[d] & \tree_P \ar[d] \\
     \forest_{P'} \ar[r] & \forest_P .
  }$$
\end{proof}

\begin{blanko}{Examples.}
  Let $\C$ denote the groupoid of finite cyclically ordered sets and
  $\C'$ the groupoid of pointed cyclically ordered finite sets.
  For the $1$-variable polynomial functor represented by
  $$
  \C' \to \C ,
  $$
  the corresponding notion of tree is that of cyclic trees, i.e.~trees with a 
  cyclic ordering of the incoming edges of each node.
  We also saw the polynomial
  $$
  \N' \to \N
  $$
  whose trees are planar trees (\ref{ex:planar}), and the polynomial
  $$
  \B'\to\B
  $$
  whose trees are naked trees (\ref{ex:naked}).
  The diagram of groupoids
\[
\xymatrix{
\N' \ar[r]\ar[d]\drpullback & \N \ar[d] \\
\C' \ar[r]\ar[d]\drpullback & \C \ar[d] \\
\B' \ar[r] & \B 
}\]
now represents cartesian natural transformations,
and they induce the evident morphisms from planar trees to cyclic
trees to naked trees,
and the corresponding bialgebra homomorphisms.
\end{blanko}

\begin{blanko}{Subfunctors.}
  A useful special case of cartesian morphism is that of subfunctors.
  A subfunctor is simply a natural transformation such that for each object in the 
  domain, the associated map is injective.  These are always cartesian.
\end{blanko}

\begin{blanko}{Example.}
  The inclusion of $X^2$ into $X^0+X^1+X^2+X^3+\cdots$ is the inclusion of
  binary planar trees into all planar trees.
\end{blanko}

\begin{blanko}{Truncation.}
  From the viewpoint of Dyson--Schwinger equations, subfunctors implement a good
  notion of truncation, which is an important aspect of the theory of
  Dyson--Schwinger equations.  In realistic situations in quantum field theory,
  the polynomial will have infinitely many terms, for example by listing
  infinitely many primitive graphs for a given theory.  
  For practical reasons one may be forced to truncate this sequence,
  allowing only a finite number of primitive graphs.
  It is a prominent aspect of many current developments, notably in
  Quantum Chromodynamics, to select these truncations carefully so as to maintain
  certain qualitative properties of the solutions.
  Examples of much studied truncations are the rainbow and ladder-rainbow 
  approximations to Yukawa theory.  Many other examples can be found in the
  survey of Roberts~\cite{Roberts:1203.5341}.
\end{blanko}

%%%%%%%%%%%%%%%%%%%%%%%%%%%%%%%%%%%%%%%%%%%%%%%%%%
\section{Foissy equations}
%%%%%%%%%%%%%%%%%%%%%%%%%%%%%%%%%%%%%%%%%%%%%%%%%%

\label{sec:Foissy}

\begin{blanko}{Foissy equations.}
  Foissy~\cite{Foissy:0707.1204} has studied extensively a different form of
  combinatorial Dyson--Schwinger equations
  (which is required to take place inside the Hopf algebras of trees,
  or actually, the non-commutative Hopf algebra of planar trees and forests),
  \begin{equation}\label{eq:Foissy}
  Y = \alpha B_+ (f(Y)),
  \end{equation}
  referring to a formal series $f(h)=\sum_{k=0}^\infty p_k h^k$
  required to start with $p_0 = 1$.

  There are two notable differences compared to the Bergbauer--Kreimer 
  equations~\eqref{eq:BK-DSE}.
  The first is the placement of the constant $1$: in the Bergbauer--Kreimer
  equation, this constant is outside the $B_+$-operator, whereas in the Foissy
  equation~\eqref{eq:Foissy} it is present as the requirement that the series
  $f(h)$ starts with $1$.
  
  The second, more substantial difference is the role of the coupling constant
  $\alpha$: while in the Bergbauer--Kreimer equation the power of $\alpha$
  tracks the operadic grading (which really only becomes visible in the present
  formalism), in the Foissy equation the power of $\alpha$ tracks the node
  grading.  As we proceed to explain, both differences are illuminated by the
  groupoid setting.
\end{blanko}

\begin{blanko}{Foissy series.}\label{ex:Foissy}
  While a unique solution always exists for the Foissy Dyson--Schwinger
  equations~\eqref{eq:Foissy}, for the same reason as in the Bergbauer--Kreimer
  case, it is {\em not} always true that the solution spans a Hopf sub-algebra.
  Foissy~\cite{Foissy:0707.1204} figured out exactly for which $f$ the solution
  spans a Hopf sub-algebra.  Subsequent papers \cite{Foissy:0909.0358},
  \cite{Foissy:1112.2606} generalised this to the multivariate setting.  These
  {\em Foissy series} $f$ come in continuous families, which include the
  following three cases with non-negative coefficients.
  
  \begin{enumerate}
  \item The exponential series $f(h) = \exp(h)$.
   In this case, the equation thus reads
   $$
   Y = \alpha B_+( \exp(Y) ),
   $$
   and with the Ansatz $Y = \sum_{n\geq 1} a_n \alpha^n$, the solution is found 
   to be
   $$
  a_1 = \raisebox{1pt}{\ctreeone} , \qquad
  a_2 = \raisebox{-2pt}{\ctreetwo} , \qquad
  a_3 = \raisebox{-4pt}{\ctreethreeL} + \brok{1}{2} \raisebox{-2pt}{\ctreethreeV} , \qquad
  a_4 =  \raisebox{-8pt}{\ctreefourL} 
  + \brok{1}{2} \!\raisebox{-4pt}{\ctreefourY} +  \raisebox{-4pt}{\ctreefourVL} +
  \brok{1}{6} \raisebox{-2pt}{\ctreefourW} ,
  \phantom{xxxxxxxxx}
  $$
  $$\phantom{x}
  a_5 = 
   \raisebox{-12pt}{\ctreefiveL} + \brok{1}{2} \!\raisebox{-8pt}{\ctreefiveIY}
  +  \raisebox{-8pt}{\ctreefiveYL} + \raisebox{-8pt}{\ctreefiveVLL}
  + \brok{1}{2}\!\raisebox{-4pt}{\ctreefiveVlV}
  + \brok{1}{2}\raisebox{-4pt}{\ctreefiveVII}
  + \brok{1}{6}\!\!\raisebox{-4pt}{\ctreefiveIW}
  + \brok{1}{2}\raisebox{-4pt}{\ctreefiveWL}
  + \brok{1}{24}\raisebox{-4pt}{\ctreefivewide} ,
  \qquad \text{etc.}
  $$
  
  \item The geometric series
  $f(h) = \frac{1}{1-h}$.
  In this case, the equation thus reads
  $$
  Y = \alpha B_+( \brok{1}{1-Y} ),
  $$
  and with the Ansatz $Y = \sum_{n\geq 1} a_n \alpha^n$, the solution is found 
  to be
  $$
  a_1 = \raisebox{1pt}{\ctreeone} , \qquad
  a_2 = \raisebox{-2pt}{\ctreetwo} , \qquad
  a_3 = \raisebox{-4pt}{\ctreethreeL} + \raisebox{-2pt}{\ctreethreeV} , \qquad
  a_4 =  \raisebox{-8pt}{\ctreefourL} 
  + \raisebox{-4pt}{\ctreefourY} +  2 \raisebox{-4pt}{\ctreefourVL} 
  + \raisebox{-2pt}{\ctreefourW} , 
  \phantom{xxxxxxx}
  $$
  $$\phantom{xxx}
  a_5 = 
   \raisebox{-12pt}{\ctreefiveL} + \raisebox{-8pt}{\ctreefiveIY}
  + 2 \raisebox{-8pt}{\ctreefiveYL} + 2 \raisebox{-8pt}{\ctreefiveVLL}
  + 2 \!\raisebox{-4pt}{\ctreefiveVlV}
  + \raisebox{-4pt}{\ctreefiveVII}
  + \raisebox{-4pt}{\ctreefiveIW}
  + 3 \raisebox{-4pt}{\ctreefiveWL}
  + \raisebox{-4pt}{\ctreefivewide} ,
  \qquad \text{etc.}
  $$
  
  \item For each $n\in \N$, the polynomial $f(h) = (1+h)^n$.
  As an example, with $n=3$ the equation reads
   $$
   Y = \alpha B_+( (1+Y)^3 ),
   $$
   and with the Ansatz $Y = \sum_{n\geq 1} a_n \alpha^n$, the solution is found 
   to be
   $$
  a_1 = \raisebox{1pt}{\ctreeone} , \qquad
  a_2 = 3 \raisebox{-2pt}{\ctreetwo} , \qquad
  a_3 = 9 \raisebox{-4pt}{\ctreethreeL} + 3 \raisebox{-2pt}{\ctreethreeV} , \qquad
  a_4 =  27 \raisebox{-8pt}{\ctreefourL} 
  + 9 \!\raisebox{-4pt}{\ctreefourY} +  18 \raisebox{-4pt}{\ctreefourVL} 
  + \raisebox{-2pt}{\ctreefourW} ,
  \phantom{xxxxxx}
  $$
  $$ \phantom{xx}
  a_5 = 
   81 \raisebox{-12pt}{\ctreefiveL} + 27 \!\raisebox{-8pt}{\ctreefiveIY}
  + 54 \raisebox{-8pt}{\ctreefiveYL} + 54 \raisebox{-8pt}{\ctreefiveVLL}
  + 18 \! \raisebox{-4pt}{\ctreefiveVlV}
  + 27 \raisebox{-4pt}{\ctreefiveVII}
  + 3 \!\! \raisebox{-4pt}{\ctreefiveIW}
  + 9 \raisebox{-4pt}{\ctreefiveWL} ,
  \qquad \text{etc.}
  $$
  \end{enumerate}
\end{blanko}

\begin{blanko}{Foissy-style fixpoint equations in groupoids.}
  \label{bla:Foissy}
  Given a polynomial endofunctor $P$, represented by
  $$
  I \leftarrow E \to B \to I ,
  $$
  this paper concentrates on the fixpoint equation
  \begin{equation}\label{eq:1+P}
    X \isopilback 1 + P(X).
  \end{equation}
  (Recall that $1$ denotes the terminal object
  in $\Grpd_{/I}$, the trivial family $I \to I$.)
  A variant is the equation
  \begin{equation}\label{eq:PY}
  X \isopilback P(X)  .
  \end{equation}
  
  The first equation can
  be considered a special case of the second equation, by simply considering the
  polynomial functor $1+P$.  This amounts to interpreting $1$ as a nullary 
  operation, or more precisely: a collection of
  nullary operations, one for each colour $i \in I$.
  On the other hand, it should be noted that if $P$ has no nullary operations,
  then $\emptyset = P(\emptyset)$, hence clearly 
  $\emptyset$ is the least fixpoint in this case (and not an interesting 
  one).  More precisely, if for some colour $i \in I$ the polynomial $P$ has no nullary
  operation of that colour, the solution will be supported away from that 
  colour as well, and the whole equation could be formulated with less colours.
  All told, it is reasonable to assume that $P$ has at least one nullary 
  operation of each colour, and hence is essentially of the form $1+P$.
  So the two forms of fixpoint equation have essentially the same content
  (as stated more formally in Proposition~\ref{prop:Q=1+P} below).
  
  The first form is favoured because of Theorem~\ref{thm:hoinit-gr}, and because
  of the rich structure of $P$-trees.  Furthermore, in a precise sense,
  initial-algebra formation is just a special case of free-monad formation
  \cite{Kock:NotesOnPolynomialFunctors}.  The free monad on $P$, denoted
  $\overline P$, is again polynomial; it has the following description on
  objects: $\overline P(S)$ is the initial $(S+P)$-algebra.  For any polynomial
  functor $P$ there is a canonical equivalence $P(1)\simeq B$.  In particular,
  the space of operations of $\overline P$ is $\overline P(1)$, which is the
  initial $(1+P)$-algebra, not the initial $P$-algebra.
\end{blanko}

\begin{blanko}{Dead $P$-trees.}
  A {\em dead $P$-tree} is a $P$-tree without leaves.
\end{blanko}

\begin{prop}
  The polynomial fixpoint equation
  $$
  X \isopilback P(X)
  $$
  has a least solution, namely the groupoid of dead $P$-trees.
\end{prop}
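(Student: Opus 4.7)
The plan is to mirror the proof of Theorem~\ref{thm:hoinit-gr} almost verbatim. Let $W$ denote the groupoid of dead $P$-trees, regarded as an object of $\Grpd_{/I}$ via the decoration of the root edge. The essential preliminary observation is that every dead $P$-tree has a bottom node: since it has no leaves, the root edge in particular must be the output edge of some node. Hence no trivial $P$-tree is dead, and (in contrast to the $(1+P)$-case) the inductive shape of $W$ is anchored not by trivial trees but by the nullary $P$-corollas at the top of each dead tree.

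Next I would exhibit a natural equivalence $W \isopilback P(W)$ over $I$. Using the combinatorial interpretation of functor evaluation (\ref{Peval}), an object of $P(W)$ is a pair $(b,f)$ with $b\in B$ and $f\colon E_b \to W$ a map over $I$. The forward map sends $(b,f)$ to the $P$-tree obtained by grafting each $f(x)$ onto the leaf of the corolla $b$ labelled by $x\in E_b$; the result is again dead because each $f(x)$ is dead and the former leaves of $b$ become internal edges after grafting. In the other direction, a dead $P$-tree $T$ with bottom node $b$ yields the pair whose $f$ sends $x\in E_b$ to the ideal subtree $D_x$ (cf.~\ref{ideal}), which is itself dead since any leaf of $D_x$ would be a leaf of $T$. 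The identification of automorphism groups on the two sides is handled exactly as in the proof of Theorem~\ref{thm:hoinit-gr}, by invoking the proposition at the end of Section~\ref{sec:P}.

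Finally, for initiality I would argue by induction on the number of nodes, as in the discrete case~\cite{Kock:0807}. The base case is a nullary $P$-corolla, where the value of a homomorphism into any $P$-algebra $(A,\alpha)$ is forced by $\alpha$ applied to the corresponding object of $P(A)$. The inductive step then uses the equivalence $W\isopilback P(W)$ together with functoriality of $P$, in the standard way. The only subtlety — and the point I expect to need the most care — is the absence of a trivial-tree base case: the induction is now anchored purely by the nullary operations of $P$, so if $P$ has no nullary operation of some colour $i\in I$, the fibre of $W$ over $i$ is empty and the argument becomes vacuous in that colour, consistent with the observation $\emptyset = P(\emptyset)$ recorded in \ref{bla:Foissy}.
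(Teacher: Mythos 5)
Your proposal is correct and follows exactly the route the paper takes: the paper's own proof consists of the single remark that the statement ``can be checked by the same line of argument as the proof of Theorem~\ref{thm:hoinit-gr}'', and your write-up is precisely that argument spelled out, with the right observations that dead trees always have a bottom node, that grafting dead trees onto a corolla yields a dead tree, and that the empty solution arises when $P$ lacks nullary operations (a point the paper also records immediately after the proposition).
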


\begin{proof}
  This can be checked by the same line of argument as the proof of
  Theorem~\ref{thm:hoinit-gr}.
\end{proof}
Note that for the existence of a dead $P$-tree, it is necessary that $P$ has at
least one nullary operation.  Indeed, when $P$ has no constant term, obviously
$\emptyset$ is a solution, and obviously the least such.

\begin{prop}\label{prop:Q=1+P}
  For $Q=1+P$, there is a canonical equivalence between the groupoid of 
  $P$-trees and the groupoid of dead $Q$-trees.  More precisely the
  equations 
  $$
  X \isopilback 1 + P(X) \qquad \text{ and } \qquad Y \isopilback Q(Y)
  $$
  have equivalent solutions.
\end{prop}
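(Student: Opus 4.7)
The plan is to construct an explicit equivalence of groupoids between $P$-trees and dead $Q$-trees, and then deduce the second statement from the first using Theorem~\ref{thm:hoinit-gr} and the preceding proposition about $X \isopilback P(X)$.

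First, I would set up the polynomial representation of $Q = 1 + P$. If $P$ is given by $I \leftarrow E \to B \to I$, then $Q$ is represented by
$$I \leftarrow E \to B + I \to I,$$
where the extra $I$ summand in the middle contributes one nullary operation of each colour $i \in I$ (the map from $E$ has empty fibres over this summand, forcing arity zero). These nullary operations will play the role of leaf caps. This is the conceptual content of the statement ``$1$ as a collection of nullary operations, one for each colour'' from paragraph \ref{bla:Foissy}.

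Next I would define two mutually inverse functors. The capping functor $\Phi$ sends a $P$-tree $T$ to the $Q$-tree obtained by attaching, to each leaf $\ell$ of $T$ of colour $i$, a new nullary $Q$-node decorated by the nullary operation of colour $i$ coming from the $I$-summand. Since every edge of $\Phi(T)$ is now the output edge of some node, $\Phi(T)$ is a dead $Q$-tree. The uncapping functor $\Psi$ sends a dead $Q$-tree $T'$ to the $P$-tree obtained by deleting every node decorated by the $I$-summand; such nodes are nullary, hence sit at the top of $T'$, and their output edges become the leaves of the resulting $P$-tree. Checking $\Psi \circ \Phi \simeq \Id$ and $\Phi \circ \Psi \simeq \Id$ is a direct diagram chase using the definitions in \ref{PtreeGr}: the decorations of caps are forced by the colours of the leaves they cover, so the two constructions are inverse on isomorphism classes, and an automorphism of $T$ extends uniquely to an automorphism of $\Phi(T)$ and vice versa (automorphisms of $\Phi(T)$ must preserve the node decoration, hence send caps to caps).

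For the second statement, Theorem~\ref{thm:hoinit-gr} identifies the homotopy initial $(1+P)$-algebra with the groupoid of $P$-trees, and the preceding proposition identifies the homotopy initial $Q$-algebra with the groupoid of dead $Q$-trees. Transporting the algebra structure along the equivalence $\Phi$ shows that the two solutions coincide up to canonical equivalence. The main obstacle is not conceptual but bookkeeping: one must verify that $\Phi$ respects the structure maps $1 + P(X) \isopilback X$ versus $Q(Y) \isopilback Y$, i.e.\ that capping is compatible with grafting of corollas. This follows because under $\Phi$ the unit summand $1 \hookrightarrow 1 + P(X)$ corresponds precisely to the subfamily of $Q$-corollas coming from the $I$-summand, and the $P(X)$ summand corresponds to the $Q$-corollas coming from $B$; grafting in either algebra produces the same tree after capping, so the bijective correspondence is natural.
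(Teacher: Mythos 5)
Your proposal is correct and follows essentially the same route as the paper: the paper's proof also interprets the $1$-summand of $Q$ as a family of nullary (``white'') operations, sends a $P$-tree to the dead $Q$-tree obtained by grafting such a node onto each leaf, prunes them in the other direction, and notes that no automorphisms are created or killed since these nullary nodes are automorphism-free. Your additional remarks on the explicit polynomial representation of $Q$ and on compatibility with the algebra structure maps are elaborations of the same argument rather than a different approach.
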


\begin{proof}  
  We interpret the  $1$-summand in $Q$ as a family of nullary operations,
  for example pictured as white nodes.
  Then the equivalence $X \simeq Y$ is given
  by sending a $P$-tree to the $Q$-tree obtained by grafting one of those
  new nullary white nodes onto each leaf.  In the other direction, given
  a $Q$-tree, prune all its white nullary nodes.  It is clear that this
  gives one-to-one correspondence, and no automorphisms are created or killed,
  since the new nullary nodes are clearly automorphism-free.
\end{proof}

Although the dead trees  resemble the 
combinatorial trees of the classical Connes--Kreimer Hopf algebra,
the comultiplication that can be transported to dead trees via
the equivalence of Proposition~\ref{prop:Q=1+P} is not the Connes--Kreimer
comultiplication.  Indeed, the cuts involved in describing this transported
comultiplication are not allowed to go above the nullary nodes, and every
cut, rather than deleting edges, replaces them with nullary nodes.
It seems in fact at the moment that this new comultiplication is not
particularly useful, and that it is better to stick with the canonical
comultiplication of $P$-trees.

The following seems more useful.  Let $\BB = \BB_P$ denote the bialgebra of
$P$-trees.

\begin{prop}
  Dead $P$-trees form a right $\BB$-comodule algebra $\DD$, more precisely a right 
  $\BB$-coideal algebra, and in fact a graded coideal algebra for either grading.
\end{prop}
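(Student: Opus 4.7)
The plan is to verify the three defining conditions of a right coideal subalgebra of $\BB$: closure of $\DD$ under multiplication, containment $\Delta(\DD) \subseteq \DD \tensor \BB$, and compatibility with both gradings. Closure is built in, since $\DD$ is by construction the free commutative subalgebra of $\BB$ generated by the isoclasses of dead $P$-trees.

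The crucial step is the coideal property. Given a dead $P$-tree $T$ and a cut $c \colon S \subseteq T$, the comultiplication formula of \ref{treebialg-general} gives $\Delta(T) = \sum_c P_c \tensor S_c$, where $P_c$ is the forest of ideal subtrees $D_e$, one for each leaf $e$ of $S$. I would show that every component $D_e$ of $P_c$ is itself dead. By the definition of ideal subtree (\ref{ideal}), $D_e$ consists of $e$ together with all its descendants in $T$, so the leaves of $D_e$ are exactly the leaves of $T$ lying above $e$. Since $T$ has no leaves, neither does $D_e$, so $D_e$ is a dead $P$-tree and $P_c \in \DD$. Thus $\Delta(T) \in \DD \tensor \BB$ for every generator $T$ of $\DD$. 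Multiplicativity of $\Delta$, combined with the fact that both $\DD$ and $\BB$ are subalgebras of $\BB$, extends this to all of $\DD$, yielding the right coaction $\rho \colon \DD \to \DD \tensor \BB$; the coaction axioms then follow automatically from coassociativity and counitality of $\Delta$ in $\BB$.

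For compatibility with the gradings, I would argue uniformly for both. A cut partitions the nodes of $T$ between $S_c$ and $P_c$, so the node grading of $T$ equals the sum of those of $P_c$ and $S_c$. For the operadic grading, if the cut severs $k$ edges of $T$, then $S_c$ acquires $k$ new leaves (while retaining the original root), contributing operadic degree $k-1$, whereas $P_c$ splits into $k$ dead subtrees each with one new root and no leaves (by the argument above), contributing operadic degree $-k$; the two add up to $-1$, which is the operadic degree of $T$. Hence $\rho$ is homogeneous for both gradings.

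No real obstacle is anticipated: the whole argument is a direct consequence of the elementary observation that ideal subtrees of a leafless tree are leafless. The only point worth flagging is that $\Delta$ does \emph{not} land in $\DD \tensor \DD$, because $S_c$ typically carries new leaves coming from the cuts; this asymmetry is precisely the reason one obtains a right coideal subalgebra rather than a sub-bialgebra.
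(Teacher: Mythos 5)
Your proof is correct and follows essentially the same route as the paper's: the algebra structure is closure of dead forests under disjoint union, the coaction is the restricted comultiplication, and the key observation in both is that the ideal subtrees $D_e$ constituting the crown $P_c$ of a cut of a leafless tree are themselves leafless, while the bottom tree need not be. You additionally spell out the degree bookkeeping for the operadic grading ($k-1$ for the bottom tree plus $-k$ for the $k$ dead crown trees, totalling $-1$), which the paper leaves implicit.
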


\begin{proof}
   This amounts essentially to definitions.
   Recall (e.g.~from \cite{Jonah:1968}) that for a bialgebra $B$, a
   right $B$-comodule algebra is an algebra $M$ together with a coaction $M \to
   M \tensor B$ required to be an algebra homomorphism.  The algebra structure
   on $\DD$ is disjoint union (induced from $\BB$): clearly the disjoint union
   of dead forests is again a dead forests.  The coaction is just the
   comultiplication restricted from $\BB$ to $\DD$: when cutting a dead tree
   $T$, the crown forest $P_c(T)$ is again dead, whereas the bottom tree
   $R_c(T)$ is an arbitrary tree, not in general dead.  (There is no canonical
   way of making it dead: we cannot just remove the leaves of the tree, as that
   would destroy the $P$-decoration, neither is it possible to add nullary
   operations to the leaves, as that would mess up with coassociativity.)
\end{proof}

\begin{blanko}{Taking core.}
  Observe that upon taking core, the difference between $P$-trees and dead 
  $P$-trees goes away.  In particular, the right $\BB$-comodule algebra $\DD$
  is transformed into the Connes--Kreimer Hopf algebra just like $\BB$ itself.
  In this way, we can recover the three Foissy equations from \ref{ex:Foissy}.
\end{blanko}

\begin{blanko}{Example: naked trees.}
  Consider the Foissy equation
  $$
  Y \isopilback \exp(Y) .
  $$
  By Proposition~\ref{prop:Q=1+P}, the solution $Y$ is the groupoid of dead 
  trees.  Taking core does nothing (except removing the 
  root edge), and upon sorting by node degree, 
  the solution is precisely that of Foissy (\ref{ex:Foissy} (1)), with the coefficient
  of a given tree being the inverse of its symmetry factor.
\end{blanko}

\begin{blanko}{Example: planar trees.}
  For the Foissy equation
  $$
  Y \isopilback \frac{1}{1-Y}
  $$
  the solution $Y$ is the groupoid of dead planar trees (which, since planar
  trees have no automorphisms, is just the set of isoclasses of planar trees).
  Taking core does nothing (except removing the root edge), and upon sorting by
  node degree, the solution is precisely that of Foissy (\ref{ex:Foissy} (2)), with the coefficient of
  a given tree being the number of different planar structures.
\end{blanko}

\begin{blanko}{Example: ternary planar trees.}
  For the Foissy equation
  $$
  Y \isopilback (1+Y)^3
  $$
  the solution $Y$ is the groupoid of dead $(1+Y)^3$-trees, which we analyse a 
  little further.  The polynomial functor $Y \mapsto (1+Y)^3$ is the composite
  of $Y \mapsto 1+Y$ and $X \mapsto X^3$. The first has a nullary operation
  \begin{texdraw}
    \move (0 0) \lvec (0 6) \move (0 7.5)\lcir r:1.5
  \end{texdraw}
  and a unary 
  operation
\inlineDotlessTree  ; the second has only a ternary operation
\begin{texdraw}
  \move (0 0) \lvec (0 6) \fcir f:0 r:1.5 \lvec (-4 10) 
  \move (0 6) \lvec (0 12) 
    \move (0 6) \lvec (4 10) 
\end{texdraw}
.
  By general theory of polynomial 
  functors~\cite{Kock:NotesOnPolynomialFunctors}, the operations of the composite are therefore two-level trees
  with bottom node the ternary operation of $X^3$, and second-level nodes
  the operations of $1+Y$:
  \begin{center}
      \begin{tabular}{ rc l }
  nullary: && 
\begin{texdraw}
  \setunitscale{1.2}
  \bsegment
  \move (0 0) \lvec (0 6) \fcir f:0 r:1.5 
  \lvec (-3 9) \move (-4 10) \lcir r:1.5
  \move (0 6) \lvec (0 10) \move (0 11.5) \lcir r:1.5
    \move (0 6) \lvec (3 9) \move (4 10) \lcir r:1.5
    \esegment
\end{texdraw}
\\
  unary: &&
\begin{texdraw}
  \setunitscale{1.2}
  \move (0 0)
\bsegment
  \move (0 0) \lvec (0 6) \fcir f:0 r:1.5 
  \lvec (-3 9) \move (-4 10) \lcir r:1.5
  \move (0 6) \lvec (0 10) \move (0 11.5) \lcir r:1.5
    \move (0 6) \lvec (7 15) 
    \esegment
    \move (30 0)
\bsegment
    \move (0 0) \lvec (0 6) \fcir f:0 r:1.5 
  \lvec (-3 9) \move (-4 10) \lcir r:1.5
  \move (0 6) \lvec (0 16) 
    \move (0 6) \lvec (3 9) \move (4 10) \lcir r:1.5
    \esegment
    \move (60 0)
\bsegment
    \move (0 0) \lvec (0 6) \fcir f:0 r:1.5 
  \lvec (-7 15) 
  \move (0 6) \lvec (0 10) \move (0 11.5) \lcir r:1.5
    \move (0 6) \lvec (3 9) \move (4 10) \lcir r:1.5
    \esegment
\end{texdraw} \\
  binary: &&
\begin{texdraw}
  \setunitscale{1.2}
  \move (0 0)
\bsegment
  \move (0 0) \lvec (0 6) \fcir f:0 r:1.5 
  \lvec (-3 9) \move (-4 10) \lcir r:1.5
  \move (0 6) \lvec (0 16) 
    \move (0 6) \lvec (7 15) 
    \esegment
    \move (30 0)
\bsegment
    \move (0 0) \lvec (0 6) \fcir f:0 r:1.5 
  \lvec (-7 15) 
  \move (0 6) \lvec (0 10) \move (0 11.5) \lcir r:1.5
    \move (0 6) \lvec (7 15) 
    \esegment
    \move (60 0)
\bsegment
    \move (0 0) \lvec (0 6) \fcir f:0 r:1.5 
  \lvec (3 9) \move (4 10) \lcir r:1.5
  \move (0 6) \lvec (0 16) 
    \move (0 6) \lvec (-7 15) 
    \esegment
\end{texdraw} \\
   ternary: &&
\begin{texdraw}
  \setunitscale{1.2}
  \move (0 0)
\bsegment
    \move (0 0) \lvec (0 6) \fcir f:0 r:1.5 
  \lvec (-7 15)
  \move (0 6) \lvec (0 16) 
    \move (0 6) \lvec (7 15) 
    \esegment
\end{texdraw} \\
  \end{tabular}
  \end{center}
(corresponding to the $8$ terms in $(1+Y)^3 = 1 + 3Y+3Y^2+Y^3$).  
  The $(1+Y)^3$-trees are therefore all trees that can be built out of these 
  $8$ building blocks (remembering that the number of nodes is the number of such 
  operations, i.e.~the number of ternary nodes), 
  and the dead ones are those terminated everywhere with \begin{texdraw}
  \move (0 0) \lvec (0 6) \fcir f:0 r:1.5 
  \lvec (-3 9) \move (-4 10) \lcir r:1.5
  \move (0 6) \lvec (0 10) \move (0 11.5) \lcir r:1.5
    \move (0 6) \lvec (3 9) \move (4 10) \lcir r:1.5
\end{texdraw} .
  Under the correspondence of Proposition~\ref{prop:Q=1+P}, these correspond
  precisely to planar ternary trees, except that the trivial tree is left out,
  since we don't have \ \begin{texdraw} \move (0 0) \lvec (0 6) \move (0 7.5)
  \lcir r:1.5 \end{texdraw} .  Again, upon sorting by node degree, we see that
  the coefficients appearing in
  \ref{ex:Foissy}(3) count the number of planar ternary trees with a given core,
  in analogy with the general interpretation of the coefficients appearing in
  the Bergbauer--Kreimer equations (\ref{BK-DSE}--\ref{ex:infinite-symmetric-DSE}).
\end{blanko}

%%%%%%%%%%%%%%%%%%%%%%%%%%%%%%%%%%%%%%%%%%%%%%%%%%
\section{Trees versus graphs}
%%%%%%%%%%%%%%%%%%%%%%%%%%%%%%%%%%%%%%%%%%%%%%%%%%
\label{sec:trees-graphs}

This section is mostly a summary of the forthcoming paper 
\cite{Kock:graphs-and-trees}, which analyses the relationship
between Feynman graphs and $P$-trees.
It is not a logical part of the present paper, but it is important to
motivate the use of groupoids and the role of $P$-trees.

\bigskip

(All the pictures in this section are implicitly from massless $\phi^3$ theory
in six space-time dimensions.)

\begin{blanko}{Trees in BPHZ renormalisation.}\label{BPHZ}
  The main use of trees in BPHZ renormalisation is to express nestings of 
  Feynman graphs.  The discovery of Kreimer~\cite{Kreimer:9707029}
  was that the combinatorics of the BPHZ procedure is elegantly
  encoded in the Hopf algebra of rooted trees.
  More information, related to the 
  specifics of a particular theory, is encoded in the Hopf algebra of 
  graphs~\cite{Connes-Kreimer:9912092}.

  In order to understand the relationship between the two Hopf algebras
  well enough to transfer constructions and results such as the ones of the 
  present paper, some modifications seem necessary both on the graphs and the 
  tree side.  On the tree
  side, we pass to operadic trees, as already explained.
  In the following figure,
  \begin{center}
  \begin{texdraw}
  \bsegment
	\linewd 1
      \move (-3 0) \lvec (20 0) \Onedot \lvec (68 40)
      \move (56 30) \Onedot \lvec (56 -30) \Onedot
      \move (32 10) \Onedot \lvec (32 -10) \Onedot
      \move (20 0) \lvec (68 -40)

      \move (56 9) \Onedot
      \move (56 -9) \Onedot
      \move (56 0) \larc r:9 sd:-90 ed:90

%       \red
      \lpatt (1 3)
      \move (28 -1) \freeEllipsis{13}{20}{0}
      \move (58 0) \freeEllipsis{11}{16}{0}
      \move (43 0) \freeEllipsis{35}{41}{0}

  \esegment

  \move(150 -20)
  \bsegment
  	\linewd 1
  \move (0 0) \Onedot \lvec (-12 20) \Onedot
  \move (0 0) \lvec (12 20) \Onedot
  \esegment
  
\move (260 -35)

  \bsegment
  
    \move (0 0) 
    \lvec (0 20) \Onedot \lvec (-4 64) 
    \move (0 20) \lvec (6 62)
    \move (0 20) \linewd 1 
    \lvec (25 42) \linewd 0.5
    \Onedot \lvec (20 60) \move (25 42) \lvec (30 60)

    \move (0 20) \linewd 1 \lvec (-25 40) \linewd 0.5
 
    \bsegment 
      \move (0 0) \Onedot \lvec (-16 25) \move (0 0) \lvec (-5 30)
      \move (0 0) \lvec (6 29)
    \esegment

\move (6 17) \trekant 
\setunitscale{0.8} \rmove (12 0) \smalldot \setunitscale{1}
\move (-44 33) \trekant
\move (33 38) \tokant

\htext (-4 3){{\footnotesize $3$}}
\htext (-16 25){{\footnotesize $3$}}
\htext (-4 70){{\footnotesize $3$}}
\htext (7 67){{\footnotesize $3$}}

\htext (-42 70){{\footnotesize $3$}}
\htext (-30 75){{\footnotesize $3$}}
\htext (-18 74){{\footnotesize $3$}}

\htext (20 65){{\footnotesize $3$}}
\htext (30 65){{\footnotesize $3$}}
\htext (12 38){{\footnotesize $2$}}

\move (52 11)
\bsegment
\htext (0 0){{\footnotesize $2$ :}} \move (10 0) \tovert
\htext (0 -12){{\footnotesize $3$ :}} \move (10 -12) \trevert
\esegment
\esegment

\end{texdraw}
\end{center}
the small combinatorial tree in the middle expresses the nesting of
1PI subgraphs on the left; Kreimer showed that the information encoded by
such trees is sufficient to account for the counter-term corrections
of BPHZ. 

On the other hand, it is clear that such combinatorial trees
do not capture anything related to symmetries of graphs.
For this, fancier trees are needed,
as partially indicated on the right.
First of all, each node in the tree should be decorated by the 1PI graph
it corresponds to in the nesting~\cite{Bergbauer-Kreimer:0506190}, and second,
to allow an operadic interpretation,
the tree should have leaves (input
slots) corresponding to the vertices of the graph. 
Just as vertices of graphs serve as insertion points, the leaves of
a tree serve as input slots for grafting.
The decorated tree should be 
regarded as a
recipe for reconstructing the graph by inserting the decorating graphs into
the vertices of the graphs of parent nodes.
The numbers on the edges
indicate the  type constraint of each substitution: the outer interface of
a graph must match the local interface of the vertex it is substituted into.
But the type constraints on the tree decoration are not enough to reconstruct the
graph, because for example the small graph 
\raisebox{-5pt}{\begin{texdraw}\trekant\end{texdraw}} decorating
the left-hand node could be substituted into various different vertices of the 
graph
\raisebox{-5pt}{\begin{texdraw}
  \trekant  \setunitscale{0.8} \rmove (12 0) \smalldot \setunitscale{1}
\end{texdraw}}.
  The solution found in \cite{Kock:graphs-and-trees} is to consider $P$-trees,
  for $P$ a certain polynomial endofunctor over groupoids, which depends on the
  theory.  For this to work, a few modifications are needed on the graphs side:
\end{blanko}

\begin{blanko}{Adjustments to the Hopf algebra of graphs.}\label{Manchon}
  The first modification required is rather 
  harmless.  Traditionally, the Connes--Kreimer Hopf algebra of graphs
  is spanned by the 1PI graphs that are furthermore superficially divergent.  
  This last condition excludes the one-vertex graphs given by the interaction 
  labels themselves, and for this reason in the formula for comultiplication
  $$
  \Delta(\Gamma) = 1\otimes \Gamma + \Gamma \otimes 1 + \sum_{\emptyset\neq\gamma\subsetneqq\Gamma}
  \gamma \otimes \Gamma/\gamma
  $$
  the primitive part has to be specified separately since taking $\gamma=\Gamma$
  would yield a term with the one-vertex graph $\Gamma/\gamma = \res(\Gamma)$.
  While of course excluding the one-vertex graphs is natural from the viewpoint 
  of physics, from the strictly
  combinatorial viewpoint it appears as an ad hoc feature.  This was perhaps
  first observed by Manchon~\cite{Manchon:0408} who introduced a bigger
  bialgebra, by including the interaction labels (one-vertex graphs) as
  generators (see \cite{Kock:1411.3098} for further discussion).  
  Since obviously these new generators have loop number zero, this
  bigger bialgebra is not connected, and therefore no longer Hopf.  The
  difference is strictly analogous to the difference between reduced and
  non-reduced incidence algebras in Combinatorics (see~\cite{Galvez-Kock-Tonks:1612.09225});
  again, the standard Hopf algebra can be obtained by
  collapsing the degree-$0$ piece.  It should also be noted here that
  in this setting the sum constituting the Green function for a vertex $v$ does
  not start with $1$, but rather with $v$ considered as a graph with residue
  $v$.
  
  The second modification is subtler.  The Hopf algebra of graphs
  expresses contraction of subgraphs, but its dual Lie algebra is the one of insertions of
  graphs.  One is allowed to substitute graphs with two external legs into
  internal lines of the receiving graph.  This means that every internal line
  represents an ordered infinity of virtual insertion points.
  This does not look very good from the viewpoint of operadic trees, as it
  destroys the input slot correspondence between vertices in graphs and leaves
  in trees: the strict operadic viewpoint requires that grafting of trees only
  occurs at pre-existing leaves, and correspondingly in the setting of graphs,
  only insertions at vertices should be allowed.
  
  This can be arranged by declaring for each internal line a number of insertion
  points, for example by decorating it with special $2$-valent vertices
  (in the literature sometimes indicated with a cross).
  This means that among the primitive
  graphs we now have to include more different graphs, such as 
  \begin{center}
  \begin{texdraw}
  \bsegment 
\move (-6 0) \lvec (0 0) \smalldot \lvec (18 18)
\move (0 0) \lvec (18 -18)
\move (12 12) \smalldot  \lvec (12 -12) \smalldot
\move (12 -4) \smalldot \move (12 4) \smalldot
\esegment
\htext(40 0){and}
\move (70 0)
  \bsegment 
\move (-6 0) \lvec (0 0) \smalldot \lvec (18 18)
\move (0 0) \lvec (18 -18)
\move (12 12) \smalldot  \lvec (12 -12) \smalldot
\move (6 -6) \smalldot \move (6 6) \smalldot
\esegment
\end{texdraw}
  \end{center}
and it also means that there is now a new type of vertex
(for each kinetic term in the Lagrangian), which is declared to be the
residue of any graph with two external lines, instead of saying 
that the residue is just the
  line.
  The Green function for the new interaction label $e= 
  \raisebox{2pt}{\begin{texdraw} \tovert 
\end{texdraw}}$ looks like this:
  $$
  G_e = \ \ \raisebox{0pt}{\begin{texdraw} 
\setunitscale{0.8}
\tovert 
\setunitscale{0.8}
\move(21 0) \htext{$+$} 
\move(32 0)\htext{$\frac12$} 
\move (38 0) \rundtokant 
\move(65 0)\htext{$+$} 
\move (72 0) \bsegment \rundtokant \move( 10.5 4) \smalldot \esegment
\move(100 0)\htext{$+$} 
\move(110 0)\htext{$\frac12$} 
\move (118 0) \bsegment \rundtokant \move( 10.5 4) \smalldot\move( 10.5 -4) \smalldot \esegment 
\move(150 0)\htext{$+$}
\move(165 0)\htext{$\ldots$} 
\end{texdraw}}
$$
and similarly, the Green functions for the proper interaction labels
are refined by all possible appearances of the new vertex.
These issues seem to be related to the $Z$-factor 
comparing the bare and normalised Green functions (see for example 
\cite{Itzykson-Zuber:QFT}, \cite{vanSuijlekom:0807}, 
\cite{EbrahimiFard-Patras:1003.1679}).
Let $\xxgraph$ denote the groupoid of connected 1PI graphs.
\end{blanko}

\begin{blanko}{Trees decorated with graphs.}
  Having been specific about which graphs we consider, we can now explain
  how to encode the graph-decorated trees, in order to get the correct
  correspondence between graphs and trees, cf.~\cite{Kock:graphs-and-trees}.
  The decorations are encoded as $P$-trees,
  for $P$ a certain polynomial endofunctor over groupoids, which depends on the
  theory. This formalism yields the correct symmetry factors.
  
  To match the figures above, we consider a theory in which there are
  two interaction labels \raisebox{2pt}{\begin{texdraw} \tovert \end{texdraw}} and
  \raisebox{-1pt}{\begin{texdraw} \trevert \end{texdraw}}\ ; let $I$ denote
  the groupoid of all such one-vertex graphs.  Let $B$ denote the groupoid
  of all connected 1PI graphs of the theory such that the residue belongs
  to $I$.  Finally let $E$ denote the groupoid of such graphs with a marked
  vertex.  The polynomial endofunctor $P$ is now given by the diagram
\begin{equation}
\xymatrix{
    I & \ar[l]_s  E  \ar[r]^p & B  \ar[r]^t & I ,
}
\end{equation}
  where the map $s$ returns the one-vertex subgraph at
  the mark, $p$ forgets the mark, and $t$ returns the residue of the
  graph, i.e.~the graph obtained by contracting everything to a point, but
  keeping the external lines.  A $P$-tree is hence a diagram 
    \begin{equation}\label{qtree}
  \xymatrix @!C=16pt {
  A\ar[d] \ar@{}[dr]|{\Leftarrow}& M\ar[l]  
  \ar@{}[dr]|{\Rightarrow}\drpullback\ar[r] \ar[d]& N \ar[d] \ar[r] \ar@{}[dr]|{\Rightarrow}
  &A\ar[d] \\
  I & \ar[l] E  \ar[r] & B \ar[r]&I \,, \\
}\end{equation}
  with specified invertible $2$-cells, in which the first row is a tree
  in the sense of~\ref{polytree-def}.
  These $2$-cells carry much of the structure: for
  example the $2$-cell on the right says that the 1PI graph decorating a given
  node must have the same residue as the decoration of the outgoing edge of the
  node --- or more precisely, and more realistically: an isomorphism is specified (it's
  a bijection between external lines of one-vertex graphs).  Similarly, the
  left-hand $2$-cell specifies for each node-with-a-marked-incoming-edge $x'\in
  M$, an isomorphism between the one-vertex graph decorating that edge and the
  marked vertex of the graph decorating the marked node $x'$.  Hence the
  structure of a $P$-tree is a complete recipe not only for which graphs should
  be substituted into which vertices, but also {\em how}: specific bijections
  prescribe which external lines should be identified with which lines in the receiving
  graph.  Let $\tree$ denote the groupoid of $P$-trees as in \eqref{qtree}.
\end{blanko}

\begin{blanko}{Graph nesting.}
  A {\em graph nesting} is a Feynman graph (assumed to be connected, 1PI and with residue
  belonging to $I$) with nested circles, such that every circle cuts a 1PI
  Feynman graph of the theory with residue in $I$.  The graph nestings form a
  groupoid $\nest$, in which the maps are graph isomorphisms compatible with the
  configuration of circles.
  
  The following is the main theorem of \cite{Kock:graphs-and-trees},
  which draws from insights from higher category 
  theory~\cite{Kock-Joyal-Batanin-Mascari:0706}.
  \begin{theorem}\label{N=T} {\em (\cite{Kock:graphs-and-trees})}
    There is an equivalence of groupoids between the groupoid $\nest$ of graph
    nestings and the groupoid $\tree$ of $P$-trees.  In particular, the symmetries
    of a given graph nesting can be read off the corresponding decorated tree
    and vice versa.
  \end{theorem}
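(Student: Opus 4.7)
The plan is to exhibit functors in both directions and verify that they form a pseudo-inverse pair. Denote them $\Phi : \nest \to \tree$ and $\Psi : \tree \to \nest$. The content of the equivalence is that the nesting of circles in a graph nesting is precisely a tree, while the arity bookkeeping of a $P$-tree is precisely a recipe for substituting graphs at vertices.

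For $\Phi$, given a graph nesting $(\Gamma, \mathcal{C})$ consisting of a connected 1PI graph $\Gamma$ (with residue in $I$) together with a system $\mathcal{C}$ of nested circles (including an outermost circle cutting out all of $\Gamma$), I would form a tree whose nodes are the circles in $\mathcal{C}$, ordered by containment. The output edge of a node $c\in\mathcal{C}$ points toward the containing circle; the root edge points outside the outermost circle. A node $c$ has one incoming edge for each child in the nesting poset, together with one leaf edge for every bare vertex of $\Gamma$ contained in $c$ but not in any strictly smaller circle. The $B$-decoration of the node $c$ is the 1PI graph obtained by contracting all immediately nested child circles to points; the $I$-decoration of the output edge is the residue of that graph, while the $I$-decoration of an incoming edge is the one-vertex graph at the corresponding vertex. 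The pullback condition on the middle square of the $P$-tree diagram is exactly the tautological bijection between vertices of the contracted graph at $c$ and the children-plus-bare-vertices of $c$; the $2$-cells in the diagram record, once and for all, the bijections between external lines of each subgraph and the half-edges incident to the vertex into which it is substituted.

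For $\Psi$, given a $P$-tree $T$, I proceed by induction on the number of nodes. If $T$ is trivial, it is sent to the trivial nesting consisting of a single point (the interaction label indexed by its edge decoration). Otherwise, let $R$ be the bottom node of $T$ decorated by $b\in B$; by \ref{Peval} the rest of $T$ is an $E_b$-indexed family $(T_e)$ of smaller $P$-trees. Inductively applying $\Psi$ yields graph nestings $\Psi(T_e)$, whose residues match $s(e)$ by construction of $P$-tree. Substitute each $\Psi(T_e)$ into the corresponding vertex of $b$ using the specified $2$-cells as the prescribed bijection of external lines, and add the outermost circle cutting out the whole resulting graph. The child circles of this outermost circle are precisely the top-level circles of the nestings $\Psi(T_e)$. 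That $\Psi\circ\Phi\simeq\id$ and $\Phi\circ\Psi\simeq\id$ is then essentially tautological: both compositions reproduce the original data on the nose on objects, and a natural isomorphism exists because the substitution procedure is inverse to taking contractions.

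The hard part, and the reason groupoids (rather than sets) are essential, is the matching of automorphism groups, which delivers the second sentence of the statement. An automorphism of a graph nesting is a graph automorphism preserving the configuration of circles; by the tree recursion in Proposition before Section \ref{sec:groupoid-DSE}, an automorphism of a $P$-tree with bottom node decorated by $b$ factors as an automorphism $\sigma$ of the corolla $b$, together with compatible automorphisms of the ideal subtrees $D_x$ and $D_{\sigma x}$. On the graph side, an automorphism of the outermost circle's contracted graph permutes its vertices, and this permutation must be covered by specified isomorphisms of the nested subgraphs substituted there; this is exactly the same recursive decomposition. The main technical obstacle is therefore to verify carefully that the $2$-cells in \eqref{qtree} encode precisely the freedom to choose how external lines of a subgraph are glued to half-edges at a vertex, and that this freedom matches bijectively the graph-theoretic freedom in reconstructing $\Gamma$ from its contracted pieces. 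Once this bookkeeping is made precise, both directions are strict on objects and the equivalence follows; the higher-categorical content of \cite{Kock-Joyal-Batanin-Mascari:0706}, on which \cite{Kock:graphs-and-trees} draws, gives the clean formal setting to carry this out.
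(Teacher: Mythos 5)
First, a point of order: the paper does not actually prove Theorem~\ref{N=T}. Section~\ref{sec:trees-graphs} is explicitly ``a summary of the forthcoming paper \cite{Kock:graphs-and-trees}'' and ``not a logical part of the present paper''; the theorem is imported with attribution and no argument is given beyond the surrounding informal discussion. So there is no in-paper proof to measure your attempt against, only the narrative of \ref{BPHZ}--\ref{breaking}. Measured against that narrative, your overall strategy is the expected one: circles-ordered-by-containment become nodes, leaves correspond to vertices of the graph, node decorations are the contracted quotient graphs, and the inverse functor is recursive substitution guided by the $2$-cells of \eqref{qtree}. This is exactly the picture the paper describes informally (``the decorated tree should be regarded as a recipe for reconstructing the graph by inserting the decorating graphs into the vertices of the graphs of parent nodes'').

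That said, what you have written is a plan rather than a proof, and you say so yourself: the sentence beginning ``The main technical obstacle is therefore to verify carefully that the $2$-cells in \eqref{qtree} encode precisely the freedom\dots'' names, but does not close, the step that carries all the content. Everything that makes the theorem nontrivial lives there: that contracting the immediate children of a circle yields a graph of the theory with the right residue; that the specified half-edge bijections in the $2$-cells match exactly the graph-theoretic gluing data, with no residual choices and no missing choices; and that automorphism groups then agree. The last point is delicate precisely in the situations the paper flags --- Example~\ref{breaking}, where nestings break symmetry and isomorphic nestings multiply in the fibre, and the overlapping-divergences example, where two nestings give $P$-trees that are isomorphic as abstract $P$-trees but \emph{not} in the fibre $\tree_v$; your sketch does not engage with either phenomenon, and a correct proof must. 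Two smaller inaccuracies: the claim that both composites ``reproduce the original data on the nose on objects'' is too strong (the reconstructed graph has vertex set built from a disjoint union of pieces, so one only gets a canonical isomorphism, which is all that is needed but should be stated as such); and the functors are defined only on objects, whereas the equivalence must also be checked on morphisms, i.e.\ that circle-preserving graph isomorphisms correspond to $P$-tree isomorphisms including the nontrivial action on the $2$-cell data. These are the points for which the author defers to the machinery of \cite{Kock-Joyal-Batanin-Mascari:0706} in \cite{Kock:graphs-and-trees}.
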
  
  It should be stressed that the use of groupoids as coefficients
  is crucial for getting the decorations that make this 
  correspondence work.  In fact, a tree decorated in groupoids may have
  more symmetries than the underlying tree.  For example, the graph $\Gamma = \ 
  $
  \raisebox{0pt}{\begin{texdraw}\rundtokant\end{texdraw}}
  is 1PI, and as a trivial nesting it corresponds to the tree
  \raisebox{-2pt}{\begin{texdraw}
\bsegment \setunitscale{0.8}
\move (0 -5) \lvec (0 2) \smalldot \lvec (-4 9) \move (0 2)  \lvec (4 9)
\esegment
\end{texdraw}} decorated with $\Gamma$ at the node, $3$ at the leaves, and $2$
at the root.  More formally it is of course a diagram like \eqref{qtree}.
It is straightforward to check that this $P$-tree has a symmetry
group of order $4$, just as the graph $\Gamma$, whereas the underlying tree
clearly has a symmetry group of order $2$.  
\end{blanko}

\begin{blanko}{Graphs with fixed residue.}
  So far we are talking abstract Feynman graphs (forming the groupoid
  $\xxgraph$), whereas in quantum field theory, the symmetries are required to
  fix the external lines.  Categorically, this means that we are talking about
  the groupoid $\xxgraph_v$ defined as the (homotopy) fibre over some residue
  $v$, in the running example $v = $ \raisebox{2pt}{\begin{texdraw} \tovert
  \end{texdraw}}~.  Inside this fibre, the symmetry group of the graph $\Gamma=$
  \raisebox{0pt}{\begin{texdraw}\rundtokant\end{texdraw}} is of order $2$, the
  non-trivial symmetry being the one that fixes the external lines and
  interchanges the two internal lines.  Since $v$ is an object in the groupoid
  $I$, the tree corresponding to $\Gamma$ belongs to the fibre $\tree_v$ of
  trees with root colour $v$.  One can check directly that in this groupoid, the
  tree has only one non-trivial automorphism, which in fact is trivial on the
  underlying tree!  Indeed, if we were to interchange the two leaves of the
  tree, then by the compatibilities expressed by the decoration, we would be
  interchanging the two vertices of $\Gamma$, and this in turn would interchange
  the two external lines of $v = $ \raisebox{2pt}{\begin{texdraw} \tovert
  \end{texdraw}}, the residue of $\Gamma$, but since we are inside the fibre
  $\tree_v$ this automorphism is not allowed.
\end{blanko}

\begin{blanko}{Nestings versus graphs.}
  By Theorem~\ref{N=T} we have an equivalence of groupoids $\nest \simeq \tree$.
  There is an obvious
  projection functor $\nest\to \xxgraph$ which simply forgets the circles
  expressing the nesting on a graph.  This functor is a finite discrete
  fibration --- this is just to say that for a given graph there is a finite set
  of possible nestings to put on it.  Pullback along this projection defines
  a functor
  $$
  \Grpd_{/\xxgraph} \longrightarrow \Grpd_{/\nest} \ \isopil\ \Grpd_{/\tree},
  $$
  which
  associates
  to each graph the set of possible nestings on it, and then the associated 
  $P$-tree.  Note that no coefficients appear in this sum, but as illustrated
  in \ref{breaking} below, there may be repetitions.
  This functor induces
  an algebra homomorphism
  $$
  \Psi: \Q[[\Gamma\in\pi_0\xxgraph]]\longrightarrow
\Q[[T\in\pi_0\tree]] .
$$

\begin{theorem}{\em (\cite{Kock:graphs-and-trees})}
    The map $\Psi:\Q[[\Gamma\in\pi_0\xxgraph]]\longrightarrow
\Q[[T\in\pi_0\tree]]$ is a bialgebra homomorphism.  Here
  the bialgebra structure on $\Q[[\Gamma\in\pi_0\xxgraph]]$ is the one of 
  \ref{Manchon};
    the bialgebra structure on $\Q[[T\in\pi_0\tree]]$ is the one of $P$-trees.
  \end{theorem}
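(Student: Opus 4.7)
The plan is to work at the objective level of groupoid slices, where both bialgebras are described by spans of groupoids, and to deduce the homomorphism property by exhibiting a compatible map of spans. The algebra homomorphism property is automatic from monoidality of pullback (a nesting on a disjoint union of graphs is a disjoint union of nestings on the components), so the content is compatibility with the two comultiplications.

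First I would rewrite both comultiplications as linear functors coming from spans. Writing $\xxgraph^\sqcup$ for the groupoid of finite disjoint unions of objects of $\xxgraph$, the Manchon comultiplication is objectified by the span
$$
\xxgraph^\sqcup \xleftarrow{\;\text{forget}\;} S \xrightarrow{\;(\gamma,\,\Gamma/\gamma)\;} \xxgraph^\sqcup\times\xxgraph^\sqcup,
$$
where $S$ is the groupoid of multi-graphs equipped with a distinguished 1PI submulti-graph. The tree comultiplication is the span $\forest\xleftarrow{d_1}\forest_2\xrightarrow{(d_2,d_0)}\forest\times\forest$ of \ref{treebialg-general}. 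The map $\Psi$ itself is the homotopy cardinality of the linear functor given by the span
$$
\xxgraph^\sqcup \xleftarrow{\;\pi\;} \forest \xrightarrow{=} \forest,
$$
obtained by extending the forgetful projection $\pi:\nest\to\xxgraph$ to forests and identifying $\nest\simeq\tree$ via Theorem~\ref{N=T}.

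The heart of the argument is then to exhibit an equivalence of groupoids
$$
\forest_2 \;\simeq\; S\,\times_{\xxgraph^\sqcup\times\xxgraph^\sqcup}\,(\forest\times\forest)
$$
compatible with the three span legs on both sides. Under Theorem~\ref{N=T}, a $P$-forest is a graph nesting, with nodes of the tree matching bijectively the circles of the nesting, and with the partial order on nodes (toward the root) matching the containment order of circles (outward). A cut $c:S_0\subseteq T$ is a downward-closed subset of nodes, equivalently an outer family of circles closed under taking ambient circles. This data determines a 1PI submulti-graph $\gamma\subseteq\Gamma$, namely the union of the regions enclosed by the innermost retained circles, equipped with its inherited nesting from the discarded inner circles; and a residual nesting on the quotient $\Gamma/\gamma$ given by the retained outer circles. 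The three legs then match by inspection: forgetting the cut matches forgetting $\gamma$ along with both nestings; the crown map $d_0$ matches returning $\gamma$ with its nesting; and the stump map $d_2$ matches returning $\Gamma/\gamma$ with its residual nesting.

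The main obstacle will be the bookkeeping of automorphisms. An automorphism of a $P$-forest-with-cut is an automorphism of the forest preserving the cut subtree; on the graph side this must correspond to an automorphism of $\Gamma$ preserving $\gamma$ setwise together with compatible automorphisms of the induced nestings on $\gamma$ and on $\Gamma/\gamma$ separately. I would verify this by applying Theorem~\ref{N=T} separately to $\gamma$ and to $\Gamma/\gamma$ and then gluing the two resulting $P$-trees along the matching external lines of $\gamma$, which on the tree side is precisely the grafting of the crown and the stump along the leaves of the cut; the fact that $\pi$ is a finite discrete fibration ensures that no extra $2$-cells appear in the relevant homotopy pullbacks. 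Taking homotopy cardinality of the resulting equivalence of spans then delivers $\Delta_{\tree}\circ\Psi = (\Psi\otimes\Psi)\circ\Delta_{\xxgraph}$.
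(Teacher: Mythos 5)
First, a point of reference: the paper does not prove this theorem. It is quoted from the forthcoming reference \cite{Kock:graphs-and-trees}, and Section~\ref{sec:trees-graphs} is explicitly declared not to be a logical part of the paper. So there is no in-paper proof to compare your argument against; the following assesses it on its own terms.

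Your reduction is the right one and is exactly in the spirit of the paper's objective-level methodology: since $\Psi$ and both comultiplications are cardinalities of linear functors given by spans, and composition of spans is homotopy pullback, the coalgebra-homomorphism property is equivalent to a single equivalence $\forest_2\simeq S\times_{\xxgraph^\sqcup\times\xxgraph^\sqcup}(\forest\times\forest)$ compatible with the legs; the algebra part is indeed free (the paper already asserts $\Psi$ is an algebra homomorphism). Two points need more care than you give them. First, the crown of a cut corresponds to the disjoint union of the sub-nestings inside the \emph{maximal discarded} circles \emph{together with} one trivial tree for each vertex of $\Gamma$ whose innermost enclosing circle is retained; your phrase ``the union of the regions enclosed by the innermost retained circles'' blurs this. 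It is not cosmetic: the identity $\Delta_{\tree}\circ\Psi=(\Psi\otimes\Psi)\circ\Delta_{\xxgraph}$ can only hold if the left-hand factor $\gamma$ in the coproduct of \ref{Manchon} is taken in the ``spanning'' convention --- a disjoint union of 1PI subgraphs plus one-vertex graphs covering all remaining vertices, the graph analogue of cutting an edge rather than deleting it --- and the paper only gestures at this, so your groupoid $S$ must be defined accordingly and this convention made explicit. Second, the automorphism matching is the real content of the claimed equivalence and is only sketched; the grafting argument you propose is plausible, but it must be carried out relative to the residues in $I$ (compare the discussion of $\tree_v$ versus $\tree$ in the overlapping-divergences example), since an automorphism permuting components of $\gamma$ is admissible only over a compatible automorphism of $\Gamma/\gamma$. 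With those two points supplied, your outline is a credible proof.
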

\end{blanko}

\begin{prop}
  The bialgebra homomorphism $\Psi:\Q[[\Gamma\in\pi_0\xxgraph]]\longrightarrow
\Q[[T\in\pi_0\tree]]$ sends Green functions to Green functions.
\end{prop}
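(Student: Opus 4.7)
The plan is to lift the statement to the objective level of slices of $\Grpd$ and conclude by taking homotopy cardinality, where both sides become a formal compatibility between pullbacks and fully faithful inclusions. Recall that the tree-side Green function is, by the definition in Section~\ref{sec:groupoid-Green}, the homotopy cardinality of the inclusion $\tree \hookrightarrow \forest_1$, and in the many-variable case splits as one $G_v^{\tree}$ per residue $v\in I$, represented objectively by the inclusion $\tree_v \hookrightarrow \forest_1$. First I identify the graph-side Green function $G_v^{\xxgraph}$ analogously, as the homotopy cardinality of the inclusion $\xxgraph_v \hookrightarrow \xxgraph$ into the ambient bialgebra; this matches the explicit formulas of \ref{Manchon} because the homotopy cardinality of such an inclusion is $\sum_{[\Gamma] \in \pi_0 \xxgraph_v} \Gamma/|\Aut(\Gamma)|$, with symmetry factors coming for free from the vertex groups of $\xxgraph_v$.

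Second, I lift $\Psi$ to the objective level. By the construction recalled in the excerpt, $\Psi$ is induced by pullback along the projection $\pi \colon \nest \to \xxgraph$ composed with transport along the equivalence $\nest \simeq \tree$ of Theorem~\ref{N=T}. I verify that this lift realises the coefficient-free description ``$\Gamma \mapsto \sum_N T_N$'': since $\pi$ is a finite discrete fibration, the homotopy fibre $\pi^{-1}(\Gamma)$ is a set acted on by $\Aut(\Gamma)$, and orbit--stabiliser, together with the computation $|\Aut_{\nest}(N)| = |\Aut(T_N)|$ coming from Theorem~\ref{N=T}, converts the intrinsic weight $1/|\Aut(\Gamma)|$ on the graph side into the expected $1/|\Aut(T)|$ weights on the tree side.

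Third, I apply pullback to the objective-level graph Green function. Pullback preserves fully faithful inclusions, so $\pi^{\ast}(\xxgraph_v \hookrightarrow \xxgraph)$ is the inclusion $\nest_v \hookrightarrow \nest$, where $\nest_v := \pi^{-1}(\xxgraph_v)$ is the groupoid of nestings whose underlying graph has residue $v$. The equivalence $\nest \simeq \tree$ of Theorem~\ref{N=T} is compatible with residues, since in the $P$-tree diagram \eqref{qtree} the root-edge decoration is exactly the residue of the decoration of the bottom node; it therefore restricts to an equivalence $\nest_v \simeq \tree_v$. Hence $\Psi$ sends the objective object $(\xxgraph_v \hookrightarrow \xxgraph)$ to $(\tree_v \hookrightarrow \tree \hookrightarrow \forest_1)$, which is $G_v^{\tree}$ at the objective level, and taking homotopy cardinality yields the proposition.

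The main obstacle will be the bookkeeping in step two: precisely matching the coefficient-free description of $\Psi$ in the excerpt with the pullback-along-$\pi$ construction at the objective level, and checking that the ambient ``forest-of-graphs'' bialgebra on the graph side is organised so that $\Psi$ extends to the completion compatibly with products. Once the setup is in place, the argument is formal, relying only on the fact that pullbacks preserve fully faithful inclusions, the already-proved bialgebra statement, and the residue-preserving equivalence $\nest \simeq \tree$ of Theorem~\ref{N=T}.
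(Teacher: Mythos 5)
Your proposal is correct and follows essentially the same route as the paper, whose entire proof is the one-line observation that the inverse image of $\xxgraph_v$ under the projection $\nest\to\xxgraph$ is the whole groupoid $\nest_v$, which corresponds to $\tree_v$ under the equivalence of Theorem~\ref{N=T}. Your additional bookkeeping in step two (orbit--stabiliser converting $1/\norm{\Aut(\Gamma)}$ into the $1/\norm{\Aut(T)}$ weights) is exactly what the paper illustrates concretely in Example~\ref{breaking}, so it is a welcome elaboration rather than a departure.
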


This is basically because the inverse image of the groupoid $\xxgraph_v$
is the whole groupoid $\nest_v$.  Since $\Psi$ is injective we conclude:
\begin{cor}
  The Fa\`a di Bruno formula holds for the Green function in the bialgebra of graphs.
\end{cor}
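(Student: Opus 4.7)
The plan is to deduce the Fa\`a di Bruno formula for $G_{\text{gr}}$ by pulling back the corresponding formula for the tree Green function (Theorem~\ref{thm:GKT}) along the injective bialgebra homomorphism $\Psi$. Write $G_{\text{gr}} = \sum_n g_n^{\text{gr}}$ where $g_n^{\text{gr}}$ denotes the homogeneous component of the graph Green function for the grading corresponding, via the equivalence $\nest \simeq \tree$ of Theorem~\ref{N=T}, to the operadic grading (leaves minus roots) on the tree side.

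By the preceding proposition $\Psi(G_{\text{gr}}) = G$. Since $\Psi$ is induced by pullback along $\nest \to \xxgraph$, it respects the decomposition into homogeneous pieces: every nesting of a given graph $\Gamma$ produces a $P$-tree with the same underlying graph, hence the same number of leaves (and the same residue $v$ at the root), so $\Psi(g_n^{\text{gr}}) = g_n$ for each $n$. Applying $\Psi\tensor\Psi$ to the candidate identity $\Delta(G_{\text{gr}}) = \sum_{n \geq 0} G_{\text{gr}}^n \tensor g_n^{\text{gr}}$, the left-hand side becomes $\Delta(\Psi(G_{\text{gr}})) = \Delta(G)$ because $\Psi$ is a bialgebra homomorphism, and the right-hand side becomes $\sum_{n \geq 0} G^n \tensor g_n$ by the above. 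These coincide by Theorem~\ref{thm:GKT}. Since $\Psi$, and hence $\Psi \tensor \Psi$, is injective, the candidate identity holds in the graph bialgebra.

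The main obstacle is the verification that $\Psi$ preserves the relevant grading, so that the $\Psi$-image of the graph-side decomposition $G_{\text{gr}} = \sum_n g_n^{\text{gr}}$ really matches the tree-side decomposition $G = \sum_n g_n$ on the nose. This reduces to observing that under the groupoid equivalence $\nest \simeq \tree$ of Theorem~\ref{N=T}, the number of leaves of the $P$-tree associated to a nesting depends only on the underlying graph (indeed, it equals the number of vertices of the innermost decorations, summed appropriately), and is unaffected by the choice of nesting placed on that graph. Once this is granted, nothing further is needed beyond the fact that $\Psi$ is an injective bialgebra homomorphism sending $G_{\text{gr}}$ to $G$, and the corollary falls out mechanically from Theorem~\ref{thm:GKT}.
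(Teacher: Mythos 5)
Your proof is correct and follows essentially the same route as the paper, which likewise deduces the corollary from the fact that $\Psi$ is an injective bialgebra homomorphism carrying Green functions to Green functions, combined with Theorem~\ref{thm:GKT} on the tree side. The paper leaves the grading compatibility implicit, whereas you spell out that the number of leaves of the $P$-tree attached to a nesting depends only on the underlying graph; this is a worthwhile detail but not a different argument.
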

(This result has recently been established directly in the bialgebra of
graphs~\cite{Kock-Weber:1609.03276}, without reference to trees, as part of a 
more
general theory of Fa\`a di Bruno formalism.)

The following two examples illustrate the significance of this result.

\begin{blanko}{Example: nestings breaking symmetry.}\label{breaking}
First we consider the graph $\Gamma$ (with residue $v = $ \raisebox{2pt}{\begin{texdraw} \tovert 
\end{texdraw}}):

  \begin{center}
  \begin{texdraw}
    \move (0 0) \hhgraph
\end{texdraw}
\end{center}
The fibre over $\Gamma$ has four elements, named $N_1,N_2,N_3,N_4$:

  \begin{center}
  \begin{texdraw}
    \move (0 0) \bsegment \hhgraph \bigredoval \esegment
    \move (80 0) \bsegment \hhgraph \bigredoval \move (0 13) \redoval \esegment 
    \move (160 0) \bsegment \hhgraph \bigredoval \move (0 -13) \redoval\esegment
    \move (240 0) \bsegment \hhgraph \bigredoval \move (0 13) 
    \redoval\move (0 -13) \redoval\esegment
    
\end{texdraw}
\end{center}
The graph $\Gamma$ has an automorphism group of order $8$ (in $\xxgraph_v$), and thus appears 
in the Green function with a factor $\frac18$.
Hence $\Psi(\frac18\Gamma) = \frac18 N_1 + \frac18 N_2 + \frac18 N_3 + \frac18 
N_4$.  Now $N_2$ and $N_3$ are isomorphic in $\nest_v$, so we can also write the
sum as $\frac18 N_1 + \frac14 N_2 + \frac18 
N_4$, and these factors, $\frac18, \frac14 , \frac18$ are precisely the inverses
of the orders of the symmetry groups of the three objects in $\nest_v$.
What the example shows is the fact that symmetries of a graph can be broken by 
imposing nestings, but the decrease in symmetry is precisely counter-balanced
by the fact that a certain number of isomorphic nestings appear in the fibre.
\end{blanko}

\begin{blanko}{Example: overlapping divergences.}
  The second example concerns a graph with overlapping divergences.
Consider the graph $\Omega$ (with residue $v = $ \raisebox{2pt}{\begin{texdraw} \tovert 
\end{texdraw}}):
  \begin{center}
  \begin{texdraw}
    \move (0 0) \vgraph
\end{texdraw}
\end{center}
The fibre over $\Omega$ has three elements, denoted $N_1,N_2,N_3$:

\begin{center}
  \begin{texdraw}
    \move (0 0) \bsegment \vgraph \bigredoval \esegment
    \move (80 0) \bsegment \vgraph \bigredoval \move (4.5 0) \redvoval \esegment 
    \move (160 0) \bsegment \vgraph \bigredoval \move (-4.5 0) \redvoval\esegment
  \end{texdraw}
\end{center}
In this case, $\Omega$ as well as the nestings $N_1,N_2,N_3$ all have an 
automorphism group of order $2$ (over $v$).  The interesting remark in this 
case is that the trees corresponding to the nestings $N_2$ and $N_3$ are {\em not}
isomorphic inside the fibre $\tree_v$ (although they {\em are} isomorphic as abstract
$P$-trees).  The reason for this is the observation already made earlier that
the drawings of these trees, even with all the decorating graphs, is not
the full picture.  Interchanging the two branches is only possible over
the non-trivial automorphism of $v = $ \raisebox{2pt}{\begin{texdraw} \tovert 
\end{texdraw}}.  (In fact it is clear in the drawings of nestings that
the two nestings are not isomorphic for fixed residue.)
\end{blanko}

\small

% \bibliographystyle{scplain}
% \bibliography{joachims}

\hyphenation{mathe-matisk}

\label{lastpage}
\end{document}